\newmdtheoremenv[roundcorner=8pt,backgroundcolor=gray!32,nobreak=true,linecolor=blue]{myremark}[thm]{Remark}%\begin{rem}\label{#1}}{\end{rem}\end{remarkaux}}%\ignorespacesafterend}
\title{Complete Additivity and \\ Modal Incompleteness}
\author{Wesley H. Holliday\footnote{University of California, Berkeley} \and Tadeusz Litak\footnote{Friedrich-Alexander-Universit\"{a}t Erlangen-N\"{u}rnberg}}
\date{{\normalsize Version of June 15, 2019 \\ Forthcoming in \textit{The Review of Symbolic Logic} {\small (submitted September 29, 2016)}}}                                           
\theoremstyle{definition}
\newtheorem{theo}{Theorem}[section]
\newtheorem{definition}[theo]{Definition}
\newtheorem{lemma}[theo]{Lemma}
\newtheorem{corollary}[theo]{Corollary}
\newtheorem{proposition}[theo]{Proposition}
\newtheorem{remark}[theo]{Remark}
\newcommand{\clofr}[1]{\mathcal{#1}}  %class of frames
\newcommand{\clfr}[1]{\ensuremath{\mathcal{#1}}}  %class of frames, for text mode
\newcommand{\nml}[1]{\mathsf{#1}} % notation for normal modal logic
\newcommand{\deq}{:=}
\newcommand{\vblog}{\mathsf{vB}} % for van Benthem's logic
\newcommand{\cfV}{\ensuremath{\clofr{V}}}
\newcommand{\cfA}{\ensuremath{\clofr{A}}}
\newcommand{\cfC}{\ensuremath{\clofr{C}}}
\newcommand{\cfT}{\ensuremath{\clofr{T}}}
\newcommand{\ubox}{\mathsf{A}}
\newcommand{\udiam}{\mathsf{E}}
\newcommand{\bao}{\ensuremath{\textsc{bao}}}
\newcommand{\ma}{\ensuremath{\textsc{ma}}}
\newcommand{\baos}{\ensuremath{\textsc{bao}}s}
\newcommand{\mas}{\ensuremath{\textsc{ma}}s}
\newcommand{\Ag}{\mathfrak{A}}
\newcommand{\Bg}{\mathfrak{B}}
\newcommand{\Cg}{\mathfrak{C}}
\newcommand{\goth}[1]{\mathfrak{#1}}
\newcommand{\Ff}{\ffr}
\newcommand{\Gf}{\gfr}
\newcommand{\llimp}{\,\Rightarrow\,}
\newcommand{\lland}{\,\&\,}
\newcommand{\llor}{\,\textsc{OR}\,}
\newcommand{\llnot}{\,\textsc{NOT}\,}
\newcommand{\refeq}[1]{(\ref{#1})}
\newcommand{\blid}[1]{\preceq#1} %botomless (principal) ideal
\newcommand{\resoper}{\mathbf{p}}
\newcommand{\resdual}{\mathbf{h}}
\newcommand{\nmbx}{\Box}
\newcommand{\nmdm}{\Diamond}
\newcommand{\aubx}{[e]}
\newcommand{\audm}{\langle e\rangle}
\newcommand{\numbx}[1]{[#1]}%\boxed{#1}\,}
\newcommand{\numdm}[1]{\langle#1\rangle}
\newcommand{\zebx}{\numbx{0}}
\newcommand{\zedm}{\numdm{0}}
\newcommand{\onbx}{\numbx{1}}
\newcommand{\ondm}{\numdm{1}}
\newcommand{\admis}[1]{\mathbb{#1}}
\newcommand{\vblogic}{\vblog}
\newcommand{\doubled}[1]{{#1}^\bullet}
\newcommand{\equals}{:=}
\newcommand{\dualrel}{\Box_{\alpha_1}}
\newcommand{\operrel}{\Diamond_{\alpha_1}}
\newcommand{\oper}{\Diamond}
\newcommand{\dual}{\Box}
\newcommand{\univbox}{\boxdot}
\newcommand{\univdmd}{\Diamond\hspace{-.106in}\cdot{}}
\newcommand{\imply}{\to}
\newcommand{\satisf}{\vDash}
\newcommand{\csfnt}[1]{(\textsf{#1})}
\newcommand{\mings}{\csfnt{mings}}
\newcommand{\mincc}{\csfnt{mincc}}
\newcommand{\single}{\csfnt{single}}
\newcommand{\singirr}{\ensuremath{\textsf{single}^\bullet}}
\newcommand{\singref}{\ensuremath{\textsf{single}^\circ}}
\newcommand{\fcons}{\vDash_F}
\newcommand{\frafo}[1]{\ensuremath{\mathscr{#1}}}
\newcommand{\vbfr}{\frafo{V\!B}}
\newcommand{\vbefr}{\frafo{V\!BE}}
\newcommand{\ffr}{\frafo{F}}
\newcommand{\gfr}{\frafo{G}}
\newcommand{\takeout}[1]{}
\newcommand{\vbenthem}{\vbfr}%\mathcal{VB}}
\begin{document}
\maketitle
\begin{abstract} In this paper, we tell a story about incompleteness in modal logic. The story weaves together a paper of van Benthem \citeyearpar{Benthem1979}, ``Syntactic aspects of modal incompleteness theorems,'' and a longstanding open question: whether every normal modal logic can be characterized by a class of \textit{\mbox{completely} additive} modal algebras, or as we call them, \cfV-\baos. Using a first-order reformulation of the property of complete additivity, we prove that the modal logic that starred in van Benthem's paper resolves the open question in the negative. In addition, for the case of bimodal logic, we show that there is a naturally occurring logic that is incomplete with respect to \cfV-\baos, namely the provability logic $\nml{GLB}$ \citep{Japaridze1988,Boolos1993}. We also show that even logics that are unsound with respect to such algebras do not have to be more complex than the classical propositional calculus. On the other hand, we observe that it is undecidable whether a syntactically defined logic is $\mathcal{V}$-complete.  After these results, we generalize the Blok Dichotomy \citep{Blok1978} to degrees of \cfV-incompleteness. In the end, we return to van Benthem's theme of syntactic aspects of modal incompleteness.\end{abstract} 
 
\tableofcontents
 
\section{Introduction}\label{sec:intro}
 
The discovery of \textit{Kripke incompleteness}, the existence of normal modal logics that are not sound and complete with respect to any class of Kripke frames, has been called one of the two forces that gave rise to the ``modern era'' of modal logic \citep[p.~44]{Blackburn2001}.  In the Lemmon Notes of 1966, it was conjectured that all normal modal logics are Kripke complete \citep[p.~74]{Lemmon1977}.\footnote{In fact, Wolter and  Zakharyaschev \citeyearpar[p.~428]{WolterZ06} claim that much of early research in the area was motivated by the still more optimistic \emph{Big Programme} or \emph{globalist's dream}. See \S\ \ref{subsec:RecV} and especially Footnote \ref{ft:globalist} for a further discussion.}  But this was not to be. Kripke incompleteness was first demonstrated with a bimodal logic \citep{Thomason1972}, shortly thereafter with complicated unimodal logics \citep{Fine1974b,Thomason1974}, and later with simple unimodal logics \citep{Benthem1978,Benthem1979,Boolos1985}. The significance of these discoveries can be viewed from several angles. 

From one angle, they show that Kripke frames are too blunt an instrument to characterize normal modal logics in general. More fine-grained semantic structures are needed. From another angle, they show that the notion of derivability in a normal modal logic with a set $\Sigma$ of axioms is too weak to capture the notion of Kripke frame consequence, in the sense where $\Sigma\vDash\varphi$ iff every Kripke frame that validates every $\sigma\in\Sigma$ also validates $\varphi$. A deep result of Thomason \citeyearpar{Thomason1975c} showed such weakness to be inevitable: the standard consequence relation for monadic second-order formulas with a single binary relation is reducible to the Kripke frame consequence relation for modal formulas.\footnote{To be more specific, Thomason \citeyearpar{Thomason1975c} showed that such second-order formulas can be translated into unimodal formulas in such a way that a second-order formula $\varphi$ is a consequence of a set $\Sigma$ of second-order formulas over standard second-order structures iff the translation $t(\varphi)$ of $\varphi$ is valid in every Kripke frame that validates $\{t(\sigma)\mid \sigma\in\Sigma\}\cup\{\delta\}$ for a fixed modal formula $\delta$.} Since the former consequence relation is not recursively axiomatizable, neither is the latter. 

Both of these angles on Kripke incompleteness lead to closely related questions and lines of investigation, with slightly differing focus. As we shall see, both of these lines of investigation will come together nicely in this paper. 

\subsection{The Semantic Angle}\label{sec:angle1}

The first angle on Kripke incompleteness---the realization that Kripke frames are not fine-grained enough for the study of normal modal logics in general---renewed interest in the \textit{algebraic} semantics for normal modal logics based on \textit{Boolean algebras with operators} (\baos{}) \citep{Jonsson1952a,Jonsson1952b}.  A \bao{} is a Boolean algebra together with one or more unary\footnote{J\'{o}nsson and Tarski considered operators of higher arity, but we will consider only unary operators.} \emph{operators}, i.e., unary operations $\Diamond$ such that for all elements $x,y$ of the algebra, $\Diamond (x\vee y)=\Diamond x\vee\Diamond y$, and for the bottom element $\bot$ of the algebra, $\Diamond\bot=\bot$. Every normal modal logic is sound and complete with respect to a \bao{}, namely, the Lindenbaum-Tarski algebra of the logic, according to a straightforward definition of when a modal formula is valid over a \bao{}. Kripke incompleteness can be better understood in light of the fact that Kripke frames correspond to \baos{} that are \textit{complete} ($\mathcal{C}$), \textit{atomic} ($\mathcal{A}$), and \textit{completely additive} ($\mathcal{V}$), or $\mathcal{CAV}$-\baos{} for short (see \S~\ref{sec:background}). A \bao{} is complete or atomic, respectively, according to whether its Boolean reduct is complete or atomic in the standard sense, while it is completely additive iff the following holds:
\begin{itemize}
\item[($\mathcal{V}$)] for any set $X$ of elements, if $\bigvee X$ exists, then $\bigvee\{\Diamond x\mid x\in X\}$ exists and \[\Diamond\bigvee X=\bigvee\{\Diamond x\mid x\in X\},\]
\end{itemize}
which in the case of a complete \bao{} reduces to the distributivity of $\Diamond$ over arbitrary joins. (The letter `$\mathcal{V}$' is intended to suggest the big join $\bigvee$.) 

Given the correspondence between Kripke frames and $\mathcal{CAV}$-\baos{}, the fact that a normal modal logic is not the logic of any class of Kripke frames means that it is not the logic of any class of $\mathcal{CAV}$-\baos{}. To put these points in more algebraic terms: normal modal logics correspond to varieties of \baos{}, and Kripke incompleteness is the phenomenon that not every variety of \baos{} can be generated as the smallest variety containing some class of $\mathcal{CAV}$-\baos{}. The last point is underscored by the dramatic \textit{Blok Dichotomy} \citep{Blok1978}: a variety of \baos{} is either uniquely generated by the $\mathcal{CAV}$-\baos{} it contains (the associated modal logic is strictly Kripke complete) or else there are continuum-many other varieties of \baos{} that contain exactly the same $\mathcal{CAV}$-\baos{} (other modal logics that are valid over exactly the same Kripke frames).

From this algebraic perspective, a series of natural questions arises: what happens if we drop or weaken one or more of the properties $\mathcal{C}$, $\mathcal{A}$, and $\mathcal{V}$? Do we thereby obtain distinct notions of completeness for normal modal logics? Can we represent the resulting \baos{} with some kind of frames with richer structure than Kripke frames? Does incompleteness persist even if we retain only one of the properties $\mathcal{C}$, $\mathcal{A}$, or $\mathcal{V}$? Does the analogue of the Blok Dichotomy hold if we drop one or more of  $\mathcal{C}$, $\mathcal{A}$, or $\mathcal{V}$?

These questions and the general landscape of sub-Kripkean notions of completeness from an algebraic perspective were the subject of the PhD research of Litak  \citeyearpar{Litak2004,Litak2005,Litak2005b,Litak2008}. As it turns out, none of the following notions of completeness are equivalent to any of the others: completeness with respect to atomic \baos{} ($\mathcal{A}$-\baos{}), complete \baos{} ($\mathcal{C}$-\baos{}), \baos{} that admit residuals ($\mathcal{T}$-\baos{}), atomic and completely additive \baos{} ($\mathcal{AV}$-\baos{}), complete and completely additive \baos{} ($\mathcal{CV}$-\baos{}), and $\mathcal{CAV}$-\baos{}. A rich hierarchy of different notions of  completeness  thereby comes into view. As for the representation question: Do\v{s}en \citeyearpar{Dosen1989} showed that $\mathcal{CA}$-\baos{} are the duals of \textit{normal neighborhood frames}; ten Cate and Litak \citeyearpar{tenCate2007} showed that $\mathcal{AV}$-\baos{} are the duals of \textit{discrete general frames}; and Holliday \citeyearpar{Holliday2015} has given a dual representation of $\mathcal{CV}$-\baos{} in the framework of \textit{possibility semantics}. As for the question of whether incompleteness persists if we retain only one of the properties $\mathcal{C}$, $\mathcal{A}$, or $\mathcal{V}$: Litak \citeyearpar{Litak2004} showed that there are $\mathcal{C}$-incomplete logics, and Venema \citeyearpar{Venema2003} showed that there are $\mathcal{A}$-incomplete logics (by proving the much stronger result that there is a variety of \baos{} all of whose members are atomless). Finally, as for extending the Blok Dichotomy: Zakharyaschev et al. \citeyearpar{Zak2001} noted that it extends to $\mathcal{T}$-\baos{}, Chagrova \citeyearpar{Chagrova1998} showed that it extends to $\mathcal{CA}$-\baos{}, and Litak \citeyearpar{Litak2008} showed that it extends to all $\mathcal{C}$-\baos{} (indeed, even to $\omega$-complete \baos{}, only requiring countable joins) as well as $\mathcal{AV}$-\baos{}. 

One piece of the puzzle remained missing for over a decade: are there $\mathcal{V}$-incomplete logics (\citealt{Litak2004}, \citealt[Ch.~9]{Litak2005b},  \citealt[\S~7]{Litak2008})? Venema \citeyearpar[\S~6.1]{Venema2007} also asked whether there are $\mathcal{V}$-inconsistent logics, i.e., normal modal logics that are not sound over any $\mathcal{V}$-\bao{}. In this paper, we answer these questions affirmatively. Our solution involves a first-order reformulation of the ostensibly second-order condition of complete additivity, which arose in the context of possibility semantics mentioned above \citep{Holliday2015}.\footnote{Around the same time that we proved that complete additivity of an operator in a \bao\ is a $\forall\exists\forall$ first-order condition, in January 2015, Hajnal Andr\'{e}ka, Zal\'{a}n Gyenis, and Istv\'{a}n N\'{e}meti independently proved that complete additivity of an operator on a poset is preserved under ultraproducts. After they learned of our result on the first-orderness of complete additivity in \baos\ from Steven Givant, Andr\'{e}ka et al. \citeyearpar{Andreka2016} extended it from \baos\ to arbitrary posets.} Using this first-order reformulation, we revisit an intriguing Kripke-incomplete logic of van Benthem \citeyearpar{Benthem1979} and show that van Benthem's logic, previously known to be $\mathcal{AV}$- and $\mathcal{T}$-incomplete, is the missing example of a $\mathcal{V}$-incomplete logic. Furthermore, it can be easily modified to answer Venema's question on $\mathcal{V}$-inconsistency. Building on this example, we extend the Blok Dichotomy to $\mathcal{V}$-incompleteness.

Given these results, the question arises of whether there are ``naturally occurring'' $\mathcal{V}$-incomplete logics. In the case of bimodal logic, we answer this question affirmatively. We show that the bimodal provability logic $\nml{GLB}$ \citep{Japaridze1988,Boolos1993}, well-known to be Kripke incomplete and hence $\mathcal{CAV}$-incomplete, is also $\mathcal{V}$-incomplete.

\subsection{The Syntactic Angle}\label{sec:angle2}

Van Benthem's logic was designed to illuminate the second angle on Kripke incompleteness mentioned above---the weakness of the notion of derivability in a normal modal logic with axioms---so our story will bring these two angles together. Let $\Sigma\vdash^{mnu}\varphi$ mean that $\varphi$ belongs to the smallest normal modal logic that contains the formulas in $\Sigma$ as axioms; so thinking in terms of derivations, not only modus ponens but also necessitation and uniform substitution may be applied to formulas from $\Sigma$. Van Benthem observed that Kripke incompleteness results can be viewed as \textit{non-conservativity} results with respect to $\vdash^{mnu}$. These results show that (i) for some modal formulas $\sigma$ and $\varphi$, $\{\sigma\}\nvdash^{mnu}\varphi$, yet (ii) every Kripke frame that validates $\sigma$ also validates $\varphi$. As is well known, every modal formula $\varphi$ can be translated into a sentence $SO(\varphi)$ of monadic second-order logic such that $\varphi$ is valid over a Kripke frame in the sense of Kripke semantics iff $SO(\varphi)$ is true in the frame as a standard second-order structure. Thus, (ii) can be rephrased as the fact that every Kripke frame that makes $SO(\sigma)$ true also makes $SO(\varphi)$ true. Van Benthem observed that the proof of (ii) typically shows that $SO(\varphi)$ is derivable from $SO(\sigma)$ using some weak system of monadic second-order logic plus an axiom of choice. In this sense, the second-order system is not conservative with respect to $\vdash^{mnu}$, in light of (i). To better gauge the weakness of $\vdash^{mnu}$, van Benthem asked whether there exist such a $\varphi$ and $\sigma$ for which the derivation of $SO(\varphi)$ from $SO(\sigma)$ can be carried out using only what he considered the weakest reasonable second-order system, dubbed \textit{weak second-order logic}. This would be a striking example of the weakness of $\vdash^{mnu}$ compared to second-order logic. Van Benthem indeed found such a $\varphi$ and $\sigma$. We will call the smallest normal modal logic containing that $\sigma$, which does not contain $\varphi$ by (i), the logic $\vblog$.

We will show that $\vblog$ is a $\mathcal{V}$-incomplete logic by showing that every $\mathcal{V}$-\bao{} that validates $\sigma$ also validates $\varphi$. At the end of the paper, we will follow a path in the spirit of \citealt{Benthem1979}: how can we strengthen our base logic to derive the formula $\varphi$ from $\sigma$, and what does this show about the weakness of the base logic? We begin by reviewing van Benthem's approach of translation into weak second-order logic. As it turns out, even weak second-order logic is much more powerful than what one needs to derive van Benthem's $\varphi$ from his $\sigma$ and thereby demonstrate non-conservativity with respect to basic modal logic. We consider two ways of extending the basic modal syntax for this purpose. One way leads to the \textit{nominal}\footnote{\label{f:firsthyb} The more recent use of the corresponding adjective in theoretical computer science \citep{Pitts2013,Pitts2016} has nothing to do with the term as used in \S\ \ref{subsec:hybrid} and references quoted therein.} calculus that characterizes consequence over $\mathcal{AV}$-\baos{}. Another way leads to the \textit{tense} calculus that characterizes consequence over $\mathcal{T}$-\baos{}. Finally, we find a common core for these weakenings: we show how the first-order reformulation of complete additivity inspires additional modal inference rules admissible over $\mathcal{V}$-\baos{} that allow us to derive van Benthem's $\varphi$ from his $\sigma$.  Of course, \emph{characterizing} a consequence relation by means of a (set of) rule(s), in a possibly extended syntax, is much more than just admissibility: in addition to soundness, one requires a generic completeness result. We will leave as an open question whether the rules we are proposing yield a syntactic characterization of $\mathcal{V}$-consequence. 

\subsection{Organization}

The paper is organized as follows. In \S~\ref{sec:VB}, we review the proof that van Benthem's logic $\vblog$ is Kripke incomplete, which we find to be a simple, vivid, and hence pedagogically useful example of Kripke incompleteness. In \S~\ref{sec:background}, we review the algebraic approach to modal (in)completeness as in \citealt{Litak2005b}. With this background, we proceed to the main part of the paper: in \S~\ref{sec:R&V}, we present the first-order reformulation of complete additivity, and in \S~\ref{sec:V-inc}, we use this reformulation to prove that the unimodal logic $\vblog$ and the bimodal logic $\nml{GLB}$ are $\mathcal{V}$-incomplete as above (and that the quasi-normal logic $\nml{GLSB}$ is even \cfV-inconsistent in a suitably adjusted sense).  In \S~\ref{sec:DecIssues}, we discuss issues of decidability and complexity: we show that even \cfV-\emph{inconsistent} logics do not have to be more complex than the classical propositional calculus, that the property of \cfV-completeness is in general undecidable, but that the associated notion of consequence (unlike Kripke frame consequence) is recursively axiomatizable.  In \S~\ref{sec:Blok}, we build on the example of $\vblog$ to generalize the Blok Dichotomy to $\mathcal{V}$-incompleteness. In the remaining sections we discuss the  second, syntactic angle presented above: in \S~\ref{sec:syntax}, we give syntactic proofs in existing derivation systems of the formula witnessing the incompleteness of $\vblog$; and in \S~\ref{sec:newsyntax}, we discuss extending our base logic with new $\mathcal{V}$-sound rules of inference. We conclude the paper in \S~\ref{sec:conclude} with open problems for future research.

\section{Kripke Incompleteness}\label{sec:VB}

In order to put Kripke incompleteness in context, let us review some basic definitions.

Let $\mathcal{L}$ be the set of formulas of propositional modal logic generated from a set $\{p_n\}_{n\in\mathbb{N}}$ of propositional variables. We use the usual notation for connectives: $\to$, $\wedge$, $\vee$, $\bot$,  $\Box$, $\Diamond$. A \textit{normal modal logic} is a set $\nml{L}\subseteq\mathcal{L}$ such that: (a) $\nml{L}$ contains all \textit{tautologies} of classical propositional logic; (b) $\nml{L}$ is closed under \textit{modus ponens}, i.e., if $\varphi\in\nml{L}$ and $\varphi\rightarrow\psi\in\nml{L}$, then $\psi\in\nml{L}$; (c) $\nml{L}$ is closed under \textit{uniform substitution}, i.e., if $\varphi\in\nml{L}$ and $\psi$ is the result of uniformly substituting formulas for propositional variables in $\varphi$, then $\psi\in\nml{L}$; (d) $\nml{L}$ is closed under \textit{necessitation}, i.e., if $\varphi\in\nml{L}$, then $\Box\varphi\in\nml{L}$; and (e) $\Box(p_0\rightarrow p_1)\rightarrow (\Box p_0\rightarrow \Box p_1)\in\nml{L}$. Let $\nml{K}$ be the smallest normal modal logic.

The definition of a normal modal logic extends to polymodal languages with multiple modalities $\Box_1$, $\Box_2$, etc., by requiring (d) and (e) for each $\Box_i$. In this section, we focus on the unimodal language, but polymodal languages will become important in \S~\ref{sec:V-inc}.

We assume familiarity with Kripke frames $\ffr=\langle W,R\rangle$, Kripke models $\mathcal{M}=\langle W,R,V\rangle$, and the Kripke semantic definition of when a formula is true at a $w\in W$: $\mathcal{M},w\vDash\varphi$. We abuse notation and write `$w\in\ffr$' or `$w\in\mathcal{M}$' to mean $w\in W$. For convenient additional notation, given $w\in W$, let $R(w)=\{v\in W\mid wRv\}$, and given $\varphi\in\mathcal{L}$, let $\llbracket \varphi\rrbracket^\mathcal{M}=V(\varphi)=\{v\in W\mid \mathcal{M},v\vDash\varphi\}$. A formula $\varphi$ is \textit{globally true} in a Kripke model $\mathcal{M}$, written `$\mathcal{M}\vDash\varphi$', iff $\mathcal{M},w\vDash\varphi$ for every $w\in\mathcal{M}$; and $\varphi$ is \textit{valid} over a Kripke frame $\ffr=\langle W,R\rangle$, written `$\ffr\vDash\varphi$', iff $\mathcal{M}\vDash\varphi$ for every model $\mathcal{M}=\langle W,R,V\rangle$ based on $\ffr$. A formula is valid over a class $\mathsf{F}$ of Kripke frames iff it is valid over every frame in the class. Let $\mathrm{Log}(\mathsf{F})$ be the set of formulas valid over $\mathsf{F}$, which is always a normal modal~logic.

A logic $\nml{L}$ is \textit{Kripke complete} if there is a class $\mathsf{F}$ of Kripke frames for which $\nml{L}=\mathrm{Log}(\mathsf{F})$. Otherwise it is \textit{Kripke incomplete}. For any logic $\nml{L}$, we can consider the class of frames that validate it: $\mathrm{Fr}(\nml{L})=\{\ffr\mid \ffr\vDash \varphi\mbox{ for all }\varphi\in\nml{L}\}$. For a Kripke complete logic, $\nml{L}=\mathrm{Log}(\mathrm{Fr}(\nml{L}))$, whereas for a Kripke incomplete logic, $\nml{L}\subsetneq\mathrm{Log}(\mathrm{Fr}(\nml{L}))$.

Everything said above was in terms of \textit{validity}, but we could also put our discussion in terms of \textit{consequence}. To avoid confusion, it is important to distinguish between the following consequence relations, following van Benthem \citeyearpar[p.~37]{Benthem1983}:
\begin{itemize}
\item $\Sigma\vDash_{M,l}\varphi$ iff for every Kripke model $\mathcal{M}$ and $w\in \mathcal{M}$, if $\mathcal{M},w\vDash\sigma$ for all $\sigma\in\Sigma$, then $\mathcal{M},w\vDash\varphi$ (local consequence over models).
\item $\Sigma\fcons\varphi$ iff for every Kripke frame $\ffr$, if $\ffr\vDash \sigma$ for all $\sigma\in\Sigma$, then $\ffr\vDash \varphi$ (global consequence over frames).
\end{itemize}
These two notions of consequence are related to two different notions of when $\varphi$ is derivable from a set $\Sigma$ of formulas. Let $\Sigma\vdash_\nml{K}^m\varphi$ iff $\varphi$ belongs to the closure of $\nml{K}\cup\Sigma$ under modus ponens. Let $\Sigma\vdash_\nml{K}^{mnu}\varphi$ iff $\varphi$ belongs to the smallest normal modal logic $\nml{L}\supseteq\Sigma$, which is the closure of $\nml{K}\cup\Sigma$ under modus ponens, necessitation, and uniform substitution. Our $\vdash_\nml{K}^{mnu}$ is what van Benthem \citeyearpar{Benthem1979} denotes by `$\vdash_K$' and calls `the minimal modal logic $K$'. It would be reasonable to call $\vdash_\nml{K}^{mnu}$ \textit{derivability from axioms} and to call $\vdash_\nml{K}^m$ \textit{derivability from premises}, since intuitively necessitation and uniform substitution should be applicable to logical axioms but not to arbitrary premises.

The relation $\vDash_{M,l}$ is axiomatized by $\vdash^{m}_\nml{K}$: $\Sigma\vDash_{M,l}\varphi$ iff $\Sigma\vdash_\nml{K}^m\varphi$, which is equivalent to there being $\sigma_1,\dots,\sigma_n\in\Sigma$ such that $(\sigma_1\wedge\dots\wedge\sigma_n)\rightarrow\varphi\in\nml{K}$. By contrast, by the result of Thomason \citeyearpar{Thomason1975c} mentioned in \S~\ref{sec:intro}, $\fcons$ is not recursively axiomatizable. We have that $\Sigma\vdash_\nml{K}^{mnu}\varphi$ implies $\Sigma\fcons\varphi$, but the converse is not guaranteed. The ``weakness'' of $\vdash^{mnu}$ referred to in \S~\ref{sec:angle2} is the fact that $\Sigma\fcons\varphi$ does not guarantee $\Sigma\vdash_\nml{K}^{mnu}\varphi$.

While $\vDash_{M,l}$ seems to capture an intuitive notion of modal consequence,  the relation $\fcons$ implicitly prefixes all premises by arbitrary sequences of boxes and  universal quantifiers over propositional variables, so it yields striking consequences like $\{p\}\fcons\bot$ and $\{\varphi\}~\fcons~\Box\varphi$. Thus, it does not enjoy a  deduction theorem and one should not think about sets of formulas closed under $\fcons$ as ``local theories''. There is, however, a better way to think about $\fcons$. In modal logic, we are often not interested in the class of all frames, but rather in some restricted classes of frames, perhaps defined as $\mathrm{Fr}(\Sigma)$ for some set $\Sigma\subseteq\mathcal{L}$. We then want to know what formulas are valid over this class, i.e., whether $\varphi\in\mathrm{Log}(\mathrm{Fr}(\Sigma))$. This is equivalent to asking whether $\Sigma\fcons\varphi$. Also note that $\nml{L}$ is Kripke complete as above iff for every $\varphi\in\mathcal{L}$, $\nml{L}\fcons\varphi$ implies $\varphi\in\nml{L}$.

Let $\vblog$ be the smallest normal modal logic containing the axiom 
\[ \Box\Diamond\top\rightarrow \Box (\Box (\Box p\rightarrow p)\rightarrow p),\]
which we will call the $\vblog$-axiom. Van Benthem \citeyearpar{Benthem1979} proved that the logic $\vblog$ is Kripke incomplete. While $\vblog$ may at first seem  an entirely ad hoc example of a Kripke-incomplete logic, we will observe a striking connection between the incompleteness of $\vblog$ and the incompleteness of an important provability logic in \S~\ref{subsec:GLB}. In this connection, it is noteworthy that the $\vblog$-axiom is a theorem of the provability logic $\nml{GL}$, the smallest normal modal logic containing the L\"{o}b axiom, $\Box (\Box p\to p)\to \Box p$. Substituting $\bot$ for $p$ in the L\"{o}b axiom yields $\Box\Diamond \top \to \Box\bot$, which in the context of provability logic is a modal version of G\"{o}del's Second Incompleteness Theorem.\footnote{\label{footnote:vbsource}When describing the origin of the $\nml{vB}$ axiom, van Benthem \citeyearpar{Benthem1979} hints at a \emph{quasi-normal} logic that one recognizes as (a subsystem of) Solovay's system $\nml{GLS}$ (see, e.g., \citealt[p.~65]{Boolos1993}), another central formalism in the area of provability logic. The formula under $\Box$ in the consequent of  $\nml{vB}$, i.e., $\Box (\Box p\rightarrow p)\rightarrow p$, is a theorem of $\nml{GLS}$, and as shown by a syntactic derivation in \citealt{Benthem1979}, this formula alone makes it impossible to characterize $\nml{GLS}$ in terms of Kripke semantics with distinguished worlds, which is the standard relational semantics for quasi-normal systems. We will discuss such provability-related quasi-normal systems in \S~\ref{subsec:nonnormal}.
} Clearly the $\vblog$-axiom is derivable from $\Box\Diamond \top \to \Box\bot$. In the other direction, van Benthem showed that $\Box\Diamond \top \to \Box\bot$ is a Kripke-frame consequence of the $\vblog$-axiom. However, he also showed that $\Box\Diamond \top \to \Box\bot$ is \textit{not} a theorem of $\vblog$. Together these facts imply  the Kripke-incompleteness of $\vblog$.

So that our presentation is somewhat self-contained, we will include van Benthem's proof of the Kripke-incompleteness of $\vblog$. It should be emphasized that this is one of the simplest proofs of Kripke incompleteness.\footnote{It is not, however, the most natural example of a Kripke incomplete unimodal logic. That honor goes to the logic of the Henkin sentence $\Box(\Box p\leftrightarrow p)\rightarrow \Box p$ (see \citealt{Boolos1985}), which is also a simplest possible Kripke incomplete unimodal logic in the following sense: it is axiomatized by a formula with only one propositional variable and modal depth 2. Lewis \citeyearpar{Lewis1974} showed that all normal unimodal logics axiomatizable by formulas of modal depth $\leq 1$ are Kripke complete. Despite its charms, the Henkin logic is irrelevant for our purposes in this paper, for a reason that can be explained using notions introduced in \S~\ref{sec:background}: the proof of its Kripke incompleteness does not attack complete additivity, but rather closure under countable joins/meets, i.e., $\clofr{\omega C}$-completeness. This property is exploited by most incompleteness proofs involving $\nml{GL}$ and its relatives, like the failure of strong completeness of $\nml{GL}$ itself or the Kripke-inconsistency of various tense logics containing $\nml{GL}$ (although counterexamples related to the original one by \citealt{Thomason1972} clash with full \cfC\ rather than its restriction to $\clofr{\kappa C}$ for any fixed cardinality $\kappa$) \citep{Litak2005b}.} It should also be emphasized that in \S~\ref{sec:V-inc} we will prove a much more general result than the following lemma; but we include a proof of Lemma \ref{lem:1} for later reference in \S~\ref{sec:syntax}.

\begin{lemma}\label{lem:1} Any Kripke frame that validates $\vblog$ also validates $\Box\Diamond\top\to\Box\bot$.
\end{lemma}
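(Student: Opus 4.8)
The plan is to turn validity of $\Box\Diamond\top\to\Box\bot$ on a frame into an elementary condition on the accessibility relation and then verify that condition on any frame validating the $\vblog$-axiom. Since $\Box\Diamond\top\to\Box\bot$ contains no propositional variables, for a Kripke frame $\ffr=\langle W,R\rangle$ the claim $\ffr\vDash\Box\Diamond\top\to\Box\bot$ is just the frame condition that, for every $w\in W$, if every $R$-successor of $w$ has an $R$-successor then $w$ has no $R$-successor; contraposing at $w$, this says: whenever $R(w)\neq\emptyset$, there is some $v\in R(w)$ with $R(v)=\emptyset$. So, assuming $\ffr\vDash\vblog$, I would fix a $w$ with $R(w)\neq\emptyset$, pick some $v_0\in R(w)$, and suppose toward a contradiction that every $v\in R(w)$ has $R(v)\neq\emptyset$.

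The idea is then to produce a single valuation on $\ffr$ under which the $\vblog$-axiom fails at $w$, contradicting $\ffr\vDash\vblog$. Take the valuation $V$ with $V(p)=W\setminus\{v_0\}$ (and arbitrary values on the other variables), and let $\mathcal{M}=\langle W,R,V\rangle$. Two things need checking. First, $\mathcal{M},w\vDash\Box\Diamond\top$: this is immediate from the assumption that every $v\in R(w)$ has a successor and does not involve $V$. Second, $\mathcal{M},v_0\vDash\Box(\Box p\to p)$ while $v_0\notin V(p)$; this yields $\mathcal{M},v_0\nvDash\Box(\Box p\to p)\to p$, and hence, since $wRv_0$, $\mathcal{M},w\nvDash\Box(\Box(\Box p\to p)\to p)$, so the $\vblog$-axiom indeed fails at $w$.

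The one nontrivial point is the claim $\mathcal{M},v_0\vDash\Box(\Box p\to p)$. Unwinding the semantics, this says that for every $u\in R(v_0)$, if $R(u)\subseteq V(p)$ then $u\in V(p)$. With $V(p)=W\setminus\{v_0\}$, this implication could fail only for $u=v_0$ with $v_0\notin R(v_0)$; but if $u=v_0\in R(v_0)$ then automatically $v_0\in R(v_0)$, so it never fails. The heart of the argument is thus the observation that the complement of a singleton is exactly the valuation that makes the ``L\"{o}b-style reflection'' formula $\Box(\Box p\to p)$ true at $v_0$ for a trivial reason---the only possible counterexample world is $v_0$ itself, which can witness its own failure only through a reflexive loop---while of course refuting $p$ at $v_0$. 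I expect spotting this valuation to be the main obstacle; after that, everything is a routine check of Kripke truth conditions. (As the text notes, \S~\ref{sec:V-inc} proves a stronger statement about $\mathcal{V}$-\baos{} that subsumes this lemma; here the goal is just the most transparent direct proof.)
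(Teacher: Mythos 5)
Your proof is correct and is essentially the paper's own argument: it uses the same valuation (the complement of a singleton successor $\{v_0\}$) and the same core observation that under this valuation $\Box(\Box p\to p)$ must hold at $v_0$ while $p$ fails there, so the $\vblog$-axiom is refuted at $w$ once $\Box\Diamond\top$ holds. The only difference is organizational: the paper runs the singleton-valuation computation inside a reductio (deriving a reflexive loop at $y$ and then a contradiction), whereas you verify the truth of $\Box(\Box p\to p)$ at $v_0$ directly; the content is the same.
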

\begin{proof} Let $\ffr$ be a Kripke frame that validates $\vblog$. We need to show that if $x\in\ffr$ is such that $R(x)\not=\varnothing$, then there is a $w\in R(x)$ such that $R(w)=\varnothing$ (think of the contrapositive: $\Diamond\top\to\Diamond\Box\bot$). Thus, consider an $x\in\ffr$ such that $R(x)\not=\varnothing$ and a $y\in R(x)$. Let $\mathcal{M}$ be a model based on $\ffr$ with $\llbracket p\rrbracket^\mathcal{M}=\{u\in \ffr\mid u\not = y\}$. For reductio, suppose there is no $w\in R(x)$ with $R(w)=\varnothing$, so $\mathcal{M},x\vDash \Box\Diamond\top$. Then since $\ffr$ validates $\vblog$, we have $\mathcal{M},x\vDash \Box (\Box (\Box p\rightarrow p)\rightarrow p)$, which with $y\in R(x)$ implies $\mathcal{M},y\vDash \Box(\Box p \rightarrow p)\rightarrow p$, which with our valuation for $p$ implies $\mathcal{M},y\nvDash \Box(\Box p \rightarrow p)$. Thus, there is a $z\in R(y)$ such that $\mathcal{M},z\vDash \Box p$ but $\mathcal{M},z\nvDash p$. From $\mathcal{M},z\nvDash p$, we have $z=y$.  Then from $z\in R(y)$ and $\mathcal{M},z\vDash \Box p$, we have $y\in R(y)$ and $\mathcal{M},y\vDash \Box p$, so $\mathcal{M},y\vDash p$, a contradiction. Hence there is a $w\in R(x)$ with $R(w)=\varnothing$, as desired.
\end{proof}

All that remains to show is that $\Box\Diamond\top\to\Box\bot\not\in\vblog$. We can do so by exhibiting a Kripke model $\mathcal{M}$ and showing that  every $\varphi\in\vblog$ is globally true in $\mathcal{M}$, while $\Box\Diamond\top\to\Box\bot$ is not. Since every $\varphi\in\nml{K}$ is globally true in every Kripke model, and the set of formulas that are globally true in a given Kripke model $\mathcal{M}$ is closed under modus ponens and necessitation, to show that every $\varphi\in\vblog$ is globally true in $\mathcal{M}$, it suffices to show that every substitution instance of the $\vblog$-axiom is globally true in $\mathcal{M}$ (in the terminology of \citealt{Fine1974b}, the $\vblog$-axiom is \textit{strongly verified} in $\mathcal{M}$).

In the literature on Kripke incompleteness, rather than directly exhibiting a model $\mathcal{M}$ as above, authors typically exhibit an appropriate \textit{general frame}\footnote{Readers familiar with the algebraic approach to modal logic will of course note that one can directly define a modal algebra (\bao) instead. We will introduce modal algebras soon in \S~\ref{sec:background}.}  $\gfr=\langle W,R,\admis{W}\rangle$ where $\langle W,R\rangle$ is a Kripke frame and $\admis{W}$ is a family of subsets of $W$ that is closed under union, complement relative to $W$, and the operation $X\mapsto R^{-1}[X]=\{w\in W\mid \exists x\in X\colon wRx\}$. An \textit{admissible} model based on a general frame $\gfr=\langle W,R,\admis{W}\rangle$ is a model $\mathcal{M}=\langle W,R,V\rangle$ such that $\llbracket p\rrbracket^\mathcal{M}\in\admis{W}$ for every propositional variable $p$.  An easy induction then shows that for every $\varphi\in\mathcal{L}$, $\llbracket \varphi\rrbracket^\mathcal{M}\in\admis{W}$. It follows that if a formula $\psi$ is globally true in every admissible model based on $\gfr$---in which case $\psi$ is \textit{valid} over $\gfr$, written `$\gfr\vDash\psi$'---then for any particular admissible model $\mathcal{M}$ based on $\gfr$, all substitution instances of $\psi$ are globally true in $\mathcal{M}$. Thus, to obtain a model $\mathcal{M}$ as in the previous paragraph, it suffices to exhibit a general frame $\gfr$ over which the $\vblog$-axiom is valid, while $\Box\Diamond\top\to\Box\bot$ is not.

Another way to motivate going to a general frame here is by the following observation. Define a consequence relation $\vDash_G$ by $\Gamma\vDash_G\varphi$ iff for every general frame $\gfr$, if $\gfr\vDash\sigma$ for all $\sigma\in\Sigma$, then $\gfr\vDash\varphi$. Then it can be shown that there is an exact match between $\vDash_G$ and the derivability relation $\vdash_\nml{K}^{mnu}$ above: $\Sigma\vDash_G \varphi$ iff $\varphi$ belongs to the smallest normal modal logic containing $\Sigma$. So to show that $\Box\Diamond\top\to\Box\bot$ does not belong to $\vblog$, we simply show that $\{\vblog\mbox{-axiom}\}\nvDash_G \Box\Diamond\top\to\Box\bot$, which is again to show that there is a general frame $\gfr$ over which the $\vblog$-axiom is valid, while $\Box\Diamond\top\to\Box\bot$ is not.

\begin{definition}[van Benthem frame]\label{def:VBframe} The \textit{van Benthem frame} (see Figure \ref{vBfig}) is the general frame $\mathscr{VB}=\langle W, R, \admis{W}\rangle$ where:
\begin{enumerate}
\item $W=\mathbb{N}\cup\{\infty,\infty+1\}$;
\item $R=\{\langle \infty+1,\infty\rangle, \langle \infty,\infty\rangle\} \cup \{\langle \infty,n\rangle\mid n\in\mathbb{N}\}\cup \{\langle m,n\rangle\mid m,n\in\mathbb{N}, m>n\}$;\footnote{Cresswell \citeyearpar{Cresswell84} notes that we can leave out $\langle \infty,\infty\rangle$, but the $\infty$-reflexive variant will be more convenient in \S~\ref{subsec:Decide}.}
\item $\admis{W}=\{X\subseteq W\mid X\mbox{ is finite and }\infty\not\in X\}\cup \{X\subseteq W\mid X\mbox{ is cofinite and }\infty\in X\}$.
\end{enumerate}
\end{definition} 

Observe that $\admis{W}$ is closed under union, relative complement, and $X\mapsto R^{-1}[X]$.

 \begin{figure}[h]
\begin{center}
\begin{tikzpicture}[->,>=latex,shorten >=1pt,shorten <=1pt, auto,node
distance=2cm,every loop/.style={<-,shorten <=1pt}]
\tikzstyle{every state}=[fill=gray!20,draw=none,text=black]

\node (infty+1) at (0,0) {{$\infty+1$}};
\node (infty) at (2,0) {{$\infty$}};
\node at (3,0) {{$\dots$}};
\node (2) at (4,0) {{$2$}};
\node (1) at (6,0) {{$1$}};
\node (0) at (8,0) {{$0$}};

\path (infty+1) edge[->] node {{}} (infty);
\path (infty) edge[in=116,out=65,loop] node {{}} (infty);
\path (infty) edge[bend left,->] node {{}} (2);
\path (infty) edge[bend left,->] node {{}} (1);
\path (infty) edge[bend left,->] node {{}} (0);
\path (2) edge[->] node {{}} (1);
\path (2) edge[bend right,->] node {{}} (0);
\path (1) edge[->] node {{}} (0);

\end{tikzpicture}
\end{center}
\caption{The van Benthem frame \vbfr.}\label{vBfig}
\end{figure}

We now add the final piece of the argument.

\begin{lemma}\label{lem:2} $\Box\Diamond\top\rightarrow \Box (\Box (\Box p\rightarrow p)\rightarrow p)$ is valid over $\vbfr$, while  $\Box\Diamond\top\to\Box\bot$ is not. Thus, $\Box\Diamond\top\to\Box\bot\not\in\vblog$.
\end{lemma}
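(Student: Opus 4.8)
The plan is to verify the two validity claims about $\vbfr$ directly from the definitions, and then read off the final sentence from the fact that $\vblog$ is the smallest normal modal logic containing the $\vblog$-axiom.

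The negative half is the quick one. I would evaluate $\Box\Diamond\top\to\Box\bot$ at the world $\infty+1$ in an arbitrary admissible model $\mathcal{M}$ based on $\vbfr$. Since $R(\infty+1)=\{\infty\}$ and $R(\infty)\neq\varnothing$, we get $\mathcal{M},\infty\vDash\Diamond\top$ and hence $\mathcal{M},\infty+1\vDash\Box\Diamond\top$; but $R(\infty+1)\neq\varnothing$, so $\mathcal{M},\infty+1\nvDash\Box\bot$. Therefore $\mathcal{M},\infty+1\nvDash\Box\Diamond\top\to\Box\bot$, so $\vbfr\nvDash\Box\Diamond\top\to\Box\bot$. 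Note that this half does not even use the admissibility restriction.

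For the positive half, I would fix an admissible model $\mathcal{M}$ based on $\vbfr$ and a world $w$ with $\mathcal{M},w\vDash\Box\Diamond\top$, and argue $\mathcal{M},w\vDash\Box(\Box(\Box p\to p)\to p)$ by cases on $w$. If $w=n\in\mathbb{N}$ with $n\geq 1$, then $0\in R(n)$ while $R(0)=\varnothing$, so $\mathcal{M},0\nvDash\Diamond\top$ and the antecedent $\Box\Diamond\top$ already fails at $w$; if $w=0$, then $R(0)=\varnothing$ makes the consequent vacuously true; if $w=\infty$, then again $0\in R(\infty)$, so the antecedent fails. Hence only $w=\infty+1$ is substantive, and since $R(\infty+1)=\{\infty\}$ the goal reduces to $\mathcal{M},\infty\vDash\Box(\Box p\to p)\to p$. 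So suppose $\mathcal{M},\infty\vDash\Box(\Box p\to p)$; because $\mathbb{N}\subseteq R(\infty)$ this gives $\mathcal{M},n\vDash\Box p\to p$ for every $n\in\mathbb{N}$, and an easy induction on $n$ using $R(n)=\{0,\dots,n-1\}$ — with base case $n=0$, where $\Box p$ holds vacuously — yields $\mathcal{M},n\vDash p$ for all $n\in\mathbb{N}$. Thus $\mathbb{N}\subseteq\llbracket p\rrbracket^\mathcal{M}$, so $\llbracket p\rrbracket^\mathcal{M}$ is infinite, so by the definition of $\admis{W}$ it must be cofinite and contain $\infty$; therefore $\mathcal{M},\infty\vDash p$, as needed.

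The crux of the argument is exactly this last inference: on the bare Kripke frame $\langle W,R\rangle$ the $\vblog$-axiom is \emph{not} valid — witness the valuation $\llbracket p\rrbracket=\mathbb{N}\cup\{\infty+1\}$, under which $\mathcal{M},\infty\vDash\Box(\Box p\to p)$ but $\mathcal{M},\infty\nvDash p$ — so the proof must use that any admissible set containing all of $\mathbb{N}$ is forced to contain $\infty$ as well. With both validity facts in hand, the closing sentence is immediate: the set of formulas valid over $\vbfr$ is a normal modal logic containing the $\vblog$-axiom, hence contains $\vblog$; since $\Box\Diamond\top\to\Box\bot$ is not valid over $\vbfr$, it is not in $\vblog$. (Equivalently, $\{\vblog\text{-axiom}\}\nvDash_G\Box\Diamond\top\to\Box\bot$, so by the exact match between $\vDash_G$ and $\vdash_\nml{K}^{mnu}$ recorded above, $\Box\Diamond\top\to\Box\bot\notin\vblog$.)
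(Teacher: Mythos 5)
Your proof is correct and follows essentially the same route as the paper's: both identify $\llbracket\Box\Diamond\top\rrbracket^\mathcal{M}=\{\infty+1\}$ (your case analysis and the paper's observation that every $w\neq\infty+1$ sees the dead end $0$ are the same point), reduce the claim to $\mathcal{M},\infty\vDash\Box(\Box p\to p)\to p$, and close via the induction on $\mathbb{N}$ plus the admissibility constraint forcing $\infty\in\llbracket p\rrbracket^\mathcal{M}$. Your added remark exhibiting the non-admissible valuation $\llbracket p\rrbracket=\mathbb{N}\cup\{\infty+1\}$ that refutes the axiom on the bare Kripke frame is a nice pedagogical bonus but does not change the argument.
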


\begin{proof} Consider any admissible model $\mathcal{M}$ based on $\vbfr$. First observe that $0\in\llbracket \Box \bot\rrbracket^\mathcal{M}$, and for all $w\in W\setminus\{\infty+1\}$, $wR0$, so $w\not\in\llbracket \Box\Diamond\top\rrbracket^\mathcal{M}$; but $\infty+1\in\llbracket \Box\Diamond\top\rrbracket^\mathcal{M}$, so we have $\llbracket \Box\Diamond\top\rrbracket^\mathcal{M}=\{\infty+1\}$. Thus, we need only show that $\infty+1 \in\llbracket \Box (\Box (\Box p\rightarrow p)\rightarrow p)\rrbracket^\mathcal{M}$, which is equivalent to $\infty\in \llbracket \Box(\Box p\rightarrow p)\rightarrow p\rrbracket^\mathcal{M}$. If $\infty\in \llbracket \Box(\Box p\rightarrow p)\rrbracket^\mathcal{M}$, then for every $n\in\mathbb{N}$, $n\in\llbracket \Box p\rightarrow p\rrbracket^\mathcal{M}$, whence an obvious induction shows that $\mathbb{N}\subseteq\llbracket p\rrbracket^\mathcal{M}$. Hence $\llbracket p\rrbracket^\mathcal{M}$ is \textit{cofinite}, so $\infty\in \llbracket p\rrbracket^\mathcal{M}$. This shows that $\infty\in \llbracket \Box(\Box p\rightarrow p)\rightarrow p\rrbracket^\mathcal{M}$, which completes the proof that $\Box\Diamond\top\rightarrow \Box (\Box (\Box p\rightarrow p)\rightarrow p)$ is valid over $\vbfr$.

Finally, observe that $\infty+1 \not\in\llbracket \Box\Diamond\top\to\Box\bot\rrbracket^\mathcal{M}$.\end{proof}

Putting together Lemmas \ref{lem:1} and \ref{lem:2}, we have the claimed result.

\begin{theo}[\citealt{Benthem1979}] The logic $\vblog$ is Kripke incomplete.
\end{theo}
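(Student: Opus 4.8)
The plan is to read off the theorem as an immediate corollary of Lemmas \ref{lem:1} and \ref{lem:2}, using the characterization of Kripke completeness recorded above: a logic $\nml{L}$ is Kripke complete iff $\nml{L}=\mathrm{Log}(\mathrm{Fr}(\nml{L}))$, and otherwise $\nml{L}\subsetneq\mathrm{Log}(\mathrm{Fr}(\nml{L}))$. So it suffices to exhibit a formula in $\mathrm{Log}(\mathrm{Fr}(\vblog))\setminus\vblog$, and the natural candidate — already singled out by van Benthem — is $\Box\Diamond\top\to\Box\bot$.

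First I would note the elementary fact that $\mathrm{Fr}(\vblog)=\mathrm{Fr}(\{\vblog\text{-axiom}\})$: since $\vblog$ is by definition the smallest normal modal logic containing the $\vblog$-axiom, and the set of formulas valid over a fixed Kripke frame is always a normal modal logic, any frame that validates the $\vblog$-axiom validates all of $\vblog$. Hence a Kripke frame validates $\vblog$ iff it validates the $\vblog$-axiom. Next, by Lemma \ref{lem:1}, every frame in $\mathrm{Fr}(\vblog)$ validates $\Box\Diamond\top\to\Box\bot$, i.e.\ $\Box\Diamond\top\to\Box\bot\in\mathrm{Log}(\mathrm{Fr}(\vblog))$ (equivalently, $\vblog\fcons\Box\Diamond\top\to\Box\bot$). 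On the other hand, Lemma \ref{lem:2} gives $\Box\Diamond\top\to\Box\bot\notin\vblog$. Therefore $\vblog\subsetneq\mathrm{Log}(\mathrm{Fr}(\vblog))$. Finally, if some class $\mathsf{F}$ of Kripke frames satisfied $\vblog=\mathrm{Log}(\mathsf{F})$, then every frame in $\mathsf{F}$ would validate $\vblog$, so $\mathsf{F}\subseteq\mathrm{Fr}(\vblog)$, whence $\mathrm{Log}(\mathrm{Fr}(\vblog))\subseteq\mathrm{Log}(\mathsf{F})=\vblog$, contradicting the strict inclusion just established. Hence no such $\mathsf{F}$ exists and $\vblog$ is Kripke incomplete.

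There is essentially no obstacle left at this stage: all the real work has been done in the two lemmas — Lemma \ref{lem:2}, where one checks that the $\vblog$-axiom (even all its substitution instances) is valid over the van Benthem general frame while $\Box\Diamond\top\to\Box\bot$ fails at $\infty+1$, and Lemma \ref{lem:1}, with its contrapositive frame-consequence argument. The only point in assembling the corollary that merits a word of care is the reduction $\mathrm{Fr}(\vblog)=\mathrm{Fr}(\{\vblog\text{-axiom}\})$, i.e.\ the observation that frame-validity is preserved under modus ponens, necessitation, and uniform substitution; everything else is bookkeeping with $\mathrm{Log}$ and $\mathrm{Fr}$.
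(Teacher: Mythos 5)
Your proposal is correct and matches the paper's own argument, which simply combines Lemmas \ref{lem:1} and \ref{lem:2} in exactly this way; the extra bookkeeping you supply (that $\mathrm{Fr}(\vblog)=\mathrm{Fr}(\{\vblog\text{-axiom}\})$ and that $\nml{L}\subsetneq\mathrm{Log}(\mathrm{Fr}(\nml{L}))$ rules out any witnessing class $\mathsf{F}$) is precisely what the paper leaves implicit via its earlier remarks on Kripke completeness. Nothing is missing.
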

\noindent Van Benthem's main point was not that $\vblog$ is Kripke incomplete\footnote{Indeed, the very existence of Kripke incomplete logics was not much of a revelation anymore at the time. We have already mentioned in the opening paragraph of this paper that Kripke incompleteness was first demonstrated with a bimodal logic \citep{Thomason1972} and shortly thereafter with complicated unimodal logics \citep{Fine1974b,Thomason1974} located in increasingly specific areas of the lattice of extensions of $\nml{K}$. Van Benthem himself devoted an earlier paper \citeyearpar{Benthem1978} to simple examples of incomplete logics. Moreover, at that time two crucial results (which are going to be our main concern in \S~\ref{subsec:RecV} and \S\S~\ref{sec:Blok}-\ref{sec:syntax}) that make explicit the ubiquity and unavoidability of Kripke incompleteness were already known:  \citealt{Thomason1975c} and  \citealt{Blok1978}.} but that it is a special example of such incompleteness: it can be used to show that the derivability relation $\vdash_\nml{K}^{mnu}$ falls short of capturing not only the consequence relation $\fcons$ itself, but also syntactically inspired weakenings of $\fcons$. We will return to this in \S~\ref{sec:syntax}. 

The main point we wish to make about $\vblog$ is that it is special in another way: it provides the long missing example of a \textit{$\mathcal{V}$-incomplete} logic. To explain what this means and its context, let us now review the algebraic perspective on modal logic. 

\section{The Algebraic Approach to Modal (In)completeness}\label{sec:background}

As noted in \S~\ref{sec:intro}, the discovery of Kripke incompleteness renewed interest in the algebraic semantics for normal modal logics based on \emph{Boolean algebras with operators} (\baos)  \citep{Jonsson1952a,Jonsson1952b}. An $n$-ary \textit{operator} on a Boolean algebra with universe $A$ is a function $f\colon A^n\to A$ that preserves finite joins in each coordinate (including the join of the empty set, $\bot$); a \textit{dual operator} preserves finite meets in each coordinate (including the meet of the empty set, $\top$). A \bao\ is a Boolean algebra equipped with a collection of operators. In this paper, we consider only \baos\ with unary operators. If the collection of these operators in a \bao\ $\Ag$ has cardinality $\kappa$, we call $\mathfrak{A}$ a $\kappa$-\bao. Per tradition, we call a $1$-\bao\ a \emph{modal algebra} (\ma).

The language of basic unimodal logic can be interpreted in an \ma\ $\Ag$ in the obvious way: any mapping $\theta$ of propositional variables to elements of $\Ag$ extends to a mapping $\hat{\theta}$ of arbitrary formulas to elements of $\Ag$, taking $\hat{\theta}(p)=\theta(p)$, $\hat{\theta}(\neg\varphi)= -\hat{\theta}(\varphi)$, $\hat{\theta} (\varphi\vee\psi)=\hat{\theta}(\varphi)+\hat{\theta}(\psi)$, and $\hat{\theta}(\Diamond\varphi)=f(\hat{\theta}(\varphi))$, where $-$, $+$, and $f$ are the complement, join, and operator in $\Ag$, respectively. The \ma\ $\Ag$ \textit{validates} a modal formula $\varphi$ (notation: $\Ag \vDash \varphi$) iff every such mapping sends $\varphi$ to the top element $\top$ of $\Ag$. For a given class $\mathcal{X}$ of \mas\ (see below for important examples of such classes), we define a consequence relation $\vDash_\mathcal{X}$, analogous to the global Kripke frame consequence relation $\fcons$ from \S~\ref{sec:VB}:
\begin{itemize}
\item $\Sigma\vDash_\mathcal{X}\varphi$ iff for every $\mathfrak{A}\in \mathcal{X}$, if $\mathfrak{A}\vDash \sigma$ for all $\sigma\in\Sigma$, then $\mathfrak{A}\vDash\varphi$.
\end{itemize}

All of the notions above extend to interpreting a polymodal language with $\kappa$ modal operators in $\kappa$-\baos\ in the obvious way.

\begin{definition}\label{def:complete} Let $\mathcal{X}$ be a class of $\kappa$-\baos\ and $\nml{L}$ a normal modal logic in a language with $\kappa$ modal operators. We say that $\nml{L}$ is \textit{$\mathcal{X}$-complete} if for all formulas $\varphi$, we have $\varphi\in\nml{L}$ iff $\nml{L}\vDash_\mathcal{X}\varphi$.  Otherwise $\nml{L}$ is \textit{$\mathcal{X}$-incomplete}.
\end{definition}
Equivalently, $\nml{L}$ is $\mathcal{X}$-complete if $\nml{L}$ is the logic of some class $\mathcal{K}\subseteq\mathcal{X}$, i.e., $\nml{L}$ is exactly the set of formulas validated by all \baos\ in $\mathcal{K}$.

Each normal modal logic $\nml{L}$ is the logic of a \bao: the Lindenbaum-Tarski algebra of $\nml{L}$, whose elements are the equivalence classes of modal formulas under the relation defined by $\varphi \sim\psi$ iff $\varphi\leftrightarrow\psi\in\nml{L}$, and whose operations are defined in the obvious way: $-[\varphi]=[\neg\varphi]$, $[\varphi]+[\psi]=[\varphi\vee\psi]$, and $f_i([\varphi])=[\Diamond_i\varphi]$. This general algebraic completeness theorem via Lindenbaum-Tarski algebras can be seen as a special case of an even more general approach: since the derivability relation\footnote{The relation $\vdash_\nml{L}^{mn}$ is defined by: $\Sigma \vdash_\nml{L}^{mn} \varphi$ iff $\varphi$ belongs to the closure of $\mathsf{L}\cup \Sigma$ under modus ponens and necessitation. For $\nml{L}=\nml{K}$, $\vdash_\nml{K}^{mn}$ is the derivability relation that matches \textit{global} consequence over Kripke \textit{models}: $\Sigma\vDash_{M}\varphi$ iff for every Kripke model $\mathcal{M}$, if $\mathcal{M}\vDash \sigma$ for all $\sigma\in\Sigma$, then $\mathcal{M}\vDash \varphi$.} $\vdash_\nml{L}^{mn}$ associated with a given normal modal logic $\nml{L}$ is (\emph{Rasiowa}) \emph{implicative} and hence \emph{strongly finitely algebraizable}, one obtains a strong completeness theorem for $\nml{L}$ with respect to algebraic semantics using the standard machinery of abstract algebraic logic (AAL)   \citep{Rasiowa74:aatnl,BlokP89:ams,Czelakowski2001,Andreka2001,Font06:sl,FontJP03a:sl,FontJP09:sl}; see \S~\ref{subsec:nonnormal} and especially Footnote \ref{footnote:alghist} for historical origins of this approach.

Before proceeding further, let us fix notation for dealing with algebras. We use gothic letters $\Ag, \Bg, \Cg \dots$ for names of algebras and $a, b, c \dots$ for elements of algebras. Whenever it is not confusing, we blur the distinction between an algebra and its carrier, writing statements like `$a \in \Ag$'.  We also  blur the distinction between modal formulas and \bao-terms, and henceforth we will simply use $\neg$, $\vee$, and $\wedge$ for the complement, join, and meet, respectively, in our algebras, trusting that no confusion will arise. In an \ma, we take $\Diamond$ to be the operator and $\Box$ to be a dual operator, defined by $\Box a =\neg\Diamond\neg a$. In \baos, we may add indices to distinguish between multiple operators, e.g., taking $\langle 0\rangle$ and $\langle 1\rangle$ to be operators and $[0]$ and $[1]$ to be their duals.

The generic completeness result described above made the algebraic semantics historically the first to be studied, prior to the invention of Kripke frames (see \citealt[\S~3]{Goldblatt03:jal}; \citealt[\S~1.7]{Blackburn2001}).  However, one can also obtain a generic completeness result with respect to the general frames of \S~\ref{sec:VB}. This result is implicit already in the work of J\'onsson and Tarski  
\citeyearpar{Jonsson1952a,Jonsson1952b}, who proposed an extension of Stone's Representation Theorem from Boolean algebras to \baos. The general frames obtained via this representation are known as \emph{descriptive} frames; thus, every normal modal logic is sound and complete with respect to a class of descriptive frames.  Furthermore, for a large class of modal axioms/equations, especially so-called Sahlqvist axioms and their various generalizations (see, e.g., \citealt{ConradieGV06} or \citealt{ConradieGP14} and references therein), one can in addition observe their \emph{persistence} in passing from a descriptive frame to its underlying Kripke frame. This phenomenon is known in the contemporary literature as \emph{canonicity}  or \emph{d-persistence} (\citealt[Ch.~10]{Chagrov1997}, \citealt[Ch.~5]{Blackburn2001}), but in hindsight the J\'onsson-Tarski work can be seen as its earliest study. In short, algebra combined with duality theory provides a viable route towards Kripke completeness results for suitably well-behaved logics. On the other hand, as far as \emph{weak} completeness (which is the main subject of the present paper) with respect to \emph{finite} models is concerned,  it is not necessary to phrase such completeness results in algebraic terms or to involve the Stone-J\'onsson-Tarski duality in  the proof. Think, e.g., of tableaux-style extraction of countermodels from failed proof search in suitable Gentzen-style calculi (in fact, close to Kripke's original work) or the technique of \emph{normal forms}. Such finitary approaches are  not restricted to d-persistent logics.  See \citealt{Fine1975}, \citealt{Moss07}, and \citealt{BezhanishviliG14}; the relationship with duality theory and construction of free algebras 
  is discussed in  \citealt{Ghilardi1995}, \citealt{Bezhanishvili2007}, and \citealt{CoumansVG13}.

\begin{remark}
Henkin-style general-frame strong completeness  of logics \emph{in countably many variables} is equivalent to the weak K\"{o}nig Lemma even over $\nml{RCA}_0$, the weak base theory for reverse mathematics \cite[IV.3.3]{Simpson2009}. Recall that  the weak K\"{o}nig Lemma holds in $\nml{ZF}$ or even in the Zermelo set theory (i.e., $\nml{ZF}$ without replacement). On the other hand, for uncountably many propositional variables, one needs 
representation theorems in the style of J\'onsson and Tarski. They rely on the Ultrafilter Theorem, or equivalently, the Boolean Prime Ideal Theorem ($\nml{BPI}$). It would thus seem that completeness based on canonicity is rather non-constructive, but with some care it is possible to prove more fine-grained results along these lines---see \citealt{GhilardiM:97} and  \citealt{Suzuki2010} or strong completeness with respect to the general \textit{possibility frames} of \citealt{Holliday2015}. It is also worth mentioning that strong Kripke completeness does not imply d-persistence, as in the case of the tense logic of the reals \citep{Wolter96:properties}; if one is willing to extend the notions of strong completeness and canonicity to neighborhood frames, another counterexample is provided by the McKinsey logic \citep{Surendonk01}.
\end{remark}
 
In the reverse direction, we of course do not need Kripke frames (or any other semantics) as an intermediate step in proving algebraic completeness; the Lindenbaum-Tarski construction provides a direct route. Nevertheless, Kripke completeness results can be reformulated and understood from an algebraic point of view: they establish that the equational class (\emph{variety}) of \baos\ corresponding to a given modal logic is determined by its elements with special additional properties. In other words, they show that when looking for algebraic models refuting a given formula/equation, one can restrict attention to a well-behaved subclass of algebras. 
 
Let us discuss this in more detail. Recall the standard construction associating with a given Kripke frame $\ffr = \langle W, R \rangle$ its dual \ma\ $\ffr^+$, whose universe is  $\wp(W)$, whose Boolean operations are interpreted using the set-theoretic ones, and whose operator is defined by $\Diamond X= R^{-1}[X]$. This is just a special case of taking the dual of a general frame $\gfr=\langle W,R,\admis{W}\rangle$, where the \ma\ in question is provided by $\admis{W}$; in the case of Kripke frames, $\admis{W}=\wp(W)$. As observed already by J\'onsson and Tarski 
\citeyearpar{Jonsson1952a,Jonsson1952b}, such an \ma\ always has the following three special (and mutually independent) properties.

\begin{itemize}
\item[($\mathcal{C}$)]\textit{lattice-completeness}: given any set $X$ of elements of $\Ag$, its \emph{join} $\bigvee X$ exists in $\Ag$. This also implies the existence of arbitrary meets. 
\item[($\mathcal{A}$)]\textit{atomicity}:  any non-bottom element is above an \emph{atom}, i.e., a minimal non-bottom element.
\item[($\mathcal{V}$)] \textit{complete additivity}: for any set $X$ of elements, if $\bigvee X$ exists, then $\bigvee\{\Diamond x\mid x\in X\}$ exists and \[\Diamond\bigvee X=\bigvee\{\Diamond x\mid x\in X\}.\]
\end{itemize}
For complete \mas, complete additivity reduces to distributivity of $\Diamond$ over arbitrary joins. Of course, $\mathcal{V}$ can be equivalently stated with $\bigwedge$ replacing $\bigvee$ and $\Box$ replacing $\Diamond$.

\begin{remark}
In the case of (duals of) Kripke frames, where every subset is admissible, $\bigvee$ is simply $\bigcup$. Nevertheless---and this is an important point!---it does not have to be the case with general frames whose $\admis{W}$ is lattice-complete. In particular, in any descriptive frame associated with an infinite \cfC-\bao, there are instances of joins and meets not coinciding with unions and intersections. The atoms of (duals of) Kripke frames are obviously singleton sets $\{x\}$. Again, for arbitrary general frames this does not have to be the case, but this is less crucial: in \emph{differentiated} general frames, which are in an important sense the only relevant ones, admissible atoms have to be singleton sets.
\end{remark}

\begin{remark}\label{rem:V_n} As we shift from unimodal to polymodal contexts, $\mathcal{C}$ will be the class of complete \baos\ with the appropriate number of operators in the context, and similarly for $\mathcal{A}$, $\mathcal{V}$, etc. In principle, when we say a polymodal logic is \textit{$\mathcal{V}$-complete}, we should mean that it is complete with respect to \baos\ in which \textit{every operator} is completely additive. But most of the time, complete additivity of \emph{all} operators occurring is not needed, and we may wish to be more fine-grained: given a logic $\nml{L}$ with modalities indexed by natural numbers, we can say that $\nml{L}$ is \textit{$\mathcal{V}_n$-complete} if it is $\mathcal{X}$-complete, as in Definition \ref{def:complete}, where $\mathcal{X}$ is the class of  \baos\ in which the $n$-th operator is completely additive. 
\end{remark}
 
As it turns out, the combination of the three properties above is a defining feature of duals of Kripke frames. One can say even more: the category of Kripke frames with bounded morphisms is dually equivalent to that of \cfC\cfA\cfV-\baos\ with \emph{complete} morphisms \citep{Thomason75:categories}. In particular, taking any Kripke frame/\cfC\cfA\cfV-\bao, converting it into its dual \cfC\cfA\cfV-\bao/Kripke frame, and then going back produces an output isomorphic to the original input. Therefore, Kripke completeness is just \cfC\cfA\cfV-completeness.

In this way, we are led to the first of the two angles on Kripke incompleteness discussed in the introduction:  the semantic angle.  Given that the properties \cfC, \cfA, and \cfV\ are independent of each other, will arbitrary combinations of these three lead to distinct notions of completeness, each more general than Kripke completeness but less general than algebraic completeness?  Or is the propositional modal language too coarse to care about  differences between all or at least some of these semantics? And how about other notions contained somewhere in between? For example, \cfC\ can be weakened to $\omega$-completeness ($\omega\mathcal{C}$), i.e., closure under \emph{countable} meets and joins. One can then ask if there are logics that are $\omega\mathcal{C}$-complete but not \cfC-complete. For another important example, consider the property \cfT\ of \emph{admissibility of residuals/conjugates}.  Recall that $\goth{A}$
\emph{admits conjugates} if there is a function
$\resoper\colon \goth{A} \mapsto \goth{A}$ such that for every $a, b
\in \goth{A}$, $a \wedge \Diamond b = \bot$ iff $\resoper a
\wedge b = \bot$. Alternatively, we can say that an algebra
\emph{admits residuals} if there is a function
$\resdual\colon \goth{A} \mapsto \goth{A}$ such that for every $a, b
\in \goth{A}$, $a \leq \resdual b$ iff $\Diamond a
\leq b$. These two definitions are equivalent, taking $\resdual a =
\neg\resoper\neg a$.\footnote{Observe that we do not require that residuals
are term-definable. This is the difference between \baos\
admitting residuals and Jipsen's \citeyearpar{Jipsen93} \emph{residuated} \baos.} Some well-known facts include $\cfT \subseteq \cfV$ (i.e., admissibility of residuals implies complete additivity) and  \clfr{CT} = \clfr{CV} (i.e., in the presence of lattice-completeness, the converse implication also holds). Once again, one may ask: how does \cfT-completeness relate to all the other completeness notions?

As mentioned in \S~\ref{sec:intro}, a systematic investigation into these questions has been undertaken by Litak \citeyearpar{Litak2004,Litak2005,Litak2005b,Litak2008}, unifying, expanding, and building on earlier results by Thomason, Fine, Gerson, van Benthem, Chagrova, Chagrov, Wolter, Zakharyaschev, Venema, and  other researchers. There is no place here to discuss most of the results in detail, but an executive summary of those most relevant for the present paper is as follows:

\begin{itemize}
\item almost any conceivable combination of the above properties of \baos\ leads to a distinct notion of completeness. In other words, for almost any pair of such combinations,  there is a logic complete in one sense, but not in the other;
\item the Blok Dichotomy (\S~\ref{sec:Blok}), the  result of Wim Blok showing that Kripke incompleteness is in a certain mathematical sense the norm rather than an exception among normal modal logics, generalizes to most of these weaker notions of completeness;
\item many of these notions admit syntactic characterizations in terms of conservativity of certain types of extensions, e.g., \cfA\cfV-completeness in terms of conservativity of minimal nominal extensions (\S~\ref{subsec:hybrid}), \cfT-completeness in terms of conservativity of minimal tense extension (\S~\ref{subsec:tense}), or $\omega\mathcal{C}$-completeness in terms of conservativity of minimal infinitary extensions with countable conjunctions and disjunctions.
\end{itemize}

The possibility of \cfV-incompleteness, however, was left completely open, and in fact it was the sole reason for the ``\emph{almost} any'' hedge above (\citealt{Litak2004}, \citealt[Ch. 9]{Litak2005b}, \citealt[\S~7]{Litak2008}). Discussing this line of research in the \emph{Handbook of Modal Logic},  Venema \citeyearpar[\S~6.1]{Venema2007} singled out a slightly stronger version of the same question: whether there are $\mathcal{V}$-inconsistent logics, i.e., normal modal logics that are not sound over any $\mathcal{V}$-\bao.

Why was this question so puzzling? First of all, note that while a free algebra on infinitely many generators in any variety of \baos\ can never be lattice-complete or atomic, it can be completely additive.\footnote{\label{fn:FreeAlgebras}For a characterization of when the Lindenbaum-Tarski algebra of a normal modal logic is completely additive, see \citealt[\S~7.2]{Holliday2015}. Ghilardi \citeyearpar{Ghilardi1995} showed that free algebras in the variety of all \baos\ are $\mathcal{T}$-\baos\ (and hence $\mathcal{V}$-\baos), while Bezhanishvili and Kurz \citeyearpar{Bezhanishvili2007} extended this to all varieties of \baos\ axiomatized by rank-1 formulas. Holliday \citeyearpar{Holliday2014} added an analogous result for $\nml{KT}$, $\nml{K4}$, $\nml{KD4}$, and $\nml{S4}$ (and reproved it for $\nml{K}$, $\nml{KD}$, and all extensions of $\nml{KB}$).} One can imagine that if \cfV\ is not inconsistent with freeness, then there might be a general way of turning any \bao\  into a completely additive one without changing the set of valid equations. But there are other ways in which complete additivity seemed somewhat intangible. Unlike its closest relatives \cfA\cfV\ and \cfT, for which van Benthem's logic $\vblog$ can be shown to be incomplete, \cfV\ did not seem definable in a language with a usable model theory.

Let us make this more precise. There is an obvious  first-order correspondence language for \baos, whose connectives can be written 
as  $\forall, \exists, \llimp, \lland, \llor, \llnot$ (to avoid notational clashes with \bao-terms and modal formulas). For classes of algebras definable in this language, one can even blur the distinction between the class itself and its defining formula.
\cfA, \clfr{AV}, and \cfT\ are first-order properties, in fact even  $\forall\exists\forall$-properties. But how could one define \cfV\ without using infinitary formulas (of unrestricted cardinality!) or a powerful second-order language (with \emph{full} rather than Henkin semantics)?

\section{$\cfV$ as an Elementary Class}\label{sec:R&V}

Surprisingly, complete additivity is in fact a \textit{first-order} property. To prove this, it will be helpful to use some abbreviations for describing relations between elements of a \bao\ $\mathfrak{A}$. Let lower-case letters range over elements of $\mathfrak{A}$ and define:
\medskip
\begin{center}
\begin{tabular}{>{$}r<{$}@{ \hspace{1cm} stands for \hspace{1cm} }>{$}l<{$}}
a \blid{b}  & a \neq \bot \,\lland\, a\leq b \\
\exists a \blid{b}\; \alpha & \exists a\, (a \blid{b} \,\lland\, \alpha) \\
\forall a \blid{b}\; \alpha & \forall a\, (a \blid{b} \llimp \alpha).
\end{tabular}
\end{center}
\medskip

Consider a property of \baos\ formulated in our correspondence language as follows: 

\begin{center}
$\clfr{R}\colon \qquad \forall a,b \,\big((a \wedge \Diamond b \neq \bot) \llimp \exists c \blid{b} \,\forall d \blid{c} \,(a \wedge \Diamond d \neq \bot)\big)$. 
\end{center}
The origin of this condition is in the duality theory for classes of \baos\ and \textit{possibility frames} in \citealt{Holliday2015}. There the condition is viewed as follows. Given a \bao\ $\mathfrak{A}$, define a binary relation $R_\Diamond$ on the universe of $\mathfrak{A}$ by: $aR_{\Diamond}c$ iff for all $ d\preceq c$, we have $a\wedge \Diamond d\neq \bot$. Then $\mathfrak{A}$ satisfies $\mathcal{R}$ iff whenever $a\wedge \Diamond b\neq \bot$, there is a $c\preceq b$ such that $aR_{\Diamond}c$. Any such \bao\ can be turned into a possibility frame with the accessibility relation provided by $R_{\Diamond}$ and the validity relation coinciding with  that of    the original algebra.

\begin{remark} One may find a contrapositive formulation of $\mathcal{R}$  intuitive: 
\[
\forall a,b\, ((\forall c\preceq b\,\exists d\preceq c \;\Diamond d\leq a)\Rightarrow  \Diamond b\leq a)
\]
  (cf. Lemma~\ref{lem:Rref} for a $\Box$ reformulation).
\end{remark}

We will now prove that $\mathcal{R}$ is equivalent to complete additivity. 

\begin{theo} \label{th:rinv}
$\clfr{R}$ implies $\cfV$.
\end{theo}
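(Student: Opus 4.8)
The plan is to show directly that $\cfR$ forces the defining equation of $\cfV$. So I would fix a set $X$ of elements of a \bao\ $\Ag$ satisfying $\cfR$, assume that the join $s\deq\bigvee X$ exists, and prove that $\bigvee\{\Diamond x\mid x\in X\}$ exists and equals $\Diamond s$. Since $\Diamond$ is an operator it is monotone, so $\Diamond x\leq\Diamond s$ for every $x\in X$; hence $\Diamond s$ is already an upper bound of $\{\Diamond x\mid x\in X\}$, and the only real work is to show it is the \emph{least} upper bound.

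So let $u$ be an arbitrary upper bound of $\{\Diamond x\mid x\in X\}$ and suppose for contradiction that $\Diamond s\not\leq u$. Then $a\deq\neg u$ satisfies $a\wedge\Diamond s\neq\bot$, so applying $\cfR$ with this $a$ and with $b\deq s$ produces a witness $c\blid s$, i.e.\ an element with $\bot\neq c\leq s$, such that $a\wedge\Diamond d\neq\bot$ for every $d\blid c$. The key intermediate observation is that a nonzero element below $\bigvee X$ cannot be disjoint from every member of $X$: if $c\wedge x=\bot$ for all $x\in X$, then each $x\leq\neg c$, so $s=\bigvee X\leq\neg c$, forcing $c=c\wedge s=\bot$, a contradiction. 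Thus I can pick $x\in X$ with $d\deq c\wedge x\neq\bot$; then $d\blid c$, so $\cfR$ yields $a\wedge\Diamond d\neq\bot$. But $d\leq x$ and $\Diamond$ is monotone, so $\Diamond d\leq\Diamond x\leq u$, whence $a\wedge\Diamond d\leq\neg u\wedge u=\bot$ --- the desired contradiction. Therefore $\Diamond s\leq u$, so $\Diamond s$ is the least upper bound of $\{\Diamond x\mid x\in X\}$, which is precisely $\cfV$.

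There is no genuinely hard step here: the argument is a short unwinding of the two definitions. The one point deserving care --- and really the reason $\cfR$ is the correct first-order surrogate for $\cfV$ --- is the use of the witness $c$ supplied by $\cfR$: one must descend from $c$ to a nonzero piece $d\blid c$ that lies under a \emph{single} element $x\in X$, rather than merely under the join $s$, and this is exactly what the join-disjointness observation provides. The dual statement of $\cfV$ with $\bigwedge$ and $\Box$ then follows by the usual De Morgan dualization (or directly from the $\Box$-reformulation of $\cfR$ recorded after its statement).
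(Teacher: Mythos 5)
Your proof is correct and is essentially the paper's own argument run in the other direction: the paper assumes $\mathcal{V}$ fails and refutes $\mathcal{R}$ by contraposition, while you assume $\mathcal{R}$ and show $\Diamond\bigvee X$ is the least upper bound by contradiction, but the key step --- intersecting the witness $c$ with $\bot \neq c \leq \bigvee X$ supplied by $\mathcal{R}$ with a single $x\in X$ to obtain a nonzero $d = c\wedge x \leq c$ whose $\Diamond d$ sits below the competing upper bound --- is identical in both. Your elementary observation that a nonzero element below $\bigvee X$ must meet some member of $X$ is just an unwinding of the join-infinite distributive law that the paper invokes at the same point.
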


\begin{proof}
We prove this by contraposition.  Assume $\Ag$ is not a $\mathcal{V}$-\bao. This means there is a $B\subseteq\Ag$ such that $\bigvee B$ exists in $\Ag$, but there is an $a \in \Ag$ such that:
\begin{itemize}
\item[(i)] for all $b\in B$, $\Diamond b \leq a$
\item[(ii)] $\Diamond\bigvee B \not\leq a$.
\end{itemize}
By (ii), $\neg a\wedge \Diamond\bigvee B \neq\bot$, so in order to refute \clfr{R}, it is enough to show that 
\begin{equation*}
\forall c \blid{\bigvee B} \,\exists d \blid{c} \,(\neg a \wedge \Diamond d = \bot).
\end{equation*}
Pick a $c \blid{\bigvee B}$, so $c \wedge \bigvee B\neq\bot$. By the join-infinite distributive law
\[
c \wedge \bigvee B = \bigvee\{c \wedge b \mid b \in B\}
\]
 holding in any Boolean algebra, it follows that there is a $b \in B$ such that $d := b \wedge c$ is not $\bot$, so $d\blid{c}$. But then
$$
\Diamond d = \Diamond (b \wedge c) \leq \Diamond b \leq a
$$
by (i), so $\neg a\wedge \Diamond d=\bot$.
\end{proof}

\begin{theo}\label{thm:VtoR}
$\cfV$ implies $\clfr{R}$.
\end{theo}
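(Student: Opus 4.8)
The plan is to prove the contrapositive of Theorem~\ref{thm:VtoR}: from a failure of $\clfr{R}$ in a \bao\ $\Ag$, manufacture a set of elements witnessing the failure of complete additivity. So I would start by unpacking the negation of $\clfr{R}$: there are $a,b\in\Ag$ with $a\wedge\Diamond b\neq\bot$ such that for every $c\blid b$ there is some $d\blid c$ with $a\wedge\Diamond d=\bot$.

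The crucial move is to guess the right witnessing set. I would take $B\deq\{d\in\Ag\mid d\leq b\text{ and }a\wedge\Diamond d=\bot\}$ --- the elements below $b$ whose $\Diamond$-image misses $a$ --- and show that $\bigvee B$ exists and in fact equals $b$. One direction is trivial ($\bot\in B$, so $B\neq\varnothing$, and $b$ is an upper bound of $B$); for the other, given any upper bound $u$ of $B$ with $b\not\leq u$, the element $c\deq b\wedge\neg u$ is a nonzero element below $b$, so $c\blid b$ and the failure of $\clfr{R}$ hands me a $d\blid c$ with $a\wedge\Diamond d=\bot$, i.e.\ $d\in B$; but $d\leq c\leq\neg u$ then forces $d=\bot$, contradicting $d\neq\bot$. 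Hence $b\leq u$, and $\bigvee B=b$.

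Then I would invoke the hypothesis $\cfV$ with $X=B$: since $\bigvee B$ exists, $\bigvee\{\Diamond d\mid d\in B\}$ exists and $\Diamond b=\Diamond\bigvee B=\bigvee\{\Diamond d\mid d\in B\}$. Since every $d\in B$ satisfies $\Diamond d\leq\neg a$ by construction, $\neg a$ is an upper bound of $\{\Diamond d\mid d\in B\}$, whence $\Diamond b\leq\neg a$, i.e.\ $a\wedge\Diamond b=\bot$ --- contradicting the choice of $a,b$. This shows that $\Ag$ cannot be completely additive.

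I do not expect a genuine obstacle here; the one point requiring care is the verification that $\bigvee B=b$, and specifically the observation that one must feed the ``gap'' element $b\wedge\neg u$ (rather than $b$ itself) into the negated instance of $\clfr{R}$. It is also worth noting up front that $a\wedge\Diamond b\neq\bot$ guarantees $b\neq\bot$, so that the instances $c\blid b$ used in the argument are legitimate and the degenerate case $B=\{\bot\}$ does not cause trouble in the relevant step.
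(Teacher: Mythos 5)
Your proposal is correct and follows essentially the same route as the paper: the same witnessing set $B$ of elements $d\leq b$ with $a\wedge\Diamond d=\bot$, the same verification that $\bigvee B=b$ via the ``gap'' element $b\wedge\neg b'$ fed into the negated instance of $\clfr{R}$, and the same conclusion that complete additivity would force $\Diamond b\leq\neg a$. The only cosmetic differences are that you allow $\bot\in B$ and phrase the final step as a contradiction with $a\wedge\Diamond b\neq\bot$ rather than directly exhibiting the failure of the join identity.
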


\begin{proof}
Again we reason by contraposition. Assume for some $a,b \in \Ag$ that
\begin{equation} \label{eq:contraR}
(a \wedge \Diamond b \neq \bot) \lland \forall c \blid{b} \,\exists d \blid{c} \,(a\wedge \Diamond d = \bot).
\end{equation}
Consider $B := \{ d \blid{b} \mid a \wedge \Diamond d = \bot\}$. To refute $\mathcal{V}$ it is enough to show that
\begin{equation*}
\bigvee B = b,
\end{equation*}
for then $a\wedge \Diamond \bigvee B=a\wedge \Diamond b\neq\bot$, yet for all $d\in B$, $a\wedge \Diamond d=\bot$, which implies \[\Diamond \bigvee B\not = \bigvee\{\Diamond d\mid d\in B\}.\] 
By definition of $B$, $b$ is an upper bound of $B$, so we need only show that it is the least. Suppose there is an upper bound $b'$ of $B$ such that $b\not\leq b'$. Hence $c:= b\wedge \neg b'$ is not $\bot$, so $c \blid{b}$. Then  \refeq{eq:contraR} implies there is a $d\blid{c}$ such that $d\in B$. Therefore $d\leq \neg b'$, and since $b'$ is an upper bound of $B$, $d\leq b'$. But then $d=\bot$, contradicting $d \blid{c}$.
\end{proof}

\begin{remark}
$\mathcal{T}$  implies $\mathcal{R}$ in a more direct way: where $\resoper$ is the conjugate of $\Diamond$, take $c:= \resoper a \wedge b$, so $a\wedge\Diamond b\neq\bot$ implies $\resoper a \wedge b\neq\bot$ and hence $c\blid{b}$; then if $d \blid{c}$ were such that $a\wedge\Diamond d=\bot$, we would have  $\resoper a \wedge d=\bot$ and hence $d\leq \resoper a \leq -d$, contradicting $d\blid{c}$. We shall see a derivation of a similar form at the end of \S~\ref{subsec:PureModal}.
\end{remark}

\begin{corollary} \label{cor:reqv}
$\cfV = \clfr{R}$.
\end{corollary}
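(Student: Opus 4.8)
The statement to prove is Corollary~\ref{cor:reqv}, which asserts $\cfV = \clfr{R}$ as classes of \baos{} (equivalently, as first-order conditions in the correspondence language). This is an immediate consequence of the two preceding theorems, so the "proof" is essentially a one-line observation chaining them together.

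The plan is as follows. Theorem~\ref{th:rinv} establishes the inclusion $\clfr{R} \subseteq \cfV$ (every \bao{} satisfying $\clfr{R}$ is completely additive), and Theorem~\ref{thm:VtoR} establishes the reverse inclusion $\cfV \subseteq \clfr{R}$ (every completely additive \bao{} satisfies $\clfr{R}$). Since the two conditions have the same class of models, they define the same class, so $\cfV = \clfr{R}$. There is no residual work: both directions were fully discharged above by contraposition arguments, the first using the join-infinite distributive law in Boolean algebras together with monotonicity of $\Diamond$, and the second by exhibiting $B := \{d \blid{b} \mid a \wedge \Diamond d = \bot\}$ and showing $\bigvee B = b$.

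Since there is no real obstacle here, I will simply state the proof as the conjunction of the two theorems. If one wants to be slightly more careful about the phrasing "$\cfV = \clfr{R}$" as an identity of \emph{first-order-definable classes} (recall that $\clfr{R}$ is visibly a $\forall\exists\forall$ sentence, whereas $\cfV$ was a priori only second-order), the content is precisely that $\cfV$ and $\clfr{R}$ are satisfied by exactly the same \baos{} — which is what the two theorems give — and hence $\cfV$ is, after all, elementary, indeed $\forall\exists\forall$. The only thing worth flagging is that this works uniformly whether we think of the unary operator in isolation or, in the polymodal setting (cf.\ Remark~\ref{rem:V_n}), of a single designated operator $\Diamond_n$, since both theorems were proved for an arbitrary operator $\Diamond$ and never referred to the other operators of the \bao{}.

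\begin{proof}
Immediate from Theorems~\ref{th:rinv} and~\ref{thm:VtoR}: the former gives $\clfr{R}\subseteq\cfV$ and the latter gives $\cfV\subseteq\clfr{R}$, so the two conditions are satisfied by exactly the same \baos{} and hence define the same class. In particular, $\cfV$ is an elementary class, defined by the $\forall\exists\forall$ sentence $\clfr{R}$.
\end{proof}
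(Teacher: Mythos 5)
Your proof is correct and is exactly what the paper intends: the corollary is stated immediately after Theorems~\ref{th:rinv} and~\ref{thm:VtoR} with no separate argument, since it is just the conjunction of the two inclusions. Your additional remarks on elementarity and the polymodal case are accurate but not needed for the statement itself.
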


Not only does this show that $\mathcal{V}$ is a first-order property but furthermore that it is of a rather convenient syntactic shape: $\forall\exists\forall$. Such conditions are particularly convenient for reformulation as \emph{non-standard} inference rules, which we will discuss in \S~\ref{sec:newsyntax}. 

\begin{remark} In response to our proof that $\mathcal{V}$ is equivalent to the first-order property $\mathcal{R}$, Johan van Benthem (p.~c.) devised a proof of the first-orderness of $\mathcal{V}$ in the style of correspondence theory \citep{Benthem1980}.  First note that in the equality for $\mathcal{V}$, \[\Diamond\bigvee X=\bigvee\{\Diamond x\mid x\in X\},\]
the $\geq$ direction is immediate from the monotonicity of $\Diamond$. Thus, using the equivalence of  $\Diamond\bigvee X\leq\bigvee\{\Diamond x\mid x\in X\}$ and $\forall z\big( \bigvee\{\Diamond x\mid x\in X\}\leq z\Rightarrow \Diamond\bigvee X\leq z\big)$, we can rewrite $\mathcal{V}$ as the following sentence in the second-order language of  \baos{}:
\begin{equation} \forall y\,\forall z\,\forall X\big(\big(y=\bigvee X\;\&\; \forall x\, (x\in X\Rightarrow\Diamond x\leq z)\big)\Rightarrow \Diamond y\leq z\big),\label{jvb1}\end{equation}
where `$y=\bigvee X$' abbreviates the first-order sentence expressing that $y$ is the least upper bound of $X$. Since $X$ does not occur in the consequent of the outer conditional in (\ref{jvb1}), we can rewrite (\ref{jvb1}) as 
\begin{equation}\forall y\,\forall z\,\big( \exists X \big(y=\bigvee X\;\&\; \forall x\,(x\in X\Rightarrow\Diamond x\leq z)\big)\Rightarrow \Diamond y\leq z\big).\label{jb2}\end{equation}
Now we observe that the antecedent of the outer conditional in (\ref{jb2}) is equivalent to a first-order sentence. For there exists an $X$ as in the antecedent iff $y$ is the least upper bound of the following first-order definable set:
\[X_{y,z}:=\{x\in\mathfrak{A}\mid x\leq y\;\&\;\Diamond x\leq z\}.\]
One direction of the `iff' is immediate. For the other, if a set $X$ is such that $y=\bigvee X$ and $ \forall x\, (x\in X\Rightarrow\Diamond x\leq z)$, then $X\subseteq X_{y,z}$ and hence $\bigvee X\leq \bigvee X_{y,z}$. Then since $y=\bigvee X$ and $\bigvee X_{y,z}\leq y$, we have $y=\bigvee X_{y,z}$, and by definition of $X_{y,z}$, we have $ \forall x\, (x\in X_{y,z}\Rightarrow\Diamond x\leq z)$. Thus, if there is any witness for the $\exists X$, then $X_{y,z}$ is a witness. Hence (\ref{jb2}) can be equivalently rewritten as the following first-order sentence:
\begin{equation}
\forall y\, \forall z\,( y=\bigvee X_{y,z}\Rightarrow \Diamond y\leq z),\label{jvb4}
\end{equation}
where `$y=\bigvee X_{y,z}$' abbreviates the first-order sentence expressing that $y$ is the least upper bound of $X_{y,z}$. This completes the proof that $\mathcal{V}$ is first-order. Further manipulations are required to show that (\ref{jvb4}) is equivalent to $\mathcal{R}$ in particular.\end{remark}

\section{$\cfV$-Incompleteness}\label{sec:V-inc}

Using the equivalence of $\mathcal{V}$ and $\mathcal{R}$, we can now prove a result from which all of our $\mathcal{V}$-incompleteness theorems will derive. 

\begin{theo}\label{thm:great} Let $\mathfrak{A}$ be a \bao{}, $a\in\mathfrak{A}$, and $\zedm$ and $\ondm$ operators on $\mathfrak{A}$ (not necessarily in the signature) such that 
\begin{enumerate}
\item\label{great1} $\ondm$ is completely additive, and 
\item\label{great2} for any $x\in\mathfrak{A}$, $a\leq \onbx (\zebx(\zebx x \to x)\to x)$.
\end{enumerate}
Then $a\leq \onbx \bot$.
\end{theo}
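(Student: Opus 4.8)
The plan is to mimic the proof of Lemma~\ref{lem:1} inside an arbitrary \bao, with the first-order reformulation $\clfr{R}$ of complete additivity (Corollary~\ref{cor:reqv}), applied to $\ondm$, taking over the role that the singleton valuation $\llbracket p\rrbracket=\neg\{y\}$ plays in the Kripke-frame argument. I argue by contradiction. Suppose $a\not\leq\onbx\bot$; since $\onbx\bot=\neg\ondm\top$, this says $a\wedge\ondm\top\neq\bot$. As $\ondm$ is completely additive it satisfies $\clfr{R}$ by Corollary~\ref{cor:reqv}, so instantiating $\clfr{R}$ at the pair $(a,\top)$ yields a nonzero element $c$ with the following ``pseudo-atom'' property, call it $(\star)$: for every $d\blid{c}$, $a\wedge\ondm d\neq\bot$. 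Here $c$ is the surrogate for the point $y$ of Lemma~\ref{lem:1}, and $(\star)$ is the first-order stand-in for ``$\{y\}$ is an atom accessible from $x$''---strictly weaker than $c$ being an atom, but exactly what the argument needs.

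Next I feed $\neg c$ into hypothesis~(\ref{great2}), obtaining $a\leq\onbx t$ where $t:=\zebx(\zebx\neg c\to\neg c)\to\neg c$. The engine of the proof is the following one-line fact, used twice: \emph{if $a\leq\onbx t$ and $a\wedge\ondm u\neq\bot$, then $u\wedge t\neq\bot$}. Indeed, $a\leq\onbx t$ means $a\wedge\ondm\neg t=\bot$, so splitting $u=(u\wedge t)\vee(u\wedge\neg t)$ and distributing $\ondm$ gives $a\wedge\ondm u=a\wedge\ondm(u\wedge t)$, whence $\ondm(u\wedge t)\neq\bot$ and so $u\wedge t\neq\bot$; only the fact that $\ondm$ is an operator is used. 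Applying this with $u:=c$---legitimate, since $(\star)$ at $d=c$ gives $a\wedge\ondm c\neq\bot$---yields $c\wedge t\neq\bot$. A routine Boolean calculation rewrites $c\wedge t$ as $c\wedge\zedm e$ with $e:=\zebx\neg c\wedge c$ (using $c\wedge\neg c=\bot$ and $\neg\zebx(\zebx\neg c\to\neg c)=\zedm(\zebx\neg c\wedge c)$). Hence $\zedm e\neq\bot$, so $e\neq\bot$ because $\zedm\bot=\bot$, and clearly $e\leq c$.

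Thus $e\blid{c}$, so $(\star)$ gives $a\wedge\ondm e\neq\bot$, and a second application of the engine fact (with $u:=e$) gives $e\wedge t\neq\bot$. This contradicts a direct computation: since $\neg c\leq(\zebx\neg c\to\neg c)$ and $\zebx$ is monotone, $e=\zebx\neg c\wedge c\leq\zebx(\zebx\neg c\to\neg c)$, and together with $e\leq c$ this makes $e$ disjoint from both disjuncts of $t=\neg\zebx(\zebx\neg c\to\neg c)\vee\neg c$, i.e.\ $e\wedge t=\bot$. The contradiction completes the proof. The argument also makes the asymmetry of the hypotheses transparent: $\zebx$ (equivalently $\zedm$) is used only through monotonicity and $\zedm\bot=\bot$, so it need not be completely additive, whereas the use of $\clfr{R}$ makes complete additivity of $\ondm$ indispensable.

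I expect the only real difficulty to lie in the design rather than in any calculation: one must (i) recognize that $\clfr{R}$ for $\ondm$ is the right first-order replacement for the atom $\{y\}$ of Lemma~\ref{lem:1}, (ii) hit hypothesis~(\ref{great2}) with the instance $x:=\neg c$, and (iii) see that the universally quantified clause of $\clfr{R}$ must be invoked \emph{twice}---first on $c$ to manufacture the auxiliary element $e=\zebx\neg c\wedge c$, then on $e$ to close the loop. Once those choices are made, everything reduces to distributing $\ondm$ over a two-term join and a short monotonicity chain for $\zebx$.
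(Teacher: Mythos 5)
Your proof is correct and takes essentially the same route as the paper's: both arguments apply the first-order condition $\mathcal{R}$ to $\ondm$ with $b:=\top$ to obtain the ``pseudo-atom'' $c$, and both then instantiate hypothesis~(2) at $x:=\neg c$. The only difference is in the bookkeeping: the paper shows directly (via normality of $\zedm$ and a self-substitution trick) that $c\wedge\zebx(\zebx\neg c\to\neg c)\neq\bot$ and then invokes the universal clause of $\mathcal{R}$ just once, whereas you invoke it twice (at $c$ and then at $e=c\wedge\zebx\neg c$) with your ``engine fact'' mediating; the algebraic content, and in particular the observation that only finite additivity and monotonicity of $\zedm$ are needed, is the same.
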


\begin{proof} Suppose that $a \wedge\ondm\top\neq \bot$. Since $\ondm$ is completely additive, by Theorem \ref{thm:VtoR} we can apply condition $\mathcal{R}$ to $\ondm$ with $b:=\top$ to obtain:
\begin{equation}\exists c \neq \bot \,\forall d \blid{c} \, (a \wedge \ondm d \neq \bot).\label{Rapp}
\end{equation}
Pick such a $c$ and observe that if $c \leq \zedm (c \wedge \zebx \neg c)$, then (using this very inequality to substitute $\zedm (c \wedge \zebx \neg c)$ for the first occurrence of $c$ on the right-hand side) we get
$$c \leq \zedm (c \wedge \zebx \neg c) \leq \zedm (\zedm (c \wedge \zebx \neg c) \wedge \zebx \neg c) = \bot,$$ a contradiction. Hence we have that
$$c \wedge \zebx (\zebx \neg c \to \neg c) \neq \bot.$$
Then by taking $d$ in (\ref{Rapp}) to be $c \wedge \zebx (\zebx \neg c \to \neg c)$, we have 

$$a \wedge \ondm (c \wedge \zebx (\zebx \neg c \to \neg c)) \neq \bot,$$
which contradicts condition \ref{great2} in the statement of the theorem.
\end{proof}

The fact that we do not insist on these operators to be primitive allows a lot of flexibility in instantiating this theorem, as we will witness below in Theorems \ref{th:Vinc1}, \ref{th:glbinc}, \ref{th:glsb}, \ref{th:ver}, \ref{th:triv}, and  \ref{th:vblok}. Note that working with this more general notion of operator would in fact allow us to replace $a$ with $\top$ in the above statement without loss of generality: the present statement would follow after replacing $\onbx x$ with $\onbx_a x \deq a \to \onbx x$ (as this transformation preserves complete additivity).

\subsection{The van Benthem Logic} \label{ssec:vb}

We are now ready to prove that the logic $\vblog$ of \S~\ref{sec:VB}, the smallest normal modal logic containing $ \Box\Diamond\top\rightarrow \Box (\Box (\Box p\rightarrow p)\rightarrow p)$, is $\mathcal{V}$-incomplete. To state this result in a more general form, let us borrow notation from \cite{Cresswell84}: let $\nml{ID}$ be the logic of the van Benthem frame $\vbfr$ from Definition \ref{def:VBframe} (see \S~\ref{subsec:Decide} for an explanation of this name).

\begin{theo}\label{th:Vinc1} Any logic between $\vblog$ and $\mathsf{ID}$ is $\mathcal{V}$-incomplete.
\end{theo}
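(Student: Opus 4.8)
The strategy is to derive Theorem~\ref{th:Vinc1} from the general algebraic fact in Theorem~\ref{thm:great}, exactly as the placement of the two results in the paper suggests. Let $\nml{L}$ be any logic with $\vblog\subseteq\nml{L}\subseteq\nml{ID}$. To show $\nml{L}$ is $\mathcal{V}$-incomplete we must exhibit a formula $\varphi$ with $\varphi\notin\nml{L}$ but $\nml{L}\vDash_{\mathcal{V}}\varphi$. The obvious candidate, continuing the thread of \S~\ref{sec:VB}, is $\varphi:=\Box\Diamond\top\to\Box\bot$. That $\varphi\notin\nml{L}$ is already in hand: by Lemma~\ref{lem:2}, $\Box\Diamond\top\to\Box\bot\notin\vblog$, and since it fails on the van Benthem frame $\vbfr$ (the last line of that proof) it is not in $\nml{ID}=\mathrm{Log}(\vbfr)$ either; so for any $\nml{L}$ in the interval, $\varphi\notin\nml{L}$. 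Hence the whole content of the theorem is the $\mathcal{V}$-soundness claim: \emph{every $\mathcal{V}$-\ma\ validating the $\vblog$-axiom also validates $\Box\Diamond\top\to\Box\bot$.}

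For that, fix a $\mathcal{V}$-\ma\ $\Ag$ with $\Ag\vDash\Box\Diamond\top\to\Box(\Box(\Box p\to p)\to p)$, and fix an arbitrary assignment; I want to show the top element is sent to $\Box\Diamond\top\to\Box\bot$, i.e.\ that $\Diamond\top\leq\Diamond\Box\bot$ holds unconditionally in $\Ag$ (the contrapositive/dual form, matching the ``$\Diamond\top\to\Diamond\Box\bot$'' parenthetical in the proof of Lemma~\ref{lem:1}). Equivalently, for every $a\in\Ag$ it suffices to show $a\wedge\Diamond\top\leq\Diamond\Box\bot$; but rather than chase that directly I would invoke Theorem~\ref{thm:great}. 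Apply it with both $\zedm$ and $\ondm$ taken to be the single operator $\Diamond$ of $\Ag$: condition~\ref{great1} is exactly the hypothesis that $\Ag$ is a $\mathcal{V}$-\ma\ ($\Diamond$ is completely additive), and condition~\ref{great2} with $a:=\top$ says $\top\leq\Box(\Box(\Box x\to x)\to x)$ for all $x\in\Ag$ --- but that is precisely $\Ag\vDash\Box(\Box(\Box p\to p)\to p)$ read off at the element $x$, \emph{provided} the antecedent $\Box\Diamond\top$ is also $\top$. So the honest route is: for an arbitrary $a\in\Ag$, set $\onbx_a x:=a\to\Box x$ (which is still completely additive by the remark after Theorem~\ref{thm:great}, or one checks directly that the de Morgan dual $\ondm_a x = a\wedge\Diamond x$ distributes over existing joins); from $\Ag\vDash\Box\Diamond\top\to\Box(\dots)$ one gets $a\wedge\Box\Diamond\top\leq\Box(\Box(\Box x\to x)\to x)$, i.e.\ taking $a:=\Box\Diamond\top$ specifically, $a\leq\onbx_a(\zebx(\zebx x\to x)\to x)$ with $\zebx:=\Box$; Theorem~\ref{thm:great} then yields $a\leq\onbx_a\bot = (a\to\Box\bot)$, i.e.\ $\Box\Diamond\top=a\leq\Box\bot$. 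Since the assignment was arbitrary this gives $\Ag\vDash\Box\Diamond\top\to\Box\bot$.

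Having the $\mathcal{V}$-soundness over all $\mathcal{V}$-\mas, I conclude: if $\sigma\in\nml{L}$ and $\Ag\vDash\nml{L}$ then in particular $\Ag\vDash\sigma_{\vblog}$ (the $\vblog$-axiom, which lies in $\vblog\subseteq\nml{L}$), so $\Ag\vDash\Box\Diamond\top\to\Box\bot$; hence $\nml{L}\vDash_{\mathcal{V}}\Box\Diamond\top\to\Box\bot$, while $\Box\Diamond\top\to\Box\bot\notin\nml{L}$ as noted. By Definition~\ref{def:complete}, $\nml{L}$ is $\mathcal{V}$-incomplete.

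\textbf{Main obstacle.} There is no hard step left once Theorem~\ref{thm:great} is available; the only thing demanding care is the bookkeeping around instantiating Theorem~\ref{thm:great} so that condition~\ref{great2} genuinely holds --- this is why one passes to the relativized box $\onbx_a x=a\to\Box x$ with $a=\Box\Diamond\top$ (or, as the paper notes, replaces $a$ by $\top$ via that relativization in the statement of the theorem itself), and one must confirm that complete additivity of $\Diamond$ survives this relativization, which it does because $a\wedge\Diamond(\bigvee X)=a\wedge\bigvee\{\Diamond x\mid x\in X\}=\bigvee\{a\wedge\Diamond x\mid x\in X\}$ by complete additivity of $\Diamond$ together with the join-infinite distributive law in the Boolean reduct. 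Beyond that, one should state explicitly why $\Box\Diamond\top\to\Box\bot\notin\nml{ID}$ --- namely that $\vbfr\nvDash\Box\Diamond\top\to\Box\bot$ by the last line of the proof of Lemma~\ref{lem:2} --- so that the endpoint $\nml{ID}$ of the interval, not just $\vblog$, fails to contain the witness formula.
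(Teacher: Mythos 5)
Your proof is correct and follows essentially the same route as the paper: instantiate Theorem~\ref{thm:great} with $\zedm=\ondm=\Diamond$ and $a:=\Box\Diamond\top$ to get $\Box\Diamond\top\leq\Box\bot$ in every $\mathcal{V}$-\bao\ validating the $\vblog$-axiom, then use Lemma~\ref{lem:2} to see that $\Box\Diamond\top\to\Box\bot\notin\nml{ID}$ and hence not in any $\nml{L}\subseteq\nml{ID}$. The only difference is your detour through the relativized box $\onbx_a$, which is harmless but unnecessary: with the plain $\onbx:=\Box$ and $a:=\Box\Diamond\top$, condition~\ref{great2} reads $\Box\Diamond\top\leq\Box(\Box(\Box x\to x)\to x)$, which is verbatim the validity of the $\vblog$-axiom, so the theorem applies directly as in the paper.
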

\begin{proof} By Theorem \ref{thm:great}, taking $\ondm=\zedm=\Diamond$ (and hence $\onbx=\zebx=\Box$) and $a:=\Box\Diamond\top$,  any $\mathcal{V}$-\bao\, that validates $\vblog$ is such that $\Box\Diamond\top\leq \Box\bot$, so it validates $\Box\Diamond\top\to\Box\bot$. However, by Lemma \ref{lem:2}, $\Box\Diamond\top\to\Box\bot\not\in\mathsf{L}$ for any logic $\mathsf{L}$ contained in $\mathsf{ID}$. Since $\vblog$ is contained in $\nml{ID}$ by Lemma \ref{lem:2}, the theorem follows. 
\end{proof}

We have now come a long way from the initial Kripke-incompleteness results, i.e., $\mathcal{CAV}$-incompleteness results, of the 1970s. It turns out that each of the properties $\mathcal{C}$ \citep{Litak2004}, $\mathcal{A}$ \citep{Venema2003}, and finally $\mathcal{V}$ gives rise to incompleteness by itself.

Next, we will show that there are syntactically consistent bimodal logics that are $\mathcal{V}$-\textit{inconsistent} in the sense that they are not even sound with respect to any $\mathcal{V}$-\bao. This is not possible in the unimodal case, since by Makinson's Theorem \citep{Makinson1971}, every normal unimodal logic is sound with respect to a Kripke frame---either the single reflexive point or the single irreflexive point---and hence with respect to a $\mathcal{CAV}$-\bao{}. 

For the $\mathcal{V}$-inconsistency result, consider a bimodal language with modalities $\nmbx$ and $\aubx$ and let $\nml{vBe}$ be the smallest normal  logic in this language containing the $\vblog$-axiom for $\nmbx$ and the axiom $\audm(\nmbx\nmdm\top\wedge \nmdm\top)$. In addition, let $\vbefr$ be the general frame for the same language that extends $\vbfr$ such that the accessibility relation for $\audm$ in $\vbefr$ is the universal relation on the frame. Finally, let $\nml{IDe}$ be the logic of $\vbefr$. 

\begin{theo} \label{th:vbinc}\
\begin{enumerate}
\item\label{vbinc1} Any logic extending $\nml{vBe}$ (in particular $\nml{IDe}$) is $\mathcal{V}$-inconsistent, yet $\nml{vBe}$ is consistent.
\item\label{vbinc2} Any logic extended by $\nml{IDe}$ (in particular $\nml{vBe}$) is \cfA-consistent, and $\nml{IDe}$ is \cfA-complete. 
\end{enumerate}
\end{theo}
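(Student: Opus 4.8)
The plan is to derive part (\ref{vbinc1}) by showing that every $\cfV$-\bao\ validating $\nml{vBe}$ is the one-element algebra, and part (\ref{vbinc2}) by exhibiting $\vbefr^+$ as an atomic \bao\ whose logic is $\nml{IDe}$.

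For (\ref{vbinc1}), I would start from an arbitrary $\cfV$-\bao\ $\Ag$ validating $\nml{vBe}$. Since $\Ag$ is a $\cfV$-\bao, the operator $\nmdm$ is completely additive, so I can invoke Theorem~\ref{thm:great} with $\zedm\deq\ondm\deq\nmdm$ (hence $\zebx=\onbx=\nmbx$) and $a\deq\nmbx\nmdm\top$: because the $\vblog$-axiom for $\nmbx$ is validated by $\Ag$ and its antecedent $\nmbx\nmdm\top$ is variable-free, we have $\nmbx\nmdm\top\leq\nmbx(\nmbx(\nmbx x\to x)\to x)$ for every $x\in\Ag$, which is precisely hypothesis~\ref{great2}. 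Theorem~\ref{thm:great} then gives $\nmbx\nmdm\top\leq\nmbx\bot=\neg\nmdm\top$, whence $\nmbx\nmdm\top\wedge\nmdm\top=\bot$. Applying the operator $\audm$, which preserves $\bot$, yields $\audm(\nmbx\nmdm\top\wedge\nmdm\top)=\bot$; but validity of the second $\nml{vBe}$-axiom (also variable-free) forces $\audm(\nmbx\nmdm\top\wedge\nmdm\top)=\top$, so $\bot=\top$ and $\Ag$ is trivial. Any normal logic extending $\nml{vBe}$ is a fortiori validated only by the trivial $\cfV$-\bao, which is the claimed $\cfV$-inconsistency. For consistency of $\nml{vBe}$ itself, I would check that $\vbefr$ validates both axioms: the $\vblog$-axiom for $\nmbx$ is valid over $\vbefr$ since it is valid over $\vbfr$ by Lemma~\ref{lem:2} and adjoining the universal relation for $\audm$ changes neither $\admis{W}$ (already closed under the preimage of the universal relation) nor the truth values of $\nmbx$-only formulas; and the computation in the proof of Lemma~\ref{lem:2}, showing $\llbracket\nmbx\nmdm\top\rrbracket^{\mathcal M}=\{\infty+1\}$ with $\infty+1\in\llbracket\nmdm\top\rrbracket^{\mathcal M}$ in every admissible model $\mathcal M$, makes $\llbracket\nmbx\nmdm\top\wedge\nmdm\top\rrbracket^{\mathcal M}$ nonempty, so the universal modality validates $\audm(\nmbx\nmdm\top\wedge\nmdm\top)$. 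Since $\vbefr$ is based on a nonempty set, it does not validate $\bot$.

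For (\ref{vbinc2}), I would pass to the dual \bao\ $\vbefr^+$, whose Boolean reduct is the $\admis{W}$ of Definition~\ref{def:VBframe}. Its atoms are exactly the singletons $\{w\}$ with $w\in W\setminus\{\infty\}$, and every nonempty admissible set --- being finite with $\infty\notin X$, or cofinite with $\infty\in X$ --- contains an element of $W\setminus\{\infty\}$ and hence sits above such an atom; so $\vbefr^+$ is atomic, i.e.\ an $\cfA$-\bao. Since $\nml{IDe}=\mathrm{Log}(\{\vbefr\})$ is the set of formulas validated by the single algebra $\vbefr^+$, it follows immediately that $\nml{IDe}$ is $\cfA$-complete. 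Finally, any normal logic contained in $\nml{IDe}$ is validated by $\vbefr^+$, a nontrivial $\cfA$-\bao, hence is $\cfA$-consistent; in particular this applies to $\nml{vBe}$, since $\vbefr$ validates its axioms and therefore $\nml{vBe}\subseteq\mathrm{Log}(\{\vbefr\})=\nml{IDe}$.

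I do not expect a serious obstacle, as Theorem~\ref{thm:great} carries the real weight. The conceptual point worth flagging is that the inequality $\nmbx\nmdm\top\leq\nmbx\bot$ extracted from the $\vblog$-axiom, once combined with the ``seriality-witnessing'' axiom $\audm(\nmbx\nmdm\top\wedge\nmdm\top)$, forces the algebra to degenerate --- a phenomenon genuinely unavailable in the unimodal case by Makinson's Theorem \citep{Makinson1971}. The places needing a little care are the routine verifications that $\vbefr$ validates the second axiom and that $\admis{W}$ is atomic, and making explicit that ``$\cfV$-inconsistent'' is meant in the sense that only the trivial $\cfV$-\bao\ validates the logic (equivalently, $\nml{vBe}\vDash_{\cfV}\bot$ while $\bot\notin\nml{vBe}$).
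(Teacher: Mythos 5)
Your proof is correct and follows essentially the same route as the paper's: part 1 is obtained by feeding the $\vblog$-axiom into Theorem~\ref{thm:great} (exactly as in the proof of Theorem~\ref{th:Vinc1}) to get $\nmbx\nmdm\top\wedge\nmdm\top=\bot$ and then letting the axiom $\audm(\nmbx\nmdm\top\wedge\nmdm\top)$ collapse the algebra, with consistency and part 2 both witnessed by the general frame $\vbefr$ and the atomicity of its dual algebra. The paper's own proof is only a two-line sketch, so your fleshed-out verifications (that $\vbefr$ validates the second axiom because $\llbracket\nmbx\nmdm\top\wedge\nmdm\top\rrbracket=\{\infty+1\}$, and that every nonempty admissible set lies above a singleton atom $\{w\}$ with $w\neq\infty$) are exactly the details the authors leave to the reader, and they all check out.
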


\noindent Note that we will prove more powerful results in Theorem \ref{th:conpV} and Corollary \ref{cor:id}.

\begin{proof} For the $\mathcal{V}$-inconsistency of extensions of $\nml{vBe}$, by the proof of Theorem \ref{th:Vinc1}, every $\mathcal{V}$-\bao{} that validates the $\vblog$-axiom for $\nmbx$ is such that $\nmbx\nmdm\top\wedge \nmdm\top=\bot$, so $\audm (\nmbx\nmdm\top\wedge \nmdm\top)=\bot$. For the consistency of $\nml{vBe}$, observe that $\vbefr$ validates $\nml{vBe}$.

For part \ref{vbinc2}, observe that the \bao\ underlying $\vbefr$ is atomic. 
\end{proof}

The logic $\vblog$ was introduced by van Benthem to prove a point about modal incompleteness, which we have pushed all the way to $\mathcal{V}$-incompleteness. Theorem \ref{th:Vinc1} raises the question: are there also ``naturally occurring'' examples of $\mathcal{V}$-incomplete logics?

As soon as we have at least two modal operators at our disposal, the answer turns out to be an emphatic ``yes''.

\subsection{The Provability Logic $\nml{GLB}$}\label{subsec:GLB}

To motivate the main logic of this section, we recall that formulas of propositional modal logic can be translated into sentences of Peano Arithmetic ($\nml{PA}$) as follows: map each atomic  $p_n$ to a sentence of arithmetic, send the modal $\Box$ to the arithmetized provability predicate $\mathrm{Bew}$ of $\nml{PA}$, and make the translation commute with the Boolean connectives in the obvious way. Solovay \citeyearpar{Solovay1976} showed that the modal logic $\nml{GL}$, the smallest normal modal logic containing the L\"{o}b axiom $\Box (\Box p\to p)\to \Box p$ is \textit{arithmetically sound and complete}: a modal formula $\varphi$ is a theorem of $\nml{GL}$ iff for every mapping of atomic sentences $p_n$ to sentences of arithmetic, the induced arithmetic translation of $\varphi$ is a theorem of $\nml{PA}$. Thus, $\nml{GL}$ captures the logic of the provability predicate of $\nml{PA}$. 

Japaridze \citeyearpar{Japaridze1988} introduced a polymodal extension of $\nml{GL}$, the bimodal version of which we will treat here. Let us interpret a bimodal language with operators $[0]$ and $[1]$ in the language of $\nml{PA}$ by sending $[0]$ to the provability predicate $\mathrm{Bew}$ of $\nml{PA}$ as before and sending $[1]$ to a predicate $\omega\mathrm{Bew}$ encoding \textit{provability from $\nml{PA}$ with one application of the $\omega$-rule}.\footnote{Equivalently, $\omega\mathrm{Bew}$ encodes provability from $\nml{PA}$ together with all $\Pi^0_1$ arithmetical truths.} A sentence $\sigma$ of arithmetic is provable in $\nml{PA}$ with one application of the $\omega$-rule if for some formula $\varphi(x)$, $\nml{PA}$ proves $\forall x\varphi(x)\to \sigma$ and proves $\varphi(\mathrm{n})$ for every numeral $\mathrm{n}$. The bimodal system that captures the combined logic of provability and $\omega$-provability in $\nml{PA}$ is the smallest normal bimodal logic containing the following axioms:
\begin{itemize}
\item[(i)] $\numbx{n}(\numbx{n}p\to p)\to \numbx{n}p$ for $n=0,1$;
\item[(ii)] $\zebx p\to \onbx p$;
\item[(iii)] $\zedm p\to \onbx\langle 0\rangle p$.
\end{itemize}
Japaridze \citeyearpar{Japaridze1988}  proved that this logic, now known as $\nml{GLB}$ \citep{Boolos1993}, is arithmetically sound and complete in the sense analogous to that of $\nml{GL}$ above.

While $\nml{GL}$ is Kripke complete \citep{Segerberg1971}, $\nml{GLB}$ is Kripke \textit{incomplete} (\citealt{Japaridze1988}, see also~\citealt{Boolos1993}, p.~194).   To see this, recall that over Kripke frames, the L\"{o}b axiom (i) for $\numbx{n}$ corresponds to the associated accessibility relation $R_n$ being transitive and Noetherian (conversely well-founded); axiom (ii), which we can equivalently take in the diamond form $\langle 1\rangle q\to \langle 0\rangle q$, corresponds to the property that $xR_1 y$ implies $xR_0 y$; and axiom (iii) corresponds to the property that if $xR_0y$ and $xR_1y'$, then $y'R_0 y$. 
 
Such a combination of axioms, however, makes $R_1$ an empty relation. Suppose there are $x,y$ for which $xR_1y$ in some Kripke frame for $\nml{GLB}$. Hence $xR_0y$ by the property corresponding to axiom (ii). Then using  $xR_0 y$, $xR_1 y$, and the property corresponding to axiom (iii), we obtain $yR_0 y$, which contradicts the Noetherianity of $R_0$ given by the L\"{o}b axiom for $\zebx$ (note that the argument does not use the L\"{o}b axiom for $\onbx$). But Japaridze's arithmetical soundness theorem for $\nml{GLB}$ shows that $\onbx\bot$ is not a theorem of $\nml{GLB}$ (a semantic argument can be extracted from Theorem \ref{th:glbe} below).  Thus, $\onbx\bot$ is a non-theorem of $\nml{GLB}$ that is valid in all Kripke frames for $\nml{GLB}$.

Viewed algebraically, Japaridze's Kripke-incompleteness result shows that $\nml{GLB}$ is $\mathcal{CAV}$-incomplete. Beklemishev et al. \citeyearpar{Beklemishev2010} show that $\nml{GLB}$ is complete with respect to a class of topological spaces, which implies that it is $\mathcal{CA}$-complete. In light of this result, it is a natural question whether the $\mathcal{CAV}$-incompleteness of $\nml{GLB}$ is due to the interaction of $\mathcal{V}$ with the other properties or whether it is due to the property $\mathcal{V}$ by itself. Using the equivalence of $\mathcal{V}$ and $\mathcal{R}$ from \S~\ref{sec:R&V}, we are able to answer this question and show that $\mathcal{V}$ by itself is to blame.

\begin{theo} \label{th:glbinc} The logic $\nml{GLB}$ is $\mathcal{V}$-incomplete.
\end{theo}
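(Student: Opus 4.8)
The plan is to instantiate Theorem~\ref{thm:great} with the right choice of auxiliary operators. Recall the key point emphasized after Theorem~\ref{thm:great}: the operators $\zedm$ and $\ondm$ need \emph{not} be primitive in the signature; they may be any operators definable on the algebra, and they need not even be the same. This flexibility is exactly what makes $\nml{GLB}$ fit the template. The obvious first attempt — take $\ondm=\zedm=\langle 0\rangle$ — is what already works for $\vblog$, since the $\vblog$-axiom is a $\nml{GL}$-theorem and hence a $\nml{GLB}$-theorem. But that only shows $\zebx\zedm\top\to\zebx\bot$, i.e.\ $\langle 0\rangle$-flavored information, whereas the witnessing non-theorem that establishes $\nml{GLB}$'s Kripke-incompleteness is $\onbx\bot$. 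So the real task is to route the L\"ob-style argument of Theorem~\ref{thm:great} through the \emph{second} modality.

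Here is the instantiation I would use. Let $\mathfrak{A}$ be any $\mathcal{V}$-\bao\ validating $\nml{GLB}$ — more precisely, a $\mathcal{V}_1$-\bao, since what we need is complete additivity of the \emph{second} operator $\langle 1\rangle$. Take $\ondm := \langle 1\rangle$ (so $\onbx=[1]$) and $\zedm := \langle 0\rangle$ (so $\zebx=[0]$), and take $a := \top$. To apply Theorem~\ref{thm:great} I must verify hypothesis~\ref{great2}: for every $x\in\mathfrak{A}$, $\top\leq [1]([0]([0]x\to x)\to x)$, i.e.\ $[1]([0]([0]x\to x)\to x)=\top$, i.e.\ the modal formula $[1]([0]([0]p\to p)\to p)$ is a theorem of $\nml{GLB}$. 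This should follow from the $\nml{GLB}$ axioms: the L\"ob axiom (i) for $[0]$ gives $\vdash_{\nml{GLB}} [0]([0]p\to p)\to [0]p$, hence $\vdash_{\nml{GLB}} [0]([0]p\to p)\to ([0]p\to p)$ by combining with the $\nml{K}$-consequence $[0]p\to([0]p\to p)$... more carefully, $[0]([0]p\to p)\to p$ follows from $[0]([0]p\to p)\to[0]p$ together with — hmm, one needs a reflexivity-like step, which is supplied not by $[0]$ directly but by pushing through $[1]$ via axiom (ii) $[0]p\to[1]p$ and the interaction. The cleanest route: $[0]([0]p\to p)\to p$ is a theorem of Solovay's $\nml{GLS}$, and under $[1]$ the mixing axioms of $\nml{GLB}$ should let us recover exactly $[1]([0]([0]p\to p)\to p)$. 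I would verify this by a short syntactic derivation using axioms (i) for $n=0$, (ii), and (iii); axiom (i) for $n=1$ is \emph{not} needed (mirroring the Kripke-frame argument in the text, which also does not use the L\"ob axiom for $\onbx$).

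Once hypothesis~\ref{great2} is checked and hypothesis~\ref{great1} holds because $\langle 1\rangle$ is completely additive by assumption, Theorem~\ref{thm:great} yields $\top\leq [1]\bot$, i.e.\ $\mathfrak{A}\vDash \onbx\bot$. So every $\mathcal{V}$-\bao\ validating $\nml{GLB}$ also validates $\onbx\bot$. But $\onbx\bot=[1]\bot\notin\nml{GLB}$ by Japaridze's arithmetical soundness theorem (or the semantic argument promised via Theorem~\ref{th:glbe}). Hence $\nml{GLB}$ is not the logic of any class of $\mathcal{V}$-\baos, i.e.\ $\nml{GLB}$ is $\mathcal{V}$-incomplete (indeed $\mathcal{V}_1$-incomplete).

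The main obstacle I anticipate is precisely the verification that $[1]([0]([0]p\to p)\to p)$ is a $\nml{GLB}$-theorem — getting the interaction axioms (ii) and (iii) to simulate, under the $[1]$-modality, the reflexivity step that $\nml{GLS}$ has for free. Everything else is a mechanical application of Theorem~\ref{thm:great} plus the known fact $\onbx\bot\notin\nml{GLB}$. A secondary, purely expository point is to be careful about whether one claims full $\mathcal{V}$-incompleteness or the sharper $\mathcal{V}_1$-incompleteness (cf.\ Remark~\ref{rem:V_n}); the argument in fact gives the latter, and a fortiori the former.
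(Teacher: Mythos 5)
Your proposal is correct and follows essentially the same route as the paper's proof: instantiate Theorem~\ref{thm:great} with $\ondm := \langle 1\rangle$, $\zedm := \langle 0\rangle$, and $a := \top$, then conclude from $\onbx\bot\notin\nml{GLB}$ (and your closing remark about $\mathcal{V}_1$-incompleteness matches the paper's note after the proof). The one step you flag but do not carry out --- that $\onbx(\zebx(\zebx p\to p)\to p)$ holds in every \bao\ validating $\nml{GLB}$ --- is exactly the computation the paper supplies, via the chain $\langle 1\rangle(\zebx(\zebx x\to x)\wedge\neg x)\leq\langle 1\rangle(\zebx x\wedge\neg x)\leq\langle 1\rangle\zebx x\wedge\langle 1\rangle\neg x\leq\langle 1\rangle\zebx x\wedge\langle 0\rangle\neg x\leq\langle 1\rangle\zebx x\wedge\onbx\langle 0\rangle\neg x\leq\bot$, using the L\"{o}b axiom for $\zebx$ together with axioms (ii) and (iii), and indeed without the L\"{o}b axiom for $\onbx$, just as you anticipate.
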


\begin{proof} First, for any \bao\ $\mathfrak{A}$ validating $\nml{GLB}$ and any $x\in\mathfrak{A}$, we have

\[\onbx (\zebx(\zebx x \to x)\to x)=\top,\] 
because 
\begin{eqnarray*}
\langle 1\rangle (\zebx (\zebx x \to x) \wedge \neg x) & \leq & \langle 1\rangle (\zebx x \wedge \neg x)\quad \mbox{by the L\"{o}b axiom for }\zebx\\
& \leq & \langle 1\rangle \zebx x \wedge \langle 1\rangle \neg x \\
& \leq & \langle 1\rangle \zebx x \wedge \langle 0\rangle \neg x\quad\mbox{by axiom (ii) of }\nml{GLB} \\
& \leq & \langle 1\rangle \zebx x \wedge \onbx\langle 0\rangle \neg x\quad\mbox{by axiom (iii) of }\nml{GLB}\\
& \leq & \bot.
\end{eqnarray*}
Thus, if $\mathfrak{A}$ is a $\mathcal{V}$-\bao, then we can apply Theorem \ref{thm:great} with $a:=\top$ to obtain $\top\leq \onbx\bot$. Yet as noted above, $\onbx\bot\not\in\nml{GLB}$.\end{proof}
Note that this proof uses the complete additivity only of the $[1]$ operator, so it shows that $\nml{GLB}$ is in fact $\mathcal{V}_1$-incomplete in the sense of Remark \ref{rem:V_n}. Similar remarks apply to later results involving variants of $\nml{GLB}$, though we will not mention this again.

We can get still more mileage out of this result by following the pattern of Theorem \ref{th:vbinc} to obtain a somewhat more ``natural'' logic  answering the \cfV-inconsistency question of Venema \citeyearpar[\S~6.1]{Venema2007}. 
 As in the case of $\vblog$, the problematic formula derivable over \cfV-\baos\ for $\nml{GLB}$ is variable free. Therefore, let us define $\nml{GLBe}$ as the smallest normal modal logic in the language with \emph{three} modalities $\zebx, \onbx$ and $\aubx$ containing the $\nml{GLB}$ axioms for $\zebx$ and $\onbx$ as above as well as the axiom $\audm\langle1\rangle\top$. 
 
\begin{theo} \label{th:glbe}
The logic $\nml{GLBe}$ is consistent (and even sound with respect to the ordinal version of the topological $\nml{GLB}$-semantics of \citealt{Beklemishev2010,Beklemishev2011}) but $\mathcal{V}$-inconsistent.
\end{theo}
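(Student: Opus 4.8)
The plan is to prove the two halves of the statement independently. The \cfV-inconsistency half is essentially a corollary of the proof of Theorem~\ref{th:glbinc}. Suppose $\Ag$ is a \cfV-\bao\ validating $\nml{GLBe}$. Then $\Ag$ validates the $\nml{GLB}$-axioms for $\zebx$ and $\onbx$, so by the computation in the proof of Theorem~\ref{th:glbinc} we have $\onbx(\zebx(\zebx x\to x)\to x)=\top$ for every $x\in\Ag$; since the operator $\ondm$ of a \cfV-\bao\ is completely additive, Theorem~\ref{thm:great} applied with $a:=\top$ gives $\top\leq\onbx\bot$, that is, $\ondm\top=\bot$. As $\audm$ is an operator, $\audm\ondm\top=\audm\bot=\bot$; but $\Ag\vDash\audm\ondm\top$ forces $\audm\ondm\top=\top$, whence $\top=\bot$ and $\Ag$ is degenerate. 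Hence no non-degenerate \cfV-\bao\ validates $\nml{GLBe}$, which is exactly \cfV-inconsistency.

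For the consistency (and topological-soundness) half I would exhibit a model: the ordinal topological $\nml{GLB}$-model of \citealt{Beklemishev2011} (building on \citealt{Beklemishev2010}), expanded with a universal modality for $\aubx$. Take an ordinal $\Lambda$ with the two Beklemishev topologies $\tau_0\subseteq\tau_1$, interpret $\zedm$ and $\ondm$ by the Cantor derivatives $d_{\tau_0}$ and $d_{\tau_1}$, and interpret $\audm$ as the universal diamond on $\Lambda$ (so $\audm\psi$ is true at every point if $\psi$ is true somewhere, and false everywhere otherwise). Beklemishev's soundness theorem already yields the $\nml{GLB}$-axioms for $\zebx$ and $\onbx$, and the universal modality is a normal operator, so the only axiom left to check is $\audm\ondm\top$. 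Here $\ondm\top$ is interpreted as $d_{\tau_1}(\Lambda)$, the set of $\tau_1$-limit points of the whole space, and we need some $\Lambda$ for which this is non-empty; this is immediate from Beklemishev's ordinal \emph{completeness} theorem for $\nml{GLB}$, since $\onbx\bot\notin\nml{GLB}$ forces some ordinal model to put a point into $d_{\tau_1}(\Lambda)$. Fixing such a $\Lambda$, $\audm\ondm\top$ is valid in the model, so the expanded model validates all of $\nml{GLBe}$; as it is non-degenerate, $\nml{GLBe}$ is consistent, and sound for the ordinal topological semantics as claimed. (A cheaper route to bare consistency: expand the Lindenbaum--Tarski algebra of $\nml{GLB}$ — in which $\ondm\top\neq\bot$ precisely because $\onbx\bot\notin\nml{GLB}$ — by the analogously defined universal-modality operator.)

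There is no deep obstacle here, and the \cfV-inconsistency step is a one-liner given Theorem~\ref{thm:great}; the modest work sits in the consistency step and is bookkeeping rather than conceptual. The point most worth checking carefully is that ``$\ondm\top$'' is satisfiable in the ordinal $\nml{GLB}$-semantics, i.e.\ that $d_{\tau_1}(\Lambda)\neq\varnothing$ for a suitable $\Lambda$ — which, as noted, is already guaranteed by Beklemishev's ordinal completeness result but could also be verified directly from the definition of $\tau_1$ — together with the entirely routine observation that adjoining a universal modality for $\aubx$ cannot disturb the $\zebx$- and $\onbx$-axioms, which mention neither $\aubx$ nor its semantics. It is also worth pinning down the reading of ``\cfV-inconsistent'' as ``unsound over every non-degenerate \cfV-\bao,'' so that the first step delivers the stated claim rather than merely a fact about the degenerate algebra.
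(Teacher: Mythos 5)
Your proposal is correct and takes essentially the same route as the paper: the $\cfV$-inconsistency half is the same appeal to the computation behind Theorem~\ref{th:glbinc} (giving $\ondm\top=\bot$ and hence $\audm\ondm\top=\bot$), and your ``universal diamond'' on an ordinal $\nml{GLB}$-space is exactly the dual of the paper's interpretation of $\aubx$ as the interior operator of the trivial topology on a sufficiently large ordinal. The only cosmetic difference is your detour through Beklemishev's ordinal \emph{completeness} theorem to get $d_{\tau_1}(\Lambda)\neq\varnothing$, where (as you note) the direct observation that points of uncountable cofinality are $\tau_1$-limit points already suffices.
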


\begin{proof}[Proof sketch]
Given  Theorem \ref{th:glbinc}, it is enough to prove the parenthetical claim. Recall that the topological semantics for $\nml{GLB}$ is defined in terms of spaces with two suitably related \emph{scattered} topologies; diamonds are interpreted by the  \emph{derived set operator} rather than Kuratowski's closure operator.\footnote{For all the notions undefined in this proof, see \citealt{BeklemishevG14}.} Ordinals provide particularly important instances of $\nml{GLB}$-spaces, with $\zebx$ interpreted by the \emph{order topology} and $\onbx$ by the  \emph{club topology}. In order to  extend this semantics to $\nml{GLBe}$, it is enough to interpret $\aubx$, e.g., with the interior operator of the trivial topology on any $\alpha \geq \aleph_\omega$. \end{proof}

One may hope to find not only a good topological interpretation but also a good arithmetical interpretation of $\audm$ validating $\audm\langle 1\rangle\top$. This would make $\nml{GLBe}$ a natural example of \cfV-inconsistency among normal logics. As things stand now, we have at least  an extraordinarily natural example of \cfV-incompleteness in Theorem \ref{th:glbinc}. And for \cfV-inconsistency, we can do better than $\nml{GLBe}$ in the \emph{quasi-normal} realm of \S~\ref{subsec:nonnormal}.

\subsection{\cfV-Inconsistency of $\nml{GLSB}$} \label{subsec:nonnormal}

If one is willing to broaden somewhat the setup of the present paper, a very natural example of \cfV-inconsistency can be found among modal logics without the necessitation rule. In analogy with \S~\ref{sec:VB}, given a normal modal logic $\nml{L} \supseteq \nml{K}$, we can define $\Sigma \vdash_\nml{L}^{mu} \varphi$ to mean that $\varphi$ belongs to the closure of $\nml{L}\cup \Sigma$ under modus ponens and uniform substitution. A set of formulas closed under $\vdash_\nml{L}^{mu}$ is called a \emph{quasi-normal logic over $\nml{L}$} or simply a {quasi-normal  $\nml{L}$-logic}. If $\nml{L} = \nml{K}$, we simply use the name  \emph{quasi-normal logic}. These notions transfer without any changes to the polymodal setting. 

Algebraic semantics for these quasi-normal logics has been well investigated. One can find a standard presentation in, e.g., \citealt[Ch.~7]{Chagrov1997}; an early exhaustive discussion is provided by Blok and K\"ohler \citeyearpar{BlokK1983}, who indicate that the basic notion of a \emph{filtered modal algebra} is a special case of the notion of a \emph{matrix} that dates back to the pre-war work of the Warsaw school.\footnote{\label{footnote:alghist}``A well-known result, going back to the twenties, states that, under some reasonable assumptions, any logic can be characterized as the set of formulas satisfied by a matrix $\langle S, F\rangle$, where $S$ is an algebra of the appropriate type, and $F$ a subset of the domain \dots'' \citep[p.~941]{BlokK1983}. In this opening quote, Blok and K\"ohler  were presumably referring to  \cite{LukasiewiczTarski}. The exact references  and more history can be found, e.g., in \citealt[\S~1.2]{FontJP03a:sl}.} As discussed by \cite{Jansana2006}, non-normal modal logics were an important inspiration for Blok's later work on abstract algebraic logic, and more generally, such logics have been a major  source of examples and applications in the area (see \citealt{BlokP89:ams,Andreka2001,Czelakowski2001,Font06:sl,FontJP03a:sl,FontJP09:sl}).

Instead of reproducing the whole apparatus here, let us just recall what is most relevant for our purposes. In this section, we are only interested in quasi-normal logics where each box operator obeys the L\"ob axiom. \baos\ in which each dual operator validates the L\"ob axiom are called \emph{diagonalizable}  \baos\ (also known as \emph{Magari algebras}). 

Let $\mathfrak{A}$ be a \bao. Recall that a non-empty subset $F \subseteq \mathfrak{A}$  is called a \emph{filter} if
\begin{itemize}
\item for any $a, b \in \mathfrak{A}$, $a, b \in F$ iff $a \wedge b \in F$. 
\end{itemize}
$F$ is \emph{proper} if $F \neq \mathfrak{A}$. A maximal proper filter is called an \emph{ultrafliter}.

Let $\nml{L}$ be a \emph{quasi-normal L\"ob logic}, i.e., a quasi-normal, polymodal logic such that for any $\numdm{n}$ in the signature, the unimodal restriction of $\nml{L}$ to $\numdm{n}$ is an extension of $\nml{GL}$. We say that a pair $\langle \mathfrak{A}, F \rangle$ is a \emph{matrix for $\nml{L}$} if
\begin{itemize}
\item $\mathfrak{A}$ is a diagonalizable \bao\ and $F$ is a filter on it, and
\item for any $\varphi \in \nml{L}$ (i.e., any theorem of $\nml{L}$) and any valuation $\theta$ on $\mathfrak{A}$, $\hat{\theta}(\varphi) \in F$.
\end{itemize}
It is a standard fact (see the references above) that every quasi-normal L\"ob logic $\nml{L}$ is sound and complete with respect to its class of matrices. 

We can apply our terminology for \baos\ to matrices as well: $\langle \mathfrak{A}, F \rangle$ is a \cfC-, \cfA-, \cfV-matrix if $\mathfrak{A}$ is a \cfC-, \cfA-, \cfV-\bao.  A matrix is \emph{degenerate} if $F$ is not proper, i.e., if $F = \mathfrak{A}$. 

Quasi-normal modal logics arise naturally in the context of provability logic. Say that a formula $\varphi$ of the language of $\nml{GLB}$ is \textit{always true} if for every mapping of atomic formulas to sentences of arithmetic, the induced arithmetic translation of $\varphi$ is true in the standard model of arithmetic. Then we may ask for a bimodal logic $\nml{L}$ such that $\varphi\in\nml{L}$ iff $\varphi$ is always true. Clearly $\nml{L}$ must be an extension of $\nml{GLB}$ that is closed under modus ponens and uniform substitution and contains the axioms $[n]p\to p$. But then $\nml{L}$ cannot be closed under necessitation, since that would give us $[0]([0]\bot \to \bot)$ as a theorem, which is not an always true formula by G\"{o}del's Second Incompleteness Theorem. The desired logic $\nml{L}$ is the logic  $\nml{GLSB}$ \citep{Boolos1993}, which can be defined as the quasi-normal logic axiomatized over  $\nml{GLB}$ by $\onbx p \to p$. Japaridze \citeyearpar{Japaridze1988} proved that $\nml{GLSB}$ is exactly the logic of the always true sentences about provability and $\omega$-provability.  As it turns out, this logic also provides a quasi-normal example of \cfV-inconsistency.

\begin{theo} \label{th:glsb}
There exists no non-degenerate \cfV-matrix for $\nml{GLSB}$.
\end{theo}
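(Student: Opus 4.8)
The plan is to derive a contradiction from the assumption that $\langle\mathfrak{A},F\rangle$ is a non-degenerate \cfV-matrix for $\nml{GLSB}$. Here $\mathfrak{A}$ is a diagonalizable \bao\ that is moreover a \cfV-\bao, and $F$ is a proper filter containing $\hat\theta(\varphi)$ for every $\varphi\in\nml{GLSB}$ and every valuation $\theta$. Since $\nml{GLSB}$ is axiomatized over $\nml{GLB}$ by $\onbx p\to p$, in particular $\onbx a\to a\in F$ for every $a\in\mathfrak{A}$; taking $a=\bot$ gives $\onbx\bot\to\bot=\neg\onbx\bot\in F$, so by properness $\onbx\bot\notin F$. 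It therefore suffices to produce $\onbx\bot\in F$.

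The engine for this is the one already used in Theorem~\ref{th:glbinc}: its proof shows that $\onbx(\zebx(\zebx x\to x)\to x)$ equals $\top$ in every \bao\ validating $\nml{GLB}$, so $\onbx(\zebx(\zebx p\to p)\to p)$ is a theorem of $\nml{GLB}$, hence of $\nml{GLSB}$, whence $\onbx(\zebx(\zebx a\to a)\to a)\in F$ for all $a\in\mathfrak{A}$. One would now like to invoke Theorem~\ref{thm:great} with $a:=\top$, $\zedm:=\langle 0\rangle$, and $\ondm:=\langle 1\rangle$: hypothesis~\ref{great1} holds because $\mathfrak{A}$ is a \cfV-\bao, but hypothesis~\ref{great2} (that $\top\le\onbx(\zebx(\zebx x\to x)\to x)$ for all $x$) is available only modulo $F$, not as a genuine inequality. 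To close this gap I would pass to the largest filter $G\subseteq F$ closed under $\zebx$ and $\onbx$; concretely $G=\{x\in\mathfrak{A}\mid Bx\in F$ for every string $B$ of box-operators$\}$. Because $\nml{GLB}$ is closed under necessitation, all instances of its theorems lie not merely in $F$ but already in $G$, so the quotient $\mathfrak{A}/\Theta_G$ by the congruence $\Theta_G$ induced by $G$ \emph{genuinely} validates $\nml{GLB}$; it is again diagonalizable, and, together with $F/\Theta_G$, it is again a matrix for $\nml{GLSB}$, with $F/\Theta_G$ still proper (here one uses that $\langle\mathfrak{A},F\rangle$ is non-degenerate).

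If $\mathfrak{A}/\Theta_G$ is again a \cfV-\bao, the proof finishes: Theorem~\ref{th:glbinc} (equivalently, Theorem~\ref{thm:great} applied inside $\mathfrak{A}/\Theta_G$) gives $\onbx\bot=\top$ there; since $\neg\onbx\bot$ always lies in the filter of a matrix for $\nml{GLSB}$, this forces $\bot\in F/\Theta_G$, contradicting its properness. Thus the whole argument comes down to checking that complete additivity is inherited by $\mathfrak{A}/\Theta_G$, and this is the step I expect to require genuine care, since \cfV-\baos\ do not form a variety and complete additivity is not preserved by arbitrary homomorphic images. The feature that should make it go through here is that $\Theta_G$ is the congruence of an \emph{open} filter, so its kernel ideal $I_G=\{x\mid\neg x\in G\}$ is closed under $\langle 0\rangle$ and $\langle 1\rangle$; this closure is exactly what lets one transport a witness $c$ for condition $\mathcal{R}$ through the quotient map without it collapsing to $\bot$, so that $\mathcal{R}$, and hence by Corollary~\ref{cor:reqv} complete additivity, transfers to $\mathfrak{A}/\Theta_G$. (A more hands-on alternative, which avoids the quotient but requires the same bookkeeping, is to run the L\"ob-diagonal argument of Theorem~\ref{thm:great} directly in $\mathfrak{A}$: apply $\mathcal{R}$ to $\langle 1\rangle$ starting from $\langle 1\rangle\top=\neg\onbx\bot\in F$, form the diagonal element $d:=c\wedge\zebx\neg c$, observe $d\ne\bot$ together with $\onbx\neg d\in F$, and push these through $I_G$ using complete additivity of $\langle 1\rangle$.)
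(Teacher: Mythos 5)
Your opening move is exactly the paper's: from $\onbx\bot\to\bot\in F$ you get $\ondm\top\in F$, so by properness it suffices to show $\onbx\bot=\top$ in $\mathfrak{A}$. The detour you then take is where the proposal goes wrong, in two respects. First, it is unnecessary. In the matrix semantics for a quasi-normal logic over $\nml{GLB}$, the entire normal companion is absorbed into the algebra: $\nml{GLB}$ is closed under necessitation and substitution, so its theorems hold as genuine identities ($=\top$ under every valuation) in $\mathfrak{A}$, and only the quasi-normal surplus $\onbx p\to p$ lives merely in the designated filter $F$. (The paper's definition of a matrix spells out only diagonalizability, but its own proof reads the underlying algebra as validating $\nml{GLB}$, which is the standard convention and what makes the quoted soundness-and-completeness of matrix semantics true.) Under that reading the argument is two lines: Theorem \ref{th:glbinc} gives $\onbx\bot=\top$ in any \cfV-\bao\ validating $\nml{GLB}$, hence $\ondm\top=\bot\in F$ and $F$ is improper.

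Second, and decisively, the repair you build for your weaker reading does not close. You quotient by the congruence of the open filter $G$ and then need $\mathfrak{A}/\Theta_G$ to be completely additive; you rightly flag this as the delicate step, but the reason you offer---that the kernel ideal $I_G$ is closed under the diamonds, so a witness $c$ for condition $\mathcal{R}$ can be ``transported through the quotient without collapsing to $\bot$''---does not do the job. Given $[a]\wedge\ondm[b]\neq[\bot]$ in the quotient, applying $\mathcal{R}$ in $\mathfrak{A}$ to representatives yields some $c\preceq b$, but nothing prevents $c\in I_G$, i.e.\ $[c]=[\bot]$; closure of $I_G$ under $\zedm$ and $\ondm$ gives no lower bound on $c$ and no means of selecting the witness outside $I_G$. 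Since $\mathcal{V}=\mathcal{R}$ is a $\forall\exists\forall$ elementary class and not a variety, preservation under this homomorphic image is exactly the kind of claim that needs a real proof, and the proposal supplies none; the ``hands-on alternative'' in your final parenthesis has the same unresolved step. The fix is not to patch the quotient but to discard it: read the matrix semantics as built on $\nml{GLB}$-algebras, and the paper's direct argument applies.
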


\begin{proof}
Assume $\langle \mathfrak{A}, F \rangle$ is a \cfV-matrix for $\nml{GLSB}$. Then $F$ has to contain all instances of $\onbx\varphi \to \varphi$, in particular $\onbx\bot \to \bot = \ondm\top$. But as shown in the proof of Theorem \ref{th:glbinc}, $\top \leq \onbx\bot$ holds in any \cfV-\bao\ validating $\nml{GLB}$, so the matrix is degenerate.
\end{proof}

\begin{remark} One may wonder if we could obtain such a result in the unimodal setting. While it would be possible to obtain examples of \cfV-inconsistent  quasi-normal unimodal logics using, e.g., $\nml{vB}$ again, the natural candidate in one's mind is probably $\nml{GLS}$ \citep{Boolos1993}, the logic of always true sentences about plain provability in $\nml{PA}$. It is well known that this logic does not allow any Kripke semantics with \emph{distinguished worlds}, and van Benthem \citeyearpar[\S~2.4]{Benthem1979} explicitly suggests that a related second-order derivation inspired the axiom of $\nml{vB}$.  Interestingly,  $\nml{GLS}$ allows $\cfV$-matrices and even \cfC\cfA\cfV-matrices. The real reason why the logic is problematic for the \emph{distinguished worlds} semantics is that matrices associated with this semantics can only use \emph{principal} filters. One can show that each such matrix for $\nml{GLS}$ (not even necessarily a \cfV-matrix) must be degenerate.  \end{remark}

\section{Decidability and Complexity} \label{sec:DecIssues}

Nontrivial completeness notions such as the ones studied in this paper raise questions about their relationship to decidability. Are there decidable logics, perhaps even of low complexity, that fail to be complete in the given sense? Is it decidable whether a logic is complete in this sense? Finally, is the associated notion of consequence recursively axiomatizable, i.e., does it allow some decidable notion of proof? 

Well-developed metatheory of modal logics leads us to the conclusion that for \cfV-completeness, the answers to these questions are, respectively, \emph{yes}, \emph{no}, and \emph{yes}. In this section, we discuss these answers in more detail.\footnote{Thanks to Johan van Benthem for inspiring this section with questions about complexity.}

\subsection{Decidable \cfV-Incomplete Logics}
\label{subsec:Decide}

The results of \S~\ref{sec:V-inc} show that \cfV-incomplete logics can easily be decidable. The existence of Kripke-incomplete yet decidable logics is not a new observation; an early result of this kind was due to \cite{Cresswell84}, who showed that the logic of the frame $\vbfr$ is Kripke-incomplete but decidable---hence the name `$\nml{ID}$' for this logic. It follows by Theorem \ref{th:Vinc1} that $\nml{ID}$ is an example of a decidable \cfV-incomplete logic. (It is an open question, suggested to us by Johan van Benthem, whether $\vblog$ itself is decidable.) Theorem \ref{th:glbinc} provides another example, since $\nml{GLB}$ is known to be decidable \cite[p.~206]{Boolos1993}. This example is again perhaps more spectacular, given the motivation for $\nml{GLB}$.

These examples raise follow-up questions. First, one may ask for an example of a decidable \cfV-inconsistent logic. Second, one may ask for bounds on the computational complexity of \cfV-incomplete/inconsistent logics.

On the question of complexity, Cresswell's \citeyearpar{Cresswell84} decidability proof for $\nml{ID}$ uses an embedding into Rabin's \citeyearpar{Rabin69} $\nml{S2S}$, which is much too powerful to provide a meaningful complexity bound. One can normally do much better. For a rather striking example, Litak and Wolter \citeyearpar{LitakW05} show that all tense logics of linear time flows that are either finitely axiomatizable or $\bigcap$-irreducible are coNP-complete. This class contains many $\omega\cfC$-inconsistent logics, i.e., logics that are not sound with respect to any non-degenerate \bao\ closed under countable joins. And yet, it turns out that their complexity is no worse than that of the classical propositional calculus. Similarly, it is possible to show  that $\nml{ID}$ is coNP-complete (see Corollary \ref{cor:id}). In fact, however, one can prove a stronger result that at the same time answers the first of the above questions:  even \cfV-\emph{inconsistent} logics do not have to be more complex than the Boolean calculus itself.

\begin{theo} \label{th:conpV}
$\nml{IDe}$, the logic of the general frame $\vbefr$ introduced in \S~\ref{ssec:vb}, is coNP-complete and \cfV-inconsistent. Furthermore,   in $\nml{ZF} + \nml{BPI}$ (i.e., assuming the Boolean Prime Ideal Theorem) it can be shown to be \cfC\cfA-complete.
\end{theo}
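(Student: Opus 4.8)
The $\cfV$-inconsistency of $\nml{IDe}$ is already in hand: $\nml{IDe}$ extends $\nml{vBe}$, so it falls under Theorem~\ref{th:vbinc}. The coNP \emph{lower} bound is equally quick and generic. Since $\nml{IDe}$ is consistent --- it is the logic of the non-degenerate frame $\vbefr$ --- it contains no non-tautology $\psi$ of classical propositional logic: uniformly substituting a falsifying assignment's truth values into $\psi$ would, by closure under uniform substitution and modus ponens, force $\bot\in\nml{IDe}$. Hence validity of purely propositional formulas, a coNP-hard problem, reduces to membership in $\nml{IDe}$. So the work is the coNP \emph{upper} bound and the $\cfC\cfA$-completeness.

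For the upper bound I would prove a \emph{polynomial refutation property} for $\vbefr$: if $\varphi\notin\nml{IDe}$, i.e.\ $\vbefr\nsatisf\varphi$, then $\varphi$ is already refuted by an admissible model on $\vbefr$ whose valuation has description size polynomial in $|\varphi|$ --- each $\llbracket p\rrbracket$ being finite or cofinite, describe it by listing that finite set, or its complement, in binary. Granting this, non-theoremhood of $\nml{IDe}$ is in NP: a certificate is such a valuation plus a world where $\varphi$ fails, verified by computing, bottom-up over $\mathrm{Sub}(\varphi)$, a finite/cofinite description of each $\llbracket\psi\rrbracket$. The only non-routine cases are $\llbracket\nmbx\psi\rrbracket$ and $\llbracket\aubx\psi\rrbracket$, and both are polynomial-time in the description of $\llbracket\psi\rrbracket$ given the explicit shape of $R$: at the reflexive limit, $R(\infty)=\mathbb N\cup\{\infty\}\subseteq\llbracket\psi\rrbracket$ merely says $\llbracket\psi\rrbracket$ is one of the two largest admissible sets, while the universal relation makes $\aubx$ two-valued; and at each step the finite part (or finite complement) grows by at most $|\varphi|$ elements, so descriptions stay polynomial. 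With the lower bound this gives coNP-completeness of $\nml{IDe}$ (and thence of $\nml{ID}$, its $\nmbx$-fragment).

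The step I expect to be the main obstacle is the polynomial refutation property. I would prove it in the spirit of the Litak--Wolter analysis of linear tense logics cited above, exploiting three features of $\vbefr$: its $\nmbx$-reduct is essentially a finite linear order --- the reverse of the ordinal $\omega+2$ --- with a loop at the limit; admissibility being ``finite-or-cofinite'' forces every valuation, hence every subformula denotation, to stabilise along $\mathbb N$; and $\aubx$ is the universal modality. Concretely, in an arbitrary refuting model one tracks, along $0,1,2,\dots$, the $\mathrm{Sub}(\varphi)$-type of each natural together with the monotone --- hence eventually constant --- sequence of type-sets realised strictly below it, and argues that the valuation may be \emph{redefined on a suitable union of intervals of $\mathbb N$ so as to coincide there with its eventual stable pattern}, keeping the carrier $\vbefr$ fixed (one cannot delete worlds without leaving the frame). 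This should yield a refuting model that stabilises within $2^{O(|\varphi|)}$ steps and exhibits only polynomially many distinct finite ``jumps'', from which the polynomial description is read off. Showing that such surgery provably preserves the relevant types, and hence the refutation of $\varphi$, is the delicate combinatorial core.

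For $\cfC\cfA$-completeness I would invoke Do\v{s}en's duality between $\cfC\cfA$-\baos\ and normal neighbourhood frames: it suffices to present $\nml{IDe}$ as the logic of a class of normal neighbourhood frames. Given $\varphi\notin\nml{IDe}$, start from the algebra $\mathfrak A$ of admissible sets of $\vbefr$ --- which validates $\nml{IDe}$ and refutes $\varphi$ --- and apply a Stone/J\'onsson--Tarski-style representation in the neighbourhood setting: $\mathfrak A$ embeds, as a \bao, into the complex algebra $\mathfrak B$ of a normal neighbourhood frame on its Stone space $\mathrm{Uf}(\mathfrak A)$, the neighbourhood of an ultrafilter $u$ being the up-set of $\{\widehat a : \nmbx a\in u\}$ (and similarly for $\aubx$). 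Then $\mathfrak B$ is complete and atomic, and ``$\mathfrak B$ refutes $\varphi$'' is free, since refutations transfer to superalgebras; the Boolean Prime Ideal Theorem is used precisely to supply enough ultrafilters, so the argument lives in $\nml{ZF}+\nml{BPI}$. The obstacle is the converse inclusion $\nml{IDe}\subseteq\mathrm{Log}(\mathfrak B)$ --- that the new elements of $\mathfrak B$ refute no theorem of $\nml{IDe}$ --- which is the analogue of the fact that in the Kripke ($\cfC\cfA\cfV$) setting the underlying Kripke frame of $\vbefr$, with \emph{all} subsets admissible, already refutes the $\vblog$-axiom. This neighbourhood-canonicity should go through because, unlike a relational accessibility, the neighbourhood function can be kept genuinely non-relational at the ultrafilters lying over $\infty$, so that the joins and meets whose collapse is responsible for $\cfV$-inconsistency remain absent in $\mathfrak B$; as $\nml{IDe}$ is finitely axiomatised by formulas of bounded shape, one may alternatively verify this canonicity by direct inspection of the axioms.
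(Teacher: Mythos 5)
Your treatment of $\cfV$-inconsistency and of the coNP lower bound matches the paper exactly. For the upper bound you are in the right neighbourhood --- the paper also adapts the Litak--Wolter selection-of-points analysis --- but where you keep the infinite frame and perform valuation surgery on intervals of $\mathbb{N}$, the paper collapses $\vbefr$ to the \emph{finite} substructure on $\{\infty+1,\infty\}\cup\{l(\varphi),\dots,0\}$ and proves that $\varphi$ is satisfiable in $\vbefr$ iff it is satisfiable in this collapse under a ``$\varphi$-good'' valuation, meaning: every (single-negation-closed) subformula true at $\infty$ is already true at some $n\le l(\varphi)$. The forward direction picks, for each subformula realized in $[\infty,0]$, the \emph{maximal} natural number realizing it and pulls the valuation back along a bounded morphism; the converse copies the values of the variables at $\infty$ to all remaining naturals. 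The goodness condition is precisely the invariant your ``delicate combinatorial core'' is missing: it is what guarantees that $\infty$, which sees all of $\mathbb{N}$, keeps the same subformula type after the surgery. Once the statement is phrased this way, the certificate is just a valuation on a frame of $l(\varphi)+3$ points plus a goodness check, and the polynomial-time verification is immediate; you would do well to adopt that formulation rather than reason about exponentially distant stabilization points on the infinite frame.

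The $\cfC\cfA$ part is where your proposal has a genuine gap. You propose embedding the algebra of $\vbefr$ into the complex algebra $\mathfrak{B}$ of its canonical normal neighbourhood frame and note, correctly, that the hard direction is $\nml{IDe}\subseteq\mathrm{Log}(\mathfrak{B})$; but neither of your suggested resolutions works. First, $\nml{IDe}$ is \emph{defined} as the logic of the general frame $\vbefr$ and is not given by (nor known to have) a finite axiomatization, so ``direct inspection of the axioms'' has nothing to inspect; and even if a finite axiomatization were available, verifying neighbourhood-canonicity of the axioms would only show that the finitely axiomatized sublogic survives, not that $\mathrm{Log}(\mathfrak{B})$ is all of $\nml{IDe}$. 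Second, the informal appeal to keeping the neighbourhood function ``non-relational over $\infty$'' does not address the real danger, which is that $\mathfrak{B}$ has strictly more elements and hence strictly more valuations, any one of which might refute an $\nml{IDe}$-theorem; this is exactly the phenomenon that makes sub-Kripkean completeness delicate in the first place. The paper avoids canonical extensions entirely: fixing a non-principal ultrafilter $U$ on $\mathbb{N}$, it replaces $\admis{W}$ by the family of those $X\subseteq W$ with $X\cap\mathbb{N}\in U$ iff $\infty\in X$. This algebra is complete and atomic (it is Boolean-isomorphic to a powerset), and the same collapse-plus-goodness equivalence used for the complexity bound shows that the resulting general frame $\vbefr'$ validates exactly the same formulas as $\vbefr$. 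That concrete construction, not a generic representation theorem, is what delivers $\cfC\cfA$-completeness, and $\nml{BPI}$ enters only to supply the non-principal ultrafilter $U$.
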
 

\begin{proof}[Proof sketch] \cfV-inconsistency was already stated in Theorem \ref{th:vbinc}.\ref{vbinc1}. 
For the complexity claim, the crucial part of the proof is  analogous  to Lemma 4.3(a) in \citealt{LitakW05}. In more detail, for any $m \in \omega$ define the  $m$-\emph{collapse}  $\vbefr_m$ of $\vbefr$ as the substructure induced by   $\{\infty + 1,\infty\} \cup \{m, \dots, 0\}$. For a formula $\varphi$, let  $l(\varphi)$ be the cardinality of $nsub(\varphi)$: the single-negation-closed set of  subformulas of $\varphi$.  The crucial observation is that the satisfiability of $\varphi$ in $\vbefr$  is equivalent to its satisfiability in the $l(\varphi)$-collapse under a $\varphi$-\emph{good} valuation, i.e., one such that  for any $\chi \in nsub(\varphi)$, if $\chi$ holds at $\infty$, then it also holds at some $n \leq l(\varphi)$. This is shown as follows: 
\begin{itemize}
\item If $\varphi$ is satisfied in  $\vbefr$ by a valuation $V$, then for each $\chi \in nsub(\varphi)$ satisfied in $[\infty,0]$, one picks the maximal\footnote{Note this is the only difference with respect to \citealt{LitakW05}, where one needed to pick both  maximal and \emph{minimal} elements (and thus duplicate the size of the model) due to the presence of past modalities. Our simplified proof has the flavor of the selection-of-points argument for $\nml{GL.3}$.}  point in this interval where $\chi$ holds (note this is never $\infty$!). One then defines a morphism from $\vbefr_{l(\varphi)}$ to $\vbefr$ such that all these maximal points are in the codomain, which yields a valuation in $\vbefr_{l(\varphi)}$ via the inverse image. By definition, the obtained valuation is $\varphi$-good, and one can show inductively that satisfiability of formulas in $nsub(\varphi)$ is preserved and reflected.
\item Conversely, given a good valuation in  $\vbefr_{l(\varphi)}$, one takes this collapse as a substructure of $\vbefr$  (identifying corresponding ``infinity points'' and points with corresponding natural indices) and extends the valuation to the whole of $\vbefr$ by copying the values of all propositional variables from $\infty$ to all remaining natural numbers. Preservation and reflection of satisfiability  of formulas in $nsub(\varphi)$ is ensured by the $\varphi$-goodness of the valuation.
\end{itemize}
  The proof of the complexity claim is finished in the same way as for Theorem 2.1(ii) in \citealt{LitakW05}: given any $\varphi$, we simply guess a valuation in the $l(\varphi)$-collapse, whose size is bounded by $l(\varphi) + 3$, i.e., polynomial  in $\varphi$. Checking  its $\varphi$-goodness and the satisfaction of $\varphi$ itself under this valuation can be done in polynomial time. And coNP-hardness does not require much justification: $\nml{IDe}$ contains the propositional calculus.
  
  Finally, regarding \cfC\cfA-completeness, assuming $\nml{BPI}$: one can follow Wolter \citeyearpar[\S~4.6]{Wolter1993}  and Litak \citeyearpar[\S~4.4]{Litak2005} and modify $\vbefr = \langle W, R, \admis{W} \rangle$ to $\vbefr'  = \langle W, R, \admis{W}' \rangle$, where $\admis{W}'$ is obtained by fixing a non-principal ultrafilter $U$ over $\mathbb{N}$  and setting \[\admis{W}'=\{X\subseteq W\mid X \cap \mathbb{N} \not\in U \text{  and }\infty\not\in X\}\cup \{X\subseteq W\mid X \cap \mathbb{N} \in U \text{  and }\infty\in X\}.\] All that one needs to finish the proof  is to show that $\vbefr$ and $\vbefr'$ satisfy the same formulas, and this can be done by extending the equivalence proved by the two bullet points above to the following equivalence: $\varphi$ is satisfiable in $\vbefr$  iff  $\varphi$ is satisfiable in  $\vbefr_{l(\varphi)}$ under a $\varphi$-\emph{good} valuation iff $\varphi$ is satisfiable in $\vbefr'$.
  \end{proof} 

\begin{corollary} \label{cor:id}
$\nml{ID}$, the unimodal logic investigated in \citealt{Cresswell84},  is coNP-complete and \cfV-incomplete. Furthermore, in $\nml{ZF} + \nml{BPI}$ (i.e., assuming the Boolean Prime Ideal Theorem) it can be shown to be \cfC\cfA-complete.
\end{corollary}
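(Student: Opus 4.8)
The plan is to derive all three assertions from Theorems~\ref{th:Vinc1} and~\ref{th:conpV}, exploiting that $\nml{ID}$ is the unimodal shadow of $\nml{IDe}$. \cfV-incompleteness is immediate from Theorem~\ref{th:Vinc1}: by Lemma~\ref{lem:2} we have $\vblog\subseteq\nml{ID}$, so $\nml{ID}$ lies between $\vblog$ and $\nml{ID}$ and the theorem applies. For the remaining claims, I would first record that $\nml{ID}$ is exactly the $\nmbx$-fragment of $\nml{IDe}$, i.e., $\varphi\in\nml{ID}$ iff $\varphi\in\nml{IDe}$ for every $\aubx$-free $\varphi$. This is because $\vbefr$ has the same universe and the same family $\admis{W}$ of admissible sets as $\vbfr$: adjoining the universal relation for $\aubx$ cannot enlarge $\admis{W}$, since $R_{\aubx}^{-1}[X]$ is $W$ when $X\neq\varnothing$ and $\varnothing$ otherwise, and both already lie in $\admis{W}$. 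Hence an admissible model on $\vbefr$ restricted to the $\nmbx$-language is precisely an admissible model on $\vbfr$, and conversely, so no $\aubx$-free formula separates the two frames.

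Given this, coNP-membership of $\nml{ID}$ follows because the decision procedure of Theorem~\ref{th:conpV} already accepts arbitrary---in particular $\aubx$-free---inputs, while coNP-hardness is clear since $\nml{ID}$ contains the classical propositional calculus. Likewise, letting $\Bg$ be the dual \bao\ of the general frame $\vbefr'$ constructed in the proof of Theorem~\ref{th:conpV} (so, assuming $\nml{BPI}$, $\Bg$ is a \cfC\cfA-\bao\ with $\mathrm{Log}(\Bg)=\nml{IDe}$), its $\nmbx$-reduct is a \cfC\cfA-\ma---forgetting an operator preserves completeness and atomicity of the Boolean reduct---whose logic is the $\nmbx$-fragment of $\nml{IDe}$, namely $\nml{ID}$; hence $\nml{ID}$ is \cfC\cfA-complete over $\nml{ZF}+\nml{BPI}$. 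Equivalently, one may just rerun the proof of Theorem~\ref{th:conpV} with $\vbfr$ in place of $\vbefr$: the $l(\varphi)$-collapse argument and the ultrafilter modification $\admis{W}'=\{X\subseteq W\mid X\cap\mathbb{N}\notin U \text{ and } \infty\notin X\}\cup\{X\subseteq W\mid X\cap\mathbb{N}\in U \text{ and } \infty\in X\}$ go through verbatim once every mention of the universal modality---which never enters the construction---is deleted.

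The argument is thus essentially bookkeeping; the one point deserving a moment's care---and it is the same on either route---is the verification that the dual \bao\ of $\vbfr'$ is genuinely complete and atomic: its atoms are the singletons $\{n\}$ for $n\in\mathbb{N}$ (admissible precisely because $U$ is non-principal) together with $\{\infty+1\}$; every admissible set containing $\infty$ meets $\mathbb{N}$ in a $U$-large, hence nonempty, set and so lies above some $\{n\}$; and arbitrary joins exist because a union of admissible sets is again admissible unless it omits $\infty$ while its natural part is $U$-large, a defect cured by adjoining $\infty$.
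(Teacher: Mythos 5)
Your proposal is correct and follows essentially the same route as the paper, which derives the corollary from Theorems \ref{th:Vinc1} and \ref{th:conpV} via the observation that $\nml{ID}$ is exactly the unimodal fragment of $\nml{IDe}$ (with the alternative of rerunning the proof of Theorem \ref{th:conpV} while neglecting the universal modality). You simply spell out the bookkeeping---closure of $\admis{W}$ under $R_{\aubx}^{-1}$, the reduct argument, and the atomicity/completeness check for the dual of $\vbefr'$---that the paper leaves implicit.
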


\begin{proof}
This is a straightforward corollary of Theorems \ref{th:Vinc1}  and  \ref{th:conpV}. The $\mathcal{CA}$-completeness claim follows from the fact that the unimodal fragment of $\nml{IDe}$ is exactly $\nml{ID}$; it can also be shown directly as in the proof of Theorem \ref{th:conpV} (just neglecting the universal modality).
\end{proof}

This improvement on \citealt{Cresswell84} shows why using methods as powerful as embedding into Rabin's \citeyearpar{Rabin69} $\nml{S2S}$  to show decidability should be regarded as either the first step or the last resort. More tailored methods can  yield dramatically improved complexity bounds and as a bonus help to establish en route additional completeness results. Furthermore, ``tailored'' should not be taken to mean ``applicable to a single isolated system".  These points are illustrated, again, by tense logics of linear time flows. The methods used in the general complexity result of \cite{LitakW05} were developed in the course of an earlier thorough investigation of the lattice of these logics  by \cite{Wolter96,Wolter96:properties}. As pointed out in \citealt[Chapter 8]{Litak2005}, these methods also allow us to show \cfA\cfT-completeness of all such logics and to find an example of a logic that is $\mathcal{\kappa C}$-complete for any cardinal $\kappa$, yet \cfC-inconsistent.\footnote{Wolter and Zakharyaschev \citeyearpar[\S\S~6.2 \& 7]{WolterZ06} present the results of \cite{Wolter96,Wolter96:properties} and \cite{LitakW05} as examples that the \emph{Big Programme} or \emph{globalist's dream}, whose general failure we mention in \S~\ref{subsec:RecV} below, may fare better in restricted lattices of logics.} As in Theorem \ref{th:conpV} and Corollary \ref{cor:id}, fine-grained investigation of complexity and completeness seem to go hand in hand.

\begin{remark}We could take a cue from the tense example and generalize the results in this section to a broader class of \cfV-incomplete logics. For example, Wolter \citeyearpar[\S4.3]{Wolter1993}  defines a whole chain of similar logics. However, doing so would require a somewhat better motivation for classes of logics obtained in this way.\footnote{\cite{Wolter1993} was investigating the fine structure and properties of the lattice of subframe logics, a subject beyond our interest here.} We thus suggest just one modification: the point $\infty + 1$ in the frame $\vbefr$ and the antecedent  $\Box\Diamond\top$ in the $\vblog$-axiom are of little use in the bimodal case. As explicitly discussed by \cite{Benthem1979}, these contraptions were only used to get things working in the normal unimodal setting. Thus, if we give up the goal of staying as close as possible to  \cite{Benthem1979} and \cite{Cresswell84}, then we could modify $\vbefr$ and $\nml{IDe}$ accordingly: removing $\infty+1$, interpreting $\audm$ by $W \times \{\infty\}$ rather than the universal relation, and noting that $\aubx(\Box (\Box p\rightarrow p)\rightarrow p)$ and $\audm\top$ happen to be valid in the resulting general frame. The proof of \cfV-inconsistency would then just be a direct application of Theorem \ref{thm:great}. We leave adapting Theorem \ref{th:conpV} to this example as an exercise for the reader.\end{remark}

To close the discussion of complexity, let us note that $\nml{GLB}$, our other flagship example of \cfV-incompleteness,  is PSPACE-complete, as is the case with many other provability logics: on the one hand, the one-variable fragment of $\nml{GL}$ alone is PSPACE-complete \citep{Chagrov2003,Svejdar2003}, and on the other hand even the polymodal extension $\nml{GLP}$ of $\nml{GLB}$ with $\omega$-many modalities remains in PSPACE \citep{Shapirovsky2008}.

\subsection{Undecidability of \cfV-Completeness} \label{subsec:UndecV}

A more general question is whether the property of \cfV-completeness itself is decidable. Recall that \cite{Thomason1982} showed that it is undecidable whether a given (finite set of) axiom(s) axiomatizes a Kripke-complete logic. Further results along these lines can be found in, e.g., \citealt{KrachtW99} or \citealt{Chagrov1997}. Refinements include the discussion of decidability of Kripke completeness in smaller lattices of logics. In particular, \cite{Chagrov1990} (see also \citealt[Theorem 17.19]{Chagrov1997}) shows that Kripke completeness is undecidable over $\nml{GL}$. 

Adopting Thomason's technique to show the undecidability of \cfV-completeness for polymodal logics in general is straightforward; as discussed by  \cite{KrachtW99}, all that one really needs to show is its reflection under fusions.\footnote{We refer the reader to, e.g., \citealt{KrachtW91} for more on preservation and reflection of properties of modal logics by fusions.}  

\newcommand{\fus}[2]{\nml{#1} \otimes \nml{#2}}

In fact, as highlighted by Kracht and Wolter \citeyearpar[p.~137]{KrachtW99}, Thomason's methodology is of sweeping generality, and one could even call it a modal variant of Rice's Theorem.\footnote{Recall that Rice's Theorem \citeyearpar{Rice53} states that \emph{every nontrivial property of recursively axiomatizable languages is undecidable} \cite[Theorem 9.11]{HopcroftMU03}. Note that \cite{ChagrovZ93:jsl} take a somewhat different perspective on the relationship of Rice's Theorem to modal logic, focusing mostly on unimodal systems.} Let us isolate this result. Recall that the \emph{fusion} $\fus{L_1}{L_2}$ of two normal logics $\nml{L_1}$ and $\nml{L_2}$ formulated in disjoint modal signatures is obtained by taking the sum of their theorems and closing it under the axioms and rules of polymodal $\nml{K}$ in the combined signature. This notion is extended to arbitrary  logics with possibly overlapping signatures by trivial renaming of operators to ensure disjointness. We say that a property of polymodal normal logics $P$ is 
\begin{itemize}
\item \emph{reflected by fusions} if  $\nml{L_1}\in P$ and $\nml{L_2} \in P$ whenever $\fus{L_1}{L_2} \in P$, and
\item \emph{finitely nontrivial} if the inconsistent logic in every signature has $P$ but there is at least one finitely axiomatizable $\nml{L} \not\in P$.
\end{itemize}

\begin{theo}[\cite{Thomason1982,KrachtW99}] \label{th:modalrice}
Any finitely nontrivial property of polymodal logics that is reflected by fusions is undecidable for normal modal logics with at least three operators. 
\end{theo}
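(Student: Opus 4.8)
The plan is to obtain the result as a modal analogue of Rice's Theorem, by many-one reducing an undecidable problem --- the halting problem for Turing machines --- to the question whether a finitely axiomatized normal polymodal logic has property $P$. I would use the hypotheses on $P$ only through two of their consequences: (a) the inconsistent logic in every signature has $P$ (this is part of finite nontriviality); and (b) if $\nml{L}\notin P$ then $\nml{L}\otimes\nml{M}\notin P$ for every $\nml{M}$ (the contrapositive of one conjunct of ``reflected by fusions''). Note that no \emph{preservation} under fusions, only reflection, is needed.

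First I would invoke finite nontriviality to fix, once and for all, a finitely axiomatized normal modal logic $\nml{L_0}\notin P$, say in a signature $\tau_0$ with $m$ modalities; we may assume $m\geq 1$, since otherwise $\nml{L_0}$ can be replaced by $\nml{L_0}\otimes\nml{K}$ with one fresh axiom-free modality, still $\notin P$ by (b). The key input is then Thomason's modal simulation of computation (\citealt{Thomason1982}; streamlined presentations in \citealt[Ch.~17]{Chagrov1997}): there is a recursive function $T\mapsto\Delta_T$ sending each Turing machine $T$ to a finite set $\Delta_T$ of modal axioms in a two-modality signature $\sigma$ disjoint from $\tau_0$ --- one operator serving as a ``master'' modality that forces the remaining constraints to hold throughout any algebra validating $\Delta_T$, the other encoding the transition table --- such that $\nml{K}_\sigma\oplus\Delta_T$ is the \emph{inconsistent} logic if $T$ halts on the empty tape and is a \emph{consistent} logic otherwise. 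Intuitively, any nontrivial algebra validating $\Delta_T$ must code an initial segment of the run of $T$, while a ``halting'' axiom collapses such an algebra to the one-element algebra exactly when the run terminates; so a nontrivial validating algebra exists iff the run is infinite.

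Now I would set $\nml{L}_T := \nml{L_0}\otimes(\nml{K}_\sigma\oplus\Delta_T)$, a normal polymodal logic in $m+2\geq 3$ modalities whose finite axiomatization is computed recursively from $T$. If $T$ halts, then one fusion factor is inconsistent, hence $\nml{L}_T$ is inconsistent and so $\nml{L}_T\in P$ by (a). If $T$ does not halt, then $\nml{L}_T$ is a fusion having the factor $\nml{L_0}\notin P$, so $\nml{L}_T\notin P$ by (b). Therefore $T$ halts iff $\nml{L}_T\in P$, and a decision procedure for $P$ on finitely axiomatized logics with at least three operators would decide halting --- a contradiction. (If exactly three operators are wanted, take $m=1$.)

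The hard part is the middle step: arranging Thomason's simulation so that the correspondence is genuinely \emph{global} --- about validity in all algebras (equivalently, descriptive frames), not merely satisfiability at a point --- and so that ``$T$ halts'' matches \emph{inconsistency} of $\nml{K}_\sigma\oplus\Delta_T$ rather than merely its Kripke-incompleteness; this is exactly where the master modality must propagate the computation constraints across the whole algebra and where the halting axiom must be phrased so that no nontrivial algebra can survive it. Everything around it --- the dummy-modality padding to reach $m\geq 1$, the fusion bookkeeping, and the two-line case split --- is routine once (a) and (b) are in hand. Finally, I would remark that the same scheme relativizes: feeding in a simulation inside $\nml{GL}$ in the style of \citet{Chagrov1990} instead of the generic one yields undecidability of $P$ even over suitable $\nml{GL}$-based base logics, which is the route behind the undecidability of \cfV-completeness promised in \S~\ref{subsec:UndecV}.
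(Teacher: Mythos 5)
Your reduction is correct and is exactly the argument the paper has in mind: its entire proof is the one-line remark that the result ``boils down to'' Thomason's \citeyearpar{Thomason1982} undecidability of consistency for finitely axiomatized bimodal logics, and your fusion $\nml{L_0}\otimes(\nml{K}_\sigma\oplus\Delta_T)$ with the two-case split via (a) and (b) is precisely the standard Kracht--Wolter way of cashing that remark out (note that you invoke reflection only when both factors are consistent, which is the only case where it is unproblematic). Nothing further is needed.
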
  

\begin{proof}
The proof boils down to Thomason's \citeyearpar{Thomason1982} result that it is undecidable whether a formula with at least two modal operators axiomatizes a consistent normal logic. 
\end{proof}
 
For the next statement, recall  the notion of a (modal) \emph{reduct} of a \bao. Assume  that $\Ag$ is a \bao\ for the signature where the set of operators extending Boolean operations is $S_I = \{\Diamond_i\}_{i \in I}$. Consider any subset $J \subseteq I$. Clearly, $\Ag$ can be modified to a \bao\ $\Ag^J$ whose set of operators extending Boolean operations is $S_J = \{\Diamond_i\}_{i \in J}$, simply by forgetting the interpretations of operators with indices in $I - J$. Such an $\Ag^J$ is a (\emph{modal}) \emph{reduct}  of $\Ag$. Consider a property $\clofr{X}$ of \baos. If for any $\Ag \in \clofr{X}$ and any $J \subseteq I$, we have  $\Ag^J \in \clofr{X}$, we say $\clofr{X}$ is \emph{preserved by modal reducts}. It is not a high bar to clear; in fact, it would be hard to come up with a natural property \emph{not} preserved by reducts.
 
\begin{corollary}
Let $\clofr{X}$ be any property of \baos\ that is preserved by modal reducts. If the property of $\clofr{X}$-completeness is finitely nontrivial, then it is undecidable for  normal modal logics with at least three operators.
\end{corollary}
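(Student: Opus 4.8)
The plan is to reduce the statement to the modal Rice theorem, Theorem~\ref{th:modalrice}. That theorem already supplies undecidability for any \emph{finitely nontrivial} property of polymodal logics that is \emph{reflected by fusions}, and finite nontriviality of $\clofr{X}$-completeness is assumed in the hypothesis. So the only thing to establish is that $\clofr{X}$-completeness is reflected by fusions, i.e., that whenever $\fus{L_1}{L_2}$ is $\clofr{X}$-complete, both $\nml{L_1}$ and $\nml{L_2}$ are $\clofr{X}$-complete. By the symmetry of the fusion it suffices to argue for $\nml{L_1}$.

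For that, I would first invoke the classical fact that the fusion is a conservative extension of each of its components (see, e.g., \citealt{KrachtW91}): for a formula $\varphi$ in the signature of $\nml{L_1}$ one has $\varphi \in \fus{L_1}{L_2}$ iff $\varphi \in \nml{L_1}$. Now assume $\fus{L_1}{L_2}$ is $\clofr{X}$-complete and let $\varphi$ be a formula in the signature of $\nml{L_1}$ with $\varphi \notin \nml{L_1}$. By conservativity $\varphi \notin \fus{L_1}{L_2}$, so by $\clofr{X}$-completeness of the fusion there is a $\clofr{X}$-\bao\ $\Ag$ over the combined signature with $\Ag \vDash \fus{L_1}{L_2}$ and $\Ag \nvDash \varphi$. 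Let $J$ index the operators of $\nml{L_1}$ inside the combined signature, and pass to the modal reduct $\Ag^J$. Since $\clofr{X}$ is preserved by modal reducts, $\Ag^J \in \clofr{X}$; and since the value of any term in the $J$-signature is unaffected by forgetting the remaining operators, $\Ag^J \vDash \sigma$ for every $\sigma \in \nml{L_1}$ (using $\nml{L_1} \subseteq \fus{L_1}{L_2}$ and $\Ag \vDash \fus{L_1}{L_2}$), while $\Ag^J \nvDash \varphi$. Thus every non-theorem of $\nml{L_1}$ is refuted by some $\clofr{X}$-\bao\ validating $\nml{L_1}$, so $\nml{L_1}$ is $\clofr{X}$-complete, and likewise $\nml{L_2}$. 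Hence $\clofr{X}$-completeness is reflected by fusions, and Theorem~\ref{th:modalrice} yields undecidability for normal modal logics with at least three operators.

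I do not expect a genuine obstacle here: the one ingredient that is a real theorem rather than bookkeeping is the conservativity of fusions over their components, and that is classical; the remainder is precisely the interaction of validity and of the class $\clofr{X}$ with deletion of operators, which is exactly what the two standing hypotheses — ``preserved by modal reducts'' and ``finitely nontrivial'' — are designed to guarantee. In effect this corollary is the remark that, for any reduct-preserved $\clofr{X}$, the property of being $\clofr{X}$-complete automatically meets the abstract requirements of the modal Rice theorem.
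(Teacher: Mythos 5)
Your proposal is correct and follows exactly the route the paper takes: the paper's entire proof is the one-line observation that preservation of $\clofr{X}$ by modal reducts implies that $\clofr{X}$-completeness is reflected by fusions, after which Theorem~\ref{th:modalrice} applies. You have merely (and correctly) expanded that sentence, supplying the standard conservativity of fusions over their components and the reduct argument that the paper leaves implicit.
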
  

\begin{proof}
Preservation of $\clofr{X}$ by modal reducts implies that the associated notion of completeness is reflected by fusions.
\end{proof}

\begin{corollary}
\cfV-completeness is undecidable for  normal modal logics with at least three operators (recall Remark \ref{rem:V_n}). 
\end{corollary}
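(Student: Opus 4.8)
The plan is to reduce everything to the corollary just above, which says that $\clofr{X}$-completeness is undecidable for normal modal logics with at least three operators as soon as $\clofr{X}$ is a property of \baos{} that is preserved by modal reducts and $\clofr{X}$-completeness is finitely nontrivial. So the whole task is to check that $\cfV$ meets these two hypotheses, where --- as flagged in the parenthetical and following Remark \ref{rem:V_n} --- $\cfV$ is taken to be the class of \baos{} in which \emph{every} operator is completely additive.

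First I would verify preservation by modal reducts, which is immediate from the definition of $\cfV$. If $\Ag$ is a \bao{} all of whose operators are completely additive and $J$ is a subset of its index set, then the reduct $\Ag^J$ has the same Boolean algebra --- hence the same arbitrary joins --- and its operators are exactly the $\Diamond_i$ with $i \in J$, each of which is therefore still completely additive; thus $\Ag^J \in \cfV$. Hence it remains only to establish that $\cfV$-completeness is finitely nontrivial.

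Next I would check finite nontriviality, which has two halves. For the first, the inconsistent logic in any signature is $\cfV$-complete: its only validating \bao{} is the one-element \bao{}, which is a (degenerate) $\cfV$-\bao{} validating every formula, so the inconsistent logic is the logic of the class consisting of that single algebra, and hence is $\cfV$-complete in the sense of Definition \ref{def:complete}. For the second, we need at least one finitely axiomatizable normal logic that is \emph{not} $\cfV$-complete, and $\nml{GLB}$ supplies one by Theorem \ref{th:glbinc} (alternatively the unimodal $\vblog$ does, by Theorem \ref{th:Vinc1}). Applying the preceding corollary with $\clofr{X} = \cfV$ then yields the statement.

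I do not expect a genuine obstacle here: all of the mathematical content sits in the earlier incompleteness theorems and in the modal Rice-style Theorem \ref{th:modalrice}, and this step is bookkeeping. The only point worth a second thought is checking that the inconsistent logic really counts as $\cfV$-complete --- which is why one must observe that there does exist a $\cfV$-\bao{} (the trivial one) validating it, rather than this being vacuous for want of any validating algebra.
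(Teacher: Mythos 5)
Your proposal is correct and matches the paper's own (one-line) proof: both simply verify that $\cfV$ is preserved by modal reducts and that $\cfV$-completeness is finitely nontrivial, then invoke the preceding corollary. Your extra care in checking that the inconsistent logic is $\cfV$-complete via the one-element \bao{} is a worthwhile elaboration of a point the paper leaves implicit, but it is not a different argument.
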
 

\begin{proof}
The property $\mathcal{V}$ of complete additivity is preserved by modal reducts, and $\mathcal{V}$-completeness is finitely nontrivial. 
\end{proof}

\begin{corollary}
\cfV-completeness is undecidable for any class of normal unimodal logics containing the Thomason-simulation logic $\nml{Sim}(3)$ \cite[\S~6.8]{Kracht99} (also denoted by $\nml{Sim}(3,1)$ in \citealt[\S~9]{KrachtW99}).
\end{corollary}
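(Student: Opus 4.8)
The plan is to transfer the undecidability just established (the preceding corollary: $\cfV$-completeness is undecidable for normal logics with at least three operators) through the Thomason simulation \citep{Thomason1974,KrachtW99,Kracht99}. Recall that the simulation assigns to each normal \emph{trimodal} logic $\nml{L}$ a normal \emph{unimodal} logic $\nml{Sim}(\nml{L})$ lying in the lattice of extensions of the fixed base logic $\nml{Sim}(3)$, that this assignment is recursive on finite axiomatizations (a finite axiomatization of $\nml{L}$ yields a finite axiomatization of $\nml{Sim}(\nml{L})$ over $\nml{Sim}(3)$), and that it is injective on the lattice of trimodal logics. Given the preceding corollary, it thus suffices to prove the transfer equivalence
\[
(\dagger)\qquad \nml{L}\ \text{is}\ \cfV\text{-complete}\quad\Longleftrightarrow\quad \nml{Sim}(\nml{L})\ \text{is}\ \cfV\text{-complete},
\]
for then any algorithm deciding $\cfV$-completeness on the extensions of $\nml{Sim}(3)$ would, composed with $\nml{L}\mapsto\nml{Sim}(\nml{L})$, decide $\cfV$-completeness for trimodal logics. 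Finite nontriviality carries over along the reduction: the inconsistent unimodal logic is (vacuously) $\cfV$-complete, while the $\nml{Sim}$-image of a finitely axiomatized trimodal $\cfV$-incomplete logic (one is furnished by the proof of the preceding corollary) is, by $(\dagger)$, a finitely axiomatized extension of $\nml{Sim}(3)$ that is not $\cfV$-complete.

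For $(\dagger)$ I would follow the general transfer methodology for Thomason simulations of Kracht and Wolter \citep{KrachtW99,Kracht99}, using crucially the fact, established in \S~\ref{sec:R&V} (Corollary~\ref{cor:reqv}), that $\cfV$ is an \emph{elementary} class of \baos, cut out by the single $\forall\exists\forall$-sentence $\clfr{R}$. Concretely, the syntactic simulation is accompanied by an algebraic construction $\mathfrak{B}\mapsto\mathfrak{B}^{\mathrm{sim}}$ sending a trimodal \bao\ to an \ma\ validating $\nml{Sim}(3)$, with $\mathfrak{B}^{\mathrm{sim}}\vDash\nml{Sim}(\varphi)$ iff $\mathfrak{B}\vDash\varphi$; conversely every \ma\ validating $\nml{Sim}(3)$ carries, on its ``relevant'' part (pinned down by theorems of $\nml{Sim}(3)$), a definable trimodal reduct from which the original \ma\ is recovered up to logical equivalence. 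The left-to-right direction of $(\dagger)$ is then: if $\nml{L}=\mathrm{Log}(\mathcal{K})$ for a class $\mathcal{K}$ of trimodal $\cfV$-\baos, then $\nml{Sim}(\nml{L})=\mathrm{Log}\big(\{\mathfrak{B}^{\mathrm{sim}}\mid\mathfrak{B}\in\mathcal{K}\}\big)$, provided each $\mathfrak{B}^{\mathrm{sim}}$ is again a $\cfV$-\ma. The right-to-left direction is dual: a class of $\cfV$-\mas\ witnessing $\cfV$-completeness of $\nml{Sim}(\nml{L})$ is decoded, via the trimodal-reduct construction, to a class of trimodal $\cfV$-\baos\ witnessing $\cfV$-completeness of $\nml{L}$, provided that decoding a $\cfV$-\ma\ yields a $\cfV$-\bao.

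The main obstacle is precisely the two proviso clauses: one must verify that the algebraic simulation \emph{preserves and reflects complete additivity}, i.e.\ that the single operator of $\mathfrak{B}^{\mathrm{sim}}$ is completely additive if and only if all three operators of $\mathfrak{B}$ are, and dually for the reduct construction. This should fall out of the explicit simulation definition: the simulating operator is built from $\Diamond_1,\Diamond_2,\Diamond_3$ together with purely Boolean bookkeeping encoding the copies of the simulation frame, and each $\Diamond_i$ is recovered from the simulating operator in the same controlled way; since these definitions are of the syntactic shape that $\clfr{R}$ is insensitive to, complete additivity passes back and forth. Carrying out this verification is the one genuinely non-routine point; once it is in place, $(\dagger)$ and hence the corollary follow, and since the reduction only ever produces logics of the form $\nml{Sim}(\nml{L})\supseteq\nml{Sim}(3)$, the undecidability holds already within the lattice of extensions of $\nml{Sim}(3)$, and therefore within any class of normal unimodal logics containing it (in particular, within the class of all normal unimodal logics).
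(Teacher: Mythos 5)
Your proposal follows essentially the same route as the paper: reduce to the trimodal undecidability result via the Thomason simulation, with the sole substantive burden being that the simulation preserves and reflects complete additivity. The paper's own proof is even terser, simply asserting that $\mathcal{V}$ transfers under simulations and pointing to the proof of Theorem 2.51 in \citealt{Holliday2015} (which treats $\mathcal{CV}$, with lattice-completeness playing no role) for the details you flag as the ``one genuinely non-routine point.''
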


\begin{proof}[Proof sketch]
 As discussed by \cite{Thomason1982} and \cite{KrachtW99}, one needs to establish that the property in question \emph{transfers under simulations} such as those presented by \cite{Thomason74:reduction1,Thomason74:reduction2} and \cite{KrachtW99}. This is indeed the case for the property  $\mathcal{V}$.  The details of the argument are analogous to those of the proof of Theorem 2.51 in \citealt{Holliday2015}, which is dealing with \cfC\cfV\ rather than \cfV; lattice-completeness plays no role in the argument. 
\end{proof}

The question of the decidability of \cfV-completeness in more restricted lattices of logics is much less trivial. We have no evidence of the existence of \cfV-incomplete logics in the lattice of extensions of unimodal $\nml{K4}$, much less $\nml{GL}$. Thus, it might be that this property is decidable for such logics in a degenerate sense, which would provide a rather dramatic contrast with \citealt{Chagrov1990}.
 
\subsection{Recursive Axiomatizability of \cfV-Consequence} \label{subsec:RecV}

Results such as Theorem \ref{th:modalrice} show the ultimate unfeasibility of what Wolter and Zakharyaschev \citeyearpar{WolterZ06} call  the \emph{Big Programme} or \emph{globalist's dream}\footnote{\label{ft:globalist} As Wolter and Zakharyaschev \citeyearpar[p.~428]{WolterZ06} describe it, ``Although not formulated explicitly, the `globalist's' dream research programme was to
develop a mathematical machinery that could \dots\ [e.g.,] given a modal logic in the form of a finite set of axioms and inference rules, characterise
the (simplest, smallest, largest, etc.) class of models/structures with respect
to which this logic is sound and complete,  decide in an effective way whether it has
important properties\dots [such as decidability,
compactness, interpolation] and determine its computational complexity.''} and  claim to be an implicit motivation for much of the early work in modal logic.  An earlier blow to this program was delivered by Thomason's \citeyearpar{Thomason1975c} reduction of monadic second-order consequence to modal Kripke-frame consequence.  To borrow a phrase from \cite{Blackburn2001}, this result showed that no \emph{strengthening  of our deductive apparatus can eliminate frame incompleteness}. It is not just that \cfC\cfA\cfV-consequence fails to be recursively axiomatizable and hence does not allow any decidable notion of proof; it is beyond the entire arithmetical hierarchy.

We have seen (and will see even better in \S~\ref{sec:Blok}) that many negative results regarding Kripke completeness hold as well for \cfV-completeness, just as they turned out to hold for other algebraically-inspired weak notions of completeness \citep{Litak2005,Litak2005b,Litak2008}. However, there is also an important contrast: while the class $\mathcal{CAV}$ is not a first-order definable class of \baos, we have seen that $\mathcal{V}$, like \cfA, \cfA\cfV, and \cfT, \textit{is} first-order definable, which gives us the following positive result in contrast to $\mathcal{CAV}$-consequence.

\begin{corollary} The consequence relation $\vDash_\mathcal{V}$ over $\mathcal{V}$-\baos\ is recursively axiomatizable.
\end{corollary}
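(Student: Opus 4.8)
The plan is to exploit the crucial contrast with $\mathcal{CAV}$ that the rest of this subsection has been building up to: whereas $\mathcal{CAV}$ is not an elementary class of \baos, $\mathcal{V}$ \emph{is}, so $\vDash_\mathcal{V}$ reduces to ordinary first-order consequence, which is recursively enumerable by the completeness theorem. Concretely, I would first fix a recursively presented modal signature (finite, or $\omega$-indexed) and record the first-order theory $T_\mathcal{V}$ axiomatizing $\mathcal{V}$-\baos: the usual (finitely many, resp.\ recursively many) equational axioms for Boolean algebras with operators, together with, for each operator in the signature, the $\forall\exists\forall$ sentence $\mathcal{R}$ of \S\ref{sec:R&V}, which is equivalent to $\cfV$ by Corollary~\ref{cor:reqv}. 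Second, I would note that modal validity in a \bao\ is itself a first-order condition, uniformly and effectively in the formula: for a modal formula $\chi$ with variables among $p_1,\dots,p_k$, writing $t_\chi$ for the corresponding \bao-term, the sentence $\mathrm{val}(\chi):=\forall x_1\cdots\forall x_k\,\big(t_\chi(x_1,\dots,x_k)=\top\big)$ holds in $\mathfrak{A}$ iff $\mathfrak{A}\vDash\chi$, and both $\mathrm{val}(\chi)$ and its negation $\mathrm{fal}(\chi)$ are computable from $\chi$.

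Putting these together gives, for any set $\Sigma$ of formulas and any $\varphi$, the equivalence
\[
\Sigma\vDash_\mathcal{V}\varphi
\quad\Longleftrightarrow\quad
T_\mathcal{V}\cup\{\mathrm{val}(\sigma)\mid\sigma\in\Sigma\}\cup\{\mathrm{fal}(\varphi)\}\ \text{ is unsatisfiable in first-order logic,}
\]
since a model of the right-hand theory is precisely a $\mathcal{V}$-\bao\ validating every $\sigma\in\Sigma$ but refuting $\varphi$. From this I would harvest three standard consequences. (i) By compactness of first-order logic together with the finiteness (resp.\ recursiveness) of $T_\mathcal{V}$, the displayed theory is unsatisfiable iff a subtheory using only finitely many of the $\mathrm{val}(\sigma)$ is, so $\vDash_\mathcal{V}$ is \emph{finitary}. (ii) By the completeness theorem, first-order unsatisfiability of a recursively presented theory is recursively enumerable, so the restriction of $\vDash_\mathcal{V}$ to finite premise sets is r.e. (iii) $\vDash_\mathcal{V}$ is \emph{structural}, since $\mathfrak{A}\vDash\chi$ implies $\mathfrak{A}\vDash\chi^{s}$ for every uniform substitution $s$. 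A finitary structural consequence relation whose finitary part is r.e.\ is recursively axiomatizable: one may take as rules the admissible $\sigma_1,\dots,\sigma_n/\varphi$ with $\{\sigma_1,\dots,\sigma_n\}\vDash_\mathcal{V}\varphi$, and, if a genuinely decidable rather than merely r.e.\ set of rules is wanted, apply Craig's trick to recursify the enumeration. This yields the corollary.

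I do not expect a genuine obstacle; the whole argument rests on $\cfV$ already being known to be first-order (Corollary~\ref{cor:reqv}), which is the substantive input. The only points that need care are (a) checking that ``$\mathfrak{A}\vDash\chi$'' really is captured by a first-order sentence — which it is, being a universally quantified equation in the term algebra, so the quantification over valuations becomes honest first-order quantification over elements — and (b) invoking the correct general fact from abstract algebraic logic linking recursive enumerability of (the finitary part of) a structural finitary consequence relation to recursive axiomatizability, together with the routine recursification step. It is also worth stating in the proof why the analogous move fails for $\mathcal{CAV}$: that class is not closed under ultraproducts (completeness and atomicity are lost), so no such first-order theory $T_{\mathcal{CAV}}$ exists, and indeed $\fcons$ is not even arithmetical.
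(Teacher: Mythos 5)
Your proposal is correct and follows essentially the same route as the paper: the paper's (very terse) argument likewise reduces $\vDash_\mathcal{V}$ to first-order consequence by translating modal validity into universally quantified equations in the correspondence language for \baos, axiomatizing $\mathcal{V}$ via the first-order condition $\mathcal{R}$ of Corollary~\ref{cor:reqv}, and invoking the (strong) completeness of first-order logic. Your write-up merely makes explicit the details the paper leaves implicit (the theory $T_\mathcal{V}$, the sentences $\mathrm{val}(\chi)$, compactness, structurality, and the Craig-style recursification), all of which are standard and sound.
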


The difference, of course, boils down to the strong completeness of first-order logic. The desired recursive axiomatization of the consequence relation and a decidable notion of proof is provided by using the obvious translation of the modal language into the algebraic correspondence language provided in the course of the algebraization process.\footnote{Obviously, this argument goes well beyond logics with the classical propositional base \citep{Rasiowa74:aatnl,BlokP89:ams,Andreka2001,Font06:sl,FontJP03a:sl,FontJP09:sl}.}

This, to be sure, is a suboptimal argument. One would like to avoid stepping so far outside the ordinary modal syntax, especially since deductions in the first-order metalanguage for \baos\ can involve sentences with arbitrary nesting of quantifiers. Indeed, one can do better for many FO-definable properties of \baos, especially those definable by $\forall\exists\forall$-sentences like \cfA\cfV\ (see \S~\ref{subsec:hybrid}), \cfT\ (see \S~\ref{subsec:tense}) or \cfA\ (see \citealt{Litak2006}). The general theme of finding better ``internalizations'' of consequence relations is the \emph{leitmotif}  of \S\S~\ref{sec:syntax}--\ref{sec:newsyntax}, with \S~\ref{sec:newsyntax} entirely devoted to \cfV-consequence. For more material on the ``internalization'' process, we also refer the reader to \citealt{H&L2016}.

\section{The Blok Dichotomy}\label{sec:Blok}

We recalled in \S~\ref{subsec:RecV} that Thomason's \citeyearpar{Thomason1975c} reduction of monadic second-order logic to modal Kripke-frame consequence was one of the two biggest blows  the 1970s delivered to what  \cite{WolterZ06} call the \emph{Big Programme} or \emph{globalist's dream} of the 1960s. The other one was delivered by Blok \citeyearpar{Blok1978}, with what has come to be called the  \emph{Blok Dichotomy}. An analysis of sub-Kripkean completeness notions shows that these two results have nothing whatsoever to do which each other: the Blok Dichotomy can still obtain even where no Thomason-style result holds. In this section, we will show that \cfV-completeness and \cfV-consequence is  perhaps the most dramatic case in point.\footnote{Previously known examples included \cfA\cfV\ (\citealt{Litak2005,Litak2008}) and \cfT\ (\citealt{Zak2001,Litak2005,Litak2008}), i.e., subclasses of \cfV, and $\clofr{\omega C}$ (\citealt{Litak2005,Litak2008}), which is not an FO-definable property.}

\subsection{Blok Dichotomy for Kripke Incompleteness} 

Our goal is to prove the Blok Dichotomy for $\mathcal{V}$-incompleteness. In order to explain what this means, we need to review some basic notions. 

Fine \citeyearpar{Fine1974b} defined the \textit{degree of Kripke incompleteness} of a normal modal logic $\nml{L}$ to be the cardinality of the set of normal modal logics $\nml{L}'$ such that $\nml{L}$ and $\nml{L}'$ are valid over exactly the same class of Kripke frames. Only one of the logics $\nml{L}'$ in that set can be Kripke complete, namely $\mathrm{Log}(\mathrm{Fr}(\nml{L}))$. If $\nml{L}$ is Kripke \textit{incomplete}, then its degree of Kripke incompleteness is $\geq 2$, since $\nml{L}$ and $\mathrm{Log}(\mathrm{Fr}(\nml{L}))$ are distinct logics valid over exactly the same Kripke frames. In this case  $\mathrm{Log}(\mathrm{Fr}(\nml{L}))$ is a Kripke complete logic with degree of incompleteness $\geq 2$. By contrast, the logic $\nml{K}$, which is the logic of the class of all frames, has degree of incompleteness 1. Such logics are said to be \textit{strictly Kripke complete}.

Earlier Kripke incompleteness results showed that there are normal modal logics with degree of Kripke incompleteness $\geq 2$, and Fine noted that a normal modal logic with degree of incompleteness $2^{\aleph_0}$ could be produced. Fine then asked what cardinalities may be degrees of Kripke incompleteness and whether there are any strictly complete normal modal logics other than $\nml{K}$.

Blok \citeyearpar{Blok1978} provided an exact and surprising answer to these questions: the only strictly complete normal modal logics can be characterized in terms of their occupying a special position in the lattice of all normal modal logics; and all other normal modal logics have degree of incompleteness $2^{\aleph_0}$, i.e., each one of them shares the same class of Kripke frames with $2^{\aleph_0}$ Kripke-incomplete logics. 

To say what the special position of the strictly complete logics is, we recall that the set of all normal modal logics, ordered by inclusion, forms a complete distributive lattice $\mathbb{L}$: the \textit{meet} of a family of logics is their intersection; the \textit{join} of a family of logics is the smallest normal modal logic that includes each logic in the family (which might strictly extend the union of the family); the bottom element of $\mathbb{L}$ is $\nml{K}$; and the top element is the set $\mathcal{L}$ of all formulas. In this lattice $\mathbb{L}$, a pair $\langle\nml{L}_1,\nml{L}_2\rangle$ of normal modal logics is called a \textit{splitting pair} iff for every logic $\nml{L}$ in $\mathbb{L}$, either $\nml{L}\subseteq\nml{L}_1$ or $\nml{L}_2\subseteq\nml{L}$, but not both, i.e.,  $\nml{L}_2\not\subseteq\nml{L}_1$. Thus, $\langle\nml{L}_1,\nml{L}_2\rangle$ splits $\mathbb{L}$ into two disjoint parts. Note that if $\langle\nml{L}_1,\nml{L}_2\rangle$ is a splitting pair, then either member of the pair determines the other uniquely. A logic $\nml{L}_2$ is a \textit{splitting} of $\mathbb{L}$ iff there is a logic $\nml{L}_1$ in $\mathbb{L}$ such that $\langle\nml{L}_1,\nml{L}_2\rangle$ is a splitting pair. For $\nml{L}_2$ to be a splitting is equivalent to it being \textit{completely join-irreducible} in the usual sense of lattice theory: $\nml{L}_2$ cannot be obtained as the join of a family of logics all of which are distinct from $\nml{L}_2$. Finally, a logic is a \textit{join-splitting} of $\mathbb{L}$ iff it is a join of a family of splittings, or equivalently, of completely join-irreducible logics. 

Two important facts about splitting logics were established by Blok en route to his main result. We will also need them below and hence we single out these two facts  as a separate lemma. Recall that a normal modal logic $\nml{L}$ is \textit{finitely approximable} iff there is a class $\mathsf{F}$ of finite Kripke frames such that $\nml{L}=\mathrm{Log}(\mathsf{F})$, which is equivalent to $\nml{L}$ having the \textit{finite model property} (see \citealt[Thm.~8.47]{Chagrov1997}).

\begin{lemma}[\citealt{Blok1978}]\  \label{lem:jsplfin}
\begin{enumerate}
\item\label{jsplfin1}  All join-splittings are finitely approximable;
\item\label{jsplfin2} If a finite Kripke frame $\ffr$ is cycle free, then  $\mathrm{Log}(\ffr)$ splits the lattice of normal modal logics:  there is an $\nml{L}_2$ such that $\langle\mathrm{Log}(\ffr),\nml{L}_2\rangle$ is a splitting pair.
\end{enumerate}
\end{lemma}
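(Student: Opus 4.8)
The plan is to establish both clauses via \emph{Jankov--Fine frame formulas}, following \citealt{Blok1978} (textbook accounts are in \citealt[Ch.~9]{Chagrov1997} and \citealt[Ch.~3]{Blackburn2001}). Recall that to each finite rooted frame $\ffr=\langle W,R,r\rangle$ with $|W|=m+1$ one associates a modal formula $\delta(\ffr)$ in $m+1$ fresh variables with the following characteristic property: for every general frame $\gfr$,
\[
\gfr\nvDash\delta(\ffr)\quad\text{iff}\quad\ffr\text{ is a bounded morphic image of a point-generated subframe of }\gfr,
\]
where the witnessing surjective bounded morphism pulls admissible sets back to admissible sets (equivalently, on the algebraic side, $\ffr^{+}$ is a homomorphic image of a subalgebra of $\gfr^{+}$). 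The right-to-left direction is routine: push the colouring $w_i\mapsto\{w_i\}$ forward along the morphism. The left-to-right direction is exactly where cycle-freeness of $\ffr$ is used: from a point $v$ and an admissible valuation verifying at $v$ the $r$-rooted ``diagram'' of $\ffr$, the colour classes define a partial map of $\gfr_v$ onto $\ffr$, and, since a cycle-free finite frame has depth $\le m$, the boxed part of the diagram (a $\Box^{\le m}$) propagates far enough to guarantee the back-condition of a bounded morphism; for a frame containing a cycle this can fail, and correspondingly $\mathrm{Log}(\ffr)$ need not occur as the first component of a splitting pair (the reflexive singleton is the standard example).

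Granting this, Clause~2 is a direct verification. Taking $\ffr$ rooted (the substantive case for this part of Blok's theorem), put $\nml{L}_2:=\nml{K}\oplus\delta(\ffr)$; I claim $\langle\mathrm{Log}(\ffr),\nml{L}_2\rangle$ splits $\mathbb{L}$. First, $\ffr$ itself refutes $\delta(\ffr)$ under the colouring $w_i\mapsto\{w_i\}$ (every point is reachable from $r$ within $m$ steps), so $\delta(\ffr)\notin\mathrm{Log}(\ffr)$, which gives the ``not both'' half. Second, let $\nml{L}$ be any normal modal logic with $\delta(\ffr)\notin\nml{L}$. Then some descriptive general frame $\gfr$ validates $\nml{L}$ while refuting $\delta(\ffr)$, so by the characteristic property a point-generated subframe of $\gfr$ maps onto $\ffr$ by an admissibility-respecting surjective bounded morphism; since validity of modal formulas is preserved both by generated subframes and by surjective bounded morphisms of general frames, $\ffr\vDash\nml{L}$, i.e.\ $\nml{L}\subseteq\mathrm{Log}(\ffr)$. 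Contrapositively, $\nml{L}\nsubseteq\mathrm{Log}(\ffr)$ forces $\delta(\ffr)\in\nml{L}$, i.e.\ $\nml{L}_2\subseteq\nml{L}$; this is the ``either'' half.

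For Clause~1, write a join-splitting as $\nml{L}=\bigvee_{i\in I}\mathrm{Log}(\ffr_i)$ with each $\ffr_i$ finite rooted cycle-free; by Clause~2 and the uniqueness of splitting companions this equals $\nml{K}\oplus\{\delta(\ffr_i):i\in I\}$. To prove $\nml{L}$ finitely approximable it is enough to refute each $\varphi\notin\nml{L}$ on a finite frame validating $\nml{L}$. Take a descriptive $\gfr\vDash\nml{L}$ with $\gfr\nvDash\varphi$ and pass to a filtration $\gfr^{f}$ through a finite subformula-closed set containing $\varphi$; then $\gfr^{f}$ refutes $\varphi$, and the crux is that $\gfr^{f}$ still validates every $\delta(\ffr_i)$. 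This is again where cycle-freeness matters: a refutation of $\delta(\ffr_i)$ on $\gfr^{f}$ exhibits the cycle-free $\ffr_i$ as a bounded morphic image of (a part of) $\gfr^{f}$, and because $\ffr_i$ is rooted with depth $\le|W_i|$ this pattern can be lifted --- level by level through the finitely many levels of $\ffr_i$ and back along the filtration --- to witness $\ffr_i$ as a bounded morphic image of a point-generated subframe of $\gfr$, contradicting $\gfr\vDash\delta(\ffr_i)$. Hence $\gfr^{f}\vDash\nml{L}$, as required.

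The main obstacle is the characteristic property of frame formulas together with its use in Clause~1: the delicate point is the nontrivial direction of that property, where cycle-freeness is precisely the hypothesis converting a valuation verifying the boxed diagram into an honest bounded morphism onto $\ffr$, and, relatedly, the lifting argument in Clause~1 showing that filtering a refuting model of a join-splitting cannot create spurious refutations of the $\delta(\ffr_i)$. The latter is exactly why join-splittings, rather than arbitrary joins of finitely approximable logics, are finitely approximable; for the precise bookkeeping I would refer to \citealt{Blok1978}.
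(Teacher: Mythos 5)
First, a point of reference: the paper does not prove this lemma at all; it is quoted from \citealt{Blok1978} (and the surrounding textbook literature) without argument, so there is no in-paper proof to compare against. Your framework --- Jankov--Fine frame formulas and their characteristic refutation property --- is exactly the route taken by Blok and by the standard accounts, and your Clause~2 argument is essentially correct: the restriction to \emph{rooted} frames is the right reading of the statement (it is the only case the paper ever uses, cf.\ the assumption \mings\ in the proof of Theorem \ref{th:vblok}, and for disconnected $\ffr$ the claim would in general fail), and the descriptive-frame completeness argument for the ``either'' half is the standard one. The only caveat there is that the characteristic property should not be asserted for \emph{every} general frame; the nontrivial direction needs descriptive (or finite, or refined and compact) frames, which is all you actually use.

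Clause~1, however, has genuine gaps. First, with the paper's conventions a join-splitting is a join of the completely join-irreducible logics $\nml{L}_2=\nml{K}\oplus\delta(\ffr_i)$, not of the companions $\mathrm{Log}(\ffr_i)$; the identity $\bigvee_i\mathrm{Log}(\ffr_i)=\nml{K}\oplus\{\delta(\ffr_i)\}$ is false already for a single $\ffr$, since $\delta(\ffr)\notin\mathrm{Log}(\ffr)$. You recover by working with $\nml{K}\oplus\{\delta(\ffr_i)\}$, but to write an \emph{arbitrary} join-splitting in that form you also need the converse of Clause~2 --- that every splitting of the lattice arises from a finite rooted cycle-free frame --- which you assert without proof. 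Second, and more seriously, the step ``a refutation of $\delta(\ffr_i)$ on the filtration $\gfr^f$ can be lifted back to $\gfr$'' is precisely the hard content of Blok's finite-approximability theorem, and it does not work as described: the filtration map is not a bounded morphism (its back condition is guaranteed only for formulas in the filtration set $\mathrm{Sub}(\varphi)$, which has nothing to do with the fresh variables of $\delta(\ffr_i)$), so pulling the refuting valuation of $\gfr^f$ back along the inverse image need not verify the diamond conjuncts of $\neg\delta(\ffr_i)$ in $\gfr$. The standard proofs either filtrate through a carefully enlarged set of formulas or build the finite countermodel by a selective, step-by-step construction that explicitly avoids creating bounded morphic preimages of the $\ffr_i$, with cycle-freeness reducing this to a condition on neighbourhoods of bounded depth. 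As it stands, Clause~1 is not proved; for a citable source of the missing bookkeeping see \citealt{Blok1978} or \citealt[Ch.~10]{Chagrov1997}.
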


Blok's \citeyearpar{Blok1978} main result can now be stated precisely as follows.

\begin{theo}[Blok Dichotomy for Kripke frames]\label{thm:Blok} If a consistent normal modal logic is a join-splitting, then it is strictly Kripke complete; otherwise it has degree of Kripke incompleteness $2^{\aleph_0}$.
\end{theo}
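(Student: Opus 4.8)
The plan is to prove the two halves of the dichotomy separately, after recording two routine points. First, for any consistent $\nml{L}$ the degree of Kripke incompleteness is at most $2^{\aleph_0}$, since a normal modal logic is a subset of the countable set $\mathcal{L}$. Second, writing $\nml{L}_\star:=\mathrm{Log}(\mathrm{Fr}(\nml{L}))$, one has $\mathrm{Fr}(\nml{L})=\mathrm{Fr}(\nml{L}_\star)$ (every $\ffr\in\mathrm{Fr}(\nml{L})$ validates all formulas valid on all of $\mathrm{Fr}(\nml{L})$, and $\mathrm{Fr}$ is antitone), so $\nml{L}$ and $\nml{L}_\star$ have the same degree, and moreover every logic $\nml{M}$ with $\mathrm{Fr}(\nml{M})=\mathrm{Fr}(\nml{L})$ satisfies $\nml{M}\subseteq\mathrm{Log}(\mathrm{Fr}(\nml{M}))=\nml{L}_\star$. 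Hence in the negative half it will suffice to produce $2^{\aleph_0}$ logics below $\nml{L}_\star$ sharing its Kripke-frame class.

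\emph{Positive half.} Suppose $\nml{L}$ is a consistent join-splitting. By Lemma~\ref{lem:jsplfin}(\ref{jsplfin1}) it is finitely approximable, hence Kripke complete, so $\nml{L}=\mathrm{Log}(\mathrm{Fr}(\nml{L}))$; this gives $\nml{L}'\subseteq\nml{L}$ for every $\nml{L}'$ with $\mathrm{Fr}(\nml{L}')=\mathrm{Fr}(\nml{L})$. For the reverse inclusion write $\nml{L}=\bigvee_{i\in I}\nml{L}_i$ with each $\nml{L}_i$ completely join-irreducible and splitting pair $\langle\nml{N}_i,\nml{L}_i\rangle$; recall from Blok's \citeyearpar{Blok1978} classification of splittings that $\nml{N}_i=\mathrm{Log}(\ffr_i)$ for a finite rooted cycle-free frame $\ffr_i$ (Lemma~\ref{lem:jsplfin}(\ref{jsplfin2}) is one direction of this). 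If $\nml{L}_i\not\subseteq\nml{L}'$, the splitting property forces $\nml{L}'\subseteq\nml{N}_i=\mathrm{Log}(\ffr_i)$, hence $\ffr_i\in\mathrm{Fr}(\nml{L}')=\mathrm{Fr}(\nml{L})$, hence $\nml{L}\subseteq\mathrm{Log}(\ffr_i)=\nml{N}_i$, hence $\nml{L}_i\subseteq\nml{N}_i$ --- contradicting that $\langle\nml{N}_i,\nml{L}_i\rangle$ is a splitting pair. So $\nml{L}_i\subseteq\nml{L}'$ for all $i$, whence $\nml{L}\subseteq\nml{L}'$ and $\nml{L}'=\nml{L}$: the degree is $1$.

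\emph{Negative half (the hard one).} Assume $\nml{L}$ is consistent and not a join-splitting; replacing $\nml{L}$ by $\nml{L}_\star$ (still not a join-splitting: if it were, the positive half gives degree $1$ to $\nml{L}_\star$, hence to $\nml{L}$, forcing $\nml{L}=\nml{L}_\star$ to be a join-splitting, a contradiction), we may take $\nml{L}$ Kripke complete. Then every logic with Kripke-frame class $\mathrm{Fr}(\nml{L})$ is a sublogic of $\nml{L}$, and a \emph{proper} one is Kripke incomplete, so the task is to build $2^{\aleph_0}$ distinct $\nml{M}\subsetneq\nml{L}$ with $\mathrm{Fr}(\nml{M})=\mathrm{Fr}(\nml{L})$. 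The engine is the failure of the join-splitting property: by Blok's \citeyearpar{Blok1978} theory of splitting algebras, $\mathrm{Var}(\nml{L})$ fails to be the join of its splitting subvarieties precisely because it contains a finite rooted subdirectly irreducible algebra with a cycle --- a proper cluster / reflexive point in the dual frame --- not covered by any splitting subvariety. From this ``cyclic block'' one manufactures a countable family of infinite general frames/algebras, all validating $\nml{L}$, by gluing together $\omega$ copies of the block; varying which copies one keeps (indexed by $S\subseteq\omega$) produces $2^{\aleph_0}$ such structures that refute mutually independent sets of formulas, the independence being detected by Jankov--Fine frame formulas of the finite levels of the gluing. Taking $\nml{M}_S$ to be the logic of $\mathrm{Fr}(\nml{L})$ together with the $S$-gluing, distinct $S$ give distinct logics, each $\nml{M}_S\subseteq\nml{L}$ is proper (its gluing refutes some $\nml{L}$-theorem), and --- the crux --- each gluing refutes only $\nml{L}$-theorems invisible to Kripke frames, so $\mathrm{Fr}(\nml{M}_S)=\mathrm{Fr}(\nml{L})$. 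With the ceiling from the first routine point, the degree is exactly $2^{\aleph_0}$.

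The main obstacle is entirely in the negative half, in two linked steps: (i) extracting the cyclic block --- a finite cyclic subdirectly irreducible algebra in $\mathrm{Var}(\nml{L})$ not accounted for by splitting subvarieties --- from the bare failure of the join-splitting property, which is the algebraic heart of Blok's argument and rests on the full theory of splitting algebras and Jankov--Fine frame formulas; and (ii) verifying that the continuum of logics $\nml{M}_S$ it generates all have Kripke-frame class exactly $\mathrm{Fr}(\nml{L})$, i.e., that an infinite gluing of copies of the block refutes only Kripke-invisible consequences while still differing from $\nml{L}$. Step (ii) is exactly where the cycle-freeness hypothesis of Lemma~\ref{lem:jsplfin}(\ref{jsplfin2}) turns out to be sharp: a cycle is precisely what lets an infinite gluing sit ``above'' all its finite Kripke approximations without being witnessed by any of them. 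Everything else --- the $2^{\aleph_0}$ ceiling, the identity $\mathrm{Fr}(\nml{L})=\mathrm{Fr}(\mathrm{Log}(\mathrm{Fr}(\nml{L})))$, and the bookkeeping with $\mathrm{Log}$ and $\mathrm{Fr}$ --- is routine.
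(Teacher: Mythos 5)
Your positive half is sound, and your overall architecture for the negative half is the right one --- it is the same Chagrov--Zakharyaschev-style strategy that the paper itself deploys when it proves the stronger Theorem~\ref{th:vblok} (the paper's own ``proof'' of Theorem~\ref{thm:Blok} is only a pointer to the literature, but Theorem~\ref{th:vblok} subsumes the negative half, since logics valid over the same $\mathcal{V}$-\baos{} a fortiori share the same Kripke frames). The trouble is that the negative half is precisely where your proposal stops being a proof. It is internally inconsistent: you say the infinite general frames ``all validate $\nml{L}$'' and then that each ``refutes some $\nml{L}$-theorem''; only the latter can be right, since otherwise $\nml{M}_S=\nml{L}$. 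More substantively, the mechanism you describe for generating the continuum --- gluing $\omega$ copies of the cyclic block and varying which copies are kept, with independence detected by Jankov--Fine formulas of ``finite levels of the gluing'' --- is not how the construction goes, and it is not clear it could be made to work, because discarding copies of the block gives no control over $\mathrm{Fr}(\nml{M}_S)$. In the actual argument one unwinds the cycle a \emph{fixed} finite number of times, namely $md(\varphi)+1$, exactly so that the original frame cannot detect, via subformulas of $\varphi$, anything hooked onto the end of the unwound cycle; one then attaches a single van Benthem-style general frame $\vbfr_I$ with finite/cofinite admissible sets, and the continuum arises from the index $I\subseteq\omega$ governing which points $a_i$ are doubled, the logics being separated by formulas $\Box(\underline{a_i}\to p)\vee\Box(\underline{a_i}\to\neg p)$ built from definable names of points.

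The deepest gap is that the crux --- every Kripke frame validating $\nml{M}_S$ validates $\nml{L}$ --- is merely asserted. That step is where essentially all the work lies: one needs definable names for the points of the attached structure (whence the Makinson case split $\nml{L}\subseteq\nml{Ver}$ versus $\nml{L}\subseteq\nml{Triv}$, which decides whether variable-free names are available), formulas forcing any frame that refutes some $\psi\in\nml{L}$ to ``see'' the named copy of the van Benthem configuration, and a Lemma~\ref{lem:1}-style collapse argument showing that in a genuine Kripke frame that configuration must degenerate, so that no theorem of $\nml{L}$ can actually be refuted there. You have correctly located both obstacles, but locating them is not overcoming them: as written, the proposal establishes the $2^{\aleph_0}$ ceiling and the positive half, and only gestures at the lower bound in the negative half.
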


\begin{proof}
Apart from the original proof by \cite{Blok1978}, one can  consult numerous more recent references, in particular \citealt{Chagrov1997,Chagrova1998,Kracht99,Zak2001,Litak2005b,RautenbergWZ06,WolterZ06,Litak2008}.  
\end{proof}

Given the duality between Kripke frames and $\mathcal{CAV}$-\baos{}, Theorem \ref{thm:Blok} can be equivalently stated in terms of degrees of $\mathcal{CAV}$-incompleteness. Below we will extend Theorem \ref{thm:Blok} to degrees of $\mathcal{V}$-incompleteness, thereby showing that neither of the properties $\mathcal{C}$ or $\mathcal{A}$ of the duals of Kripke frames are necessary for the Blok Dichotomy. 

\subsection{Maximal Consistent Logics}

The \textit{degree of $\mathcal{V}$-incompleteness} of a normal modal logic $\nml{L}$ is the cardinality of the set of normal modal logics $\nml{L}'$ such that $\nml{L}$ and $\nml{L}'$ are valid over exactly the same $\mathcal{V}$-\baos{}. 

Following the pattern of the proofs quoted in the proof of Theorem \ref{thm:Blok}, before we generalize the theorem itself, we consider two special cases and then use proofs of these special cases in the proof of the main theorem. Recall that by Makinson's Theorem \citep{Makinson1971}, the two maximal consistent unimodal normal logics are $\nml{Triv}$ (the logic of the reflexive point) and $\nml{Ver}$ (the logic of the irreflexive point).

\begin{theo} \label{th:ver}
 The degree of $\mathcal{V}$-incompleteness of $\nml{Ver}$ is continuum.
\end{theo}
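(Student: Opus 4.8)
The plan is to first determine exactly which $\mathcal{V}$-\baos\ validate $\nml{Ver}$, and then to produce $2^{\aleph_0}$ normal modal logics with precisely the same $\mathcal{V}$-algebras. Since $\nml{Ver}$ is $\nml{K}$ together with $\Box\bot$ and every \bao\ validates $\nml{K}$, a \bao\ $\mathfrak{A}$ validates $\nml{Ver}$ precisely when $\Diamond\top=\bot$ in $\mathfrak{A}$, i.e.\ when the operator of $\mathfrak{A}$ is identically $\bot$; such an operator is vacuously completely additive, so the $\mathcal{V}$-\baos\ validating $\nml{Ver}$ are exactly the \baos\ with trivial operator. Hence a normal modal logic $\nml{L}$ is valid over exactly the same $\mathcal{V}$-\baos\ as $\nml{Ver}$ iff (i) $\nml{L}\subseteq\nml{Ver}$ --- forced already by the two-element trivial-operator algebra, whose logic is $\nml{Ver}$, while conversely every trivial-operator \bao\ validates all of $\nml{Ver}$ --- and (ii) $\nml{L}\vDash_{\mathcal V}\Box\bot$, i.e.\ every $\mathcal{V}$-\bao\ validating $\nml{L}$ has trivial operator.

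The device that produces such logics strictly below $\nml{Ver}$ is Theorem~\ref{thm:great}: instantiated with $\ondm=\zedm=\Diamond$ and $a:=\top$, it says that any completely additive \ma\ validating $\beta:=\Box(\Box(\Box p\to p)\to p)$ satisfies $\top\le\Box\bot$, hence has trivial operator. Because $\Box\bot\in\nml{Ver}$ forces $\Box\varphi\in\nml{Ver}$ for every $\varphi$, we have $\beta\in\nml{Ver}$; so every normal modal logic $\nml{L}$ with $\nml{K}\oplus\beta\subseteq\nml{L}\subseteq\nml{Ver}$ satisfies (i) and (ii). The theorem thus reduces to the lattice-theoretic claim that $[\nml{K}\oplus\beta,\nml{Ver}]$ has cardinality $2^{\aleph_0}$ (in particular that it is nontrivial, so $\nml{K}\oplus\beta\subsetneq\nml{Ver}$). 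For this I would imitate Blok's argument behind Theorem~\ref{thm:Blok}. First note that $\nml{Ver}$ is not a join-splitting: every splitting of the lattice of normal modal logics is the Jankov--Fine logic $\nml{K}\oplus\chi_{\ffr}$ of a finite rooted cycle-free frame $\ffr$, and the single reflexive point validates every such $\chi_{\ffr}$ (no cycle-free frame is a bounded morphic image of a generated subframe of that point), so any join of splittings lies inside $\nml{Triv}$ and cannot contain $\Box\bot$; hence $\nml{Ver}$ has degree of Kripke incompleteness $2^{\aleph_0}$. The task is then to realise continuum many of its Kripke companions inside the interval, say as $\nml{L}_S:=\nml{K}\oplus\beta\oplus\{\chi_i\mid i\notin S\}$ for $S\subseteq\omega$, where $\{\chi_i\}_{i\in\omega}$ are Jankov-type formulas belonging to $\nml{Ver}$ that are mutually independent over $\nml{K}\oplus\beta$; independence then yields $\nml{L}_S\neq\nml{L}_{S'}$ whenever $S\neq S'$.

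I expect the real obstacle to be the construction of this independent family. Blok's original argument uses an antichain of finite Kripke frames, but that route is blocked here: every finite \bao\ is completely additive, so any finite \bao\ validating $\beta$ already has trivial operator by Theorem~\ref{thm:great} and hence validates all of $\nml{Ver}$ --- finite algebras cannot separate two logics inside $[\nml{K}\oplus\beta,\nml{Ver}]$. The separating algebras must therefore be genuinely infinite, lattice-incomplete, and not completely additive, yet still validate $\beta$, and building an antichain of these with mutually independent Jankov-type formulas is the delicate step. I would attempt it via a van-Benthem-frame-style construction (in the spirit of Definition~\ref{def:VBframe}) parametrised by subsets of $\omega$ --- an infinite ``clock'' sitting above a descending chain, with the admissible sets controlled by a coinfinite family --- leaning on the robustness of $\mathcal{V}$-incompleteness under such parametrisations that is already exploited in Theorems~\ref{th:Vinc1}, \ref{th:vbinc}, and \ref{th:glbe}.
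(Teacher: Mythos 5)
Your reduction collapses at its key step: the interval $[\nml{K}\oplus\beta,\nml{Ver}]$ with $\beta:=\Box(\Box(\Box p\to p)\to p)$ is a single point, because $\Box\bot$ is already derivable from $\beta$ in $\nml{K}$. Substituting $\bot$ for $p$ in $\beta$ gives $\Box(\Box(\Box\bot\to\bot)\to\bot)$, which is $\nml{K}$-equivalent to $\Box\Diamond\Box\bot$; since $\vdash_{\nml{K}}\Diamond\Box\bot\to\Diamond\top$, this yields $\Box\Diamond\top$ and, by necessitation, $\Box\Box\Diamond\top$. But $\vdash_{\nml{K}}(\Diamond\Box\bot\wedge\Box\Diamond\top)\to\Diamond(\Box\bot\wedge\Diamond\top)\to\bot$, hence $\vdash_{\nml{K}}(\Box\Diamond\Box\bot\wedge\Box\Box\Diamond\top)\to\Box\bot$. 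So $\nml{K}\oplus\beta=\nml{Ver}$ and your family $\nml{L}_S$ cannot live in that interval. This is exactly why van Benthem's axiom carries the antecedent $\Box\Diamond\top$ (cf.\ the remark after Corollary~\ref{cor:id}: these ``contraptions'' are needed precisely in the normal unimodal setting). Your surrounding analysis---that the $\cfV$-\baos{} validating $\nml{Ver}$ are exactly those with trivial operator, that finite algebras cannot separate, and that the witnesses must be infinite non-completely-additive general frames parametrised by subsets of $\omega$---is correct and agrees with the paper; the missing idea is how to write a $\cfV$-lethal axiom weak enough to leave continuum-much room strictly below $\nml{Ver}$.

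The paper achieves this by guarding and relativizing. It builds general frames $\vbenthem_I$ ($I\subseteq\omega\setminus\{0,1\}$) of van Benthem type in which the point $a_i$ is doubled iff $i\in I$; variable-free formulas $\underline{a_i}$ and $\underline{c}$ name points, and the formulas $\Box(\underline{a_i}\to p)\vee\Box(\underline{a_i}\to\neg p)$ separate the continuum many logics $\vblogic_I$. In place of your $\beta$, the $\cfV$-lethal content is the conjunction $\lambda$ of $\underline{a_0}\vee\univdmd^2\underline{a_1}$, of $\underline{a_1}\to\Diamond\underline{c}$, and of the \emph{guarded, relativized} formula $\underline{c}\to\Box(\dualrel(\dualrel p\to p)\to p)$, where $\dualrel q:=\Box(\univdmd\underline{a_1}\to q)$. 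Each $\vbenthem_I$ validates $\lambda$ (the guard $\underline{c}$ and the relativization of the inner boxes to the cone above $a_1$ are what make this possible, in contrast to $\beta$), whereas in any $\cfV$-\bao{} with $\underline{a_0}=\Box\bot\neq\top$ one gets $\underline{c}\neq\bot$, and Theorem~\ref{thm:great} applied with $a:=\underline{c}$, $\zebx:=\dualrel$, $\onbx:=\Box$ forces $\underline{c}\leq\Box\bot\wedge\Diamond^2\underline{a_1}=\bot$, a contradiction. To repair your proposal you would have to replace $\beta$ by such a guarded relativization, at which point your outline essentially becomes the paper's proof.
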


All the details of the proof can be lifted directly from \citealt[\S~8]{Litak2008}, modulo corrected typos in that proof, adjusted notation, and \cfV\ replacing \cfT, \cfA\cfV, or the sum theoreof. For the sake of completeness, we reproduce the argument here.

\setlength{\unitlength}{1.2mm}

\begin{figure}
\begin{center}
\begin{picture}(90,40)

\put(5,15){\dashbox{1}(70,25)[tl]{}}
\put(9,35){transitive}

\multiput(21,20)(10,0){6}{\vector(1,0){8}}
\multiput(21,21)(10,0){2}{\vector(1,1){8}}
\multiput(31,29)(10,0){2}{\vector(1,-1){8}}
\put(51,21){\vector(1,1){8}}
\put(61,29){\vector(1,-1){8}}
\put(70,19){\line(0,-1){9}}
\put(70,10){\vector(-1,0){59}}
\put(10,11){\vector(0,1){8}}

\multiput(15,20)(1.25,0){3}{\circle*{0.3}}

\multiput(9,19)(10,0){8}{$\bullet$}
\multiput(29,29)(10,0){2}{$\bullet$}
\put(59,29){$\bullet$}
\put(9,9){$\bullet$}

\put(9,21.5){$b$} 
\put(19,17){$a_6$} \put(29,17){$a_5$} \put(39,17){$a_4$} \put(49,17){$a_3$} \put(59,17){$a_2$}

\put(69,22){$a_1$} \put(79,17){$a_0$}

\put(29,32){$a'_5$} \put(39,32){$a'_4$} \put(59,32){$a'_2$}

\put(9,7){$c$}

\end{picture}
\vspace{0.5cm}
\caption{The frame $\vbenthem_I$ for $I = \{2, 4, 5, \ldots\}$. \label{fig:modified-vb}}
\end{center}
\end{figure}

\begin{proof}
 Fix an arbitrary $I
\subseteq \omega - \{0, 1\}$. For $i > 1$, define 
\[
\doubled{a_i} \equals \left\{ \begin{array}{l@{\quad \mathrm{if}\quad}l}
\{a_i\} & i \not\in I, \\
\{a_i, a'_i\} & i \in I.
\end{array} \right.
\]
Now let $\vbenthem_I \equals  \langle W_I, R_I, \admis{F}_I \rangle$, where (see~Figure~\ref{fig:modified-vb}):
\begin{itemize}
\item $W_I \equals
\{b\}\cup\{c\} \cup \bigcup\limits_{k \in \omega}\doubled{a_k}$; 
\item $R_I := \{\langle c,b\rangle
\} \cup (\{b\} \times \bigcup\limits_{k \in \omega}\doubled{a_k})
\cup \bigcup\limits_{k > l > 1}(\doubled{a_k} \times \doubled{a_l}) \cup\{\langle a_1,a_0\rangle\}\cup\{\langle a_1,c\rangle\}$;
\item $\admis{F}_I$ consists of finite sets that do not contain $b$ and their
complements. 
\end{itemize}
Let $\vblogic_I$ denote the logic of
$\vbenthem_I$, i.e., the set of all modal formulas valid over
$\vbenthem_I$. We will prove that 
\begin{itemize}
\item distinct $I \subseteq \omega$ produce distinct $\vblogic_I$, but
\item all of these logics share the same class of \cfV-\baos\ with $\nml{Ver}$.
\end{itemize}
First, let us define the following sequences of formulas,\footnote{We use the convention that if $\xi(p)$ is a formula containing $p$ and $\mu$ is a formula, then $\xi(\mu)$ is the result of substituting $\mu$ for all occurrences of $p$ in $\xi(p)$.} which will provide names of points of $\vbenthem_I$:
\begin{eqnarray*}
\alpha_0(p) & \equals & p \\
\alpha_1(p) & \equals & \oper{} \alpha_0(p) \wedge \dual{}^2 \neg \alpha_0(p) \\
\alpha_{k+2}(p) & \equals & \oper{} \alpha_{k+1}(p) \wedge \dual{}^2 \neg \alpha_{k+1}(p) \wedge \oper{}\alpha_1(p) \\
\underline{a_i} & \equals & \alpha_i(\Box\bot) \\
\gamma(p) & \equals &
\Diamond^2\alpha_1(p)\wedge\neg\Diamond\alpha_1(p) \\ 
\underline{c} & \equals & \gamma(\Box\bot). 
\end{eqnarray*}
Moreover,  for arbitrary $\varphi$, let $\dualrel\varphi \equals \Box(\univdmd\underline{a_1} \to \varphi)$. Note that this is a normal modality, with its dual $\operrel\varphi := \Diamond(\univdmd\underline{a_1} \wedge \varphi)$. Recall that $\univdmd\varphi$ is $\varphi \vee \Diamond\varphi$ and its dual operator is $\univbox\varphi := \varphi \wedge \Box\varphi$. 

For arbitrary $I \subseteq \omega - \{0,1\}$, $k \in \omega$, and an arbitrary valuation $V$
on $\vbenthem_I$, we have that $V(\underline{a_k}) = \doubled{a_k}$ and
$V(\underline{c}) = \{c\}$. Hence, for arbitrary $i \in
\omega$, 
\begin{center}
$\vbenthem_I \vDash \Box(\underline{a_i} \imply p) \vee
\Box (\underline{a_i} \imply \neg p)$ 
 \quad iff \quad $i \not\in
I$, 
\end{center}
and thus $\vblogic_I = \vblogic_J$ iff  $I = J$.

Second, define
\begin{eqnarray*}
 \epsilon & \equals& \underline{a_0} \vee \univdmd^2 \underline{a_1} \\
  \zeta & \equals & \underline{a_1}  \to  \Diamond \underline{c} \\
  \eta & \equals & \underline{c}  \to 
  \Box(\dualrel(\dualrel p \to p) \to p).
\end{eqnarray*}
Let $\lambda$ be the conjunction of these three formulas. Now notice that:

\begin{itemize}
\item For arbitrary $I \subseteq \omega - \{0,1\}$, $\vbenthem_I \vDash \lambda$.
\item For arbitrary $\goth{A} \in \cfV$ with $\goth{A} \satisf \lambda$, $\goth{A}
\satisf \underline{a_0}$.
\end{itemize}
The first of these claims follows from the definitions of $\vbenthem_I$ and  $\underline{a_0}$, $\underline{a_1}$ and $\underline{c}$, so the second is all we need to finish the proof.  Assume for reductio that in a $\mathcal{V}$-\bao\ where $\lambda$ holds, we have that $\underline{a_0} \neq \top$. Observe that $\underline{a_1} \neq \bot$ by $\epsilon$ and hence
$\underline{c}\neq \bot$ by $\zeta$; and by definition, $\underline{c} \leq
\Diamond^2\underline{a_1}$.  But by picking $a:=\underline{c}$, $\zebx := \dualrel$, and $\onbx := \Box$ in Theorem \ref{thm:great}, we also get that
$\underline{c} \leq \Box\bot$. Thus, 
\[ \underline{c} \leq \Box\bot\wedge \Diamond^2\underline{a_1} \leq \Box\bot \wedge \Diamond \top = \bot, \]
a contradiction.\end{proof}

\begin{theo} \label{th:triv} 
 The degree of $\mathcal{V}$-incompleteness of $\nml{Triv}$ is continuum.
\end{theo}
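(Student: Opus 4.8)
The plan is to run the argument of Theorem \ref{th:ver} once more, now aiming at $\nml{Triv}$, the logic of the single reflexive point $\circ$, using a \emph{reflexive} modification of the frames $\vbenthem_I$. The starting observation mirrors the one used for $\nml{Ver}$: a \bao{} validates $\nml{Triv}$ exactly when its operator is the identity (equivalently, when it validates $\Box p \leftrightarrow p$), and $\nml{Triv}$ is $\mathcal{V}$-complete since the two-element \bao{} with identity operator lies in $\mathcal{V}$. So it suffices to exhibit $2^{\aleph_0}$ pairwise distinct normal modal logics, each valid over exactly the \baos{} whose operator is the identity; the upper bound ``$\leq 2^{\aleph_0}$'' is automatic from the countability of $\mathcal{L}$. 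For each $I \subseteq \omega - \{0,1\}$ I would let $\vbenthem_I^\circ$ be $\vbenthem_I$ with a reflexive loop added at the dead-end $a_0$ (keeping $\admis{F}_I$ unchanged, so that $\{a_0\}$ is still admissible and $a_0$ now has unique successor set $\{a_0\}$, hence validates $\nml{Triv}$ locally), and set $\vblogic_I^\circ := \mathrm{Log}(\vbenthem_I^\circ)$.

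Three things then need checking. (1) \emph{Distinctness of the $\vblogic_I^\circ$:} adapt the naming formulas $\underline{a_i}, \underline{c}$ of the proof of Theorem \ref{th:ver}, replacing the seed $\Box\bot$ (which no longer names $a_0$, since $a_0$ has become reflexive) by a formula $\rho$ that picks out $a_0$ in $\vbenthem_I^\circ$, and verify as before that $\vbenthem_I^\circ \vDash \Box(\underline{a_i} \to p) \vee \Box(\underline{a_i} \to \neg p)$ iff $i \notin I$, so distinct $I$ give distinct logics. (2) \emph{$\vblogic_I^\circ \subseteq \nml{Triv}$:} the generated subframe of $\vbenthem_I^\circ$ on $\{a_0\}$ with admissible sets $\{\varnothing, \{a_0\}\}$ is isomorphic, as a general frame, to the reflexive point, so $\mathrm{Log}(\vbenthem_I^\circ) \subseteq \nml{Triv}$ (and the inclusion is proper, since $\vbenthem_I^\circ \nvDash \nml{Triv}$, so all $2^{\aleph_0}$ logics are genuinely $\mathcal{V}$-incomplete). (3) \emph{Every $\mathcal{V}$-\bao{} validating $\vblogic_I^\circ$ has identity operator:} isolate a theorem $\lambda^\circ$ of $\vblogic_I^\circ$ playing the role of $\lambda = \epsilon \wedge \zeta \wedge \eta$, and argue by reductio that if in a $\mathcal{V}$-\bao{} satisfying $\lambda^\circ$ the operator were not the identity then — with $\rho$ and $\lambda^\circ$ chosen appropriately — one gets $\underline{a_1} \neq \bot$, hence $\underline{c} \neq \bot$, and then Theorem \ref{thm:great} applied with $a := \underline{c}$, $\onbx := \Box$, and the relativized box $\zebx := \dualrel$ forces $\underline{c} \leq \onbx \bot$, which collides with $\underline{c} \leq \Diamond^2 \underline{a_1} \leq \Diamond\top$ — a contradiction.

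The real work, and the main obstacle, is step (3), where there are two intertwined difficulties. First, the naming formulas of Theorem \ref{th:ver} are tuned to the Noetherian/irreflexive structure of $\vbenthem_I$, so $\rho$ and the point-names must be redesigned for the reflexive setting; the L\"{o}b-flavoured hypothesis $a \leq \onbx(\zebx(\zebx x \to x) \to x)$ of Theorem \ref{thm:great} is natural for well-founded frames, so one must check it still holds at the single element $a = \underline{c}$ once $a_0$ is reflexive — which is fine, since Theorem \ref{thm:great} needs the inequality at only one element and the conjunct $\eta^\circ$ of $\lambda^\circ$ can supply it by fiat, exactly as $\eta$ does in Theorem \ref{th:ver}. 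Second, and more seriously: whereas for $\nml{Ver}$ the conclusion drawn from $\mathcal{V}$ together with $\lambda$ is the \emph{variable-free} equation $\Box\bot = \top$, which by monotonicity immediately collapses the operator, no variable-free equation can force the operator to be the identity (a four-element \bao{} with $\Diamond$ constantly $\top$ on the nonzero elements satisfies exactly the same closed modal terms as the $\nml{Triv}$-algebra), so the derivation of ``operator $=$ identity'' must keep a propositional variable in play throughout. Engineering the reflexive seed $a_0$ and the formula $\lambda^\circ$ so that the $\mathcal{V}$/$\mathcal{R}$-step collapses $\Box$ to $\mathrm{id}$ rather than merely trivializing it — while still keeping $\vbenthem_I^\circ \vDash \lambda^\circ$ and the $\vblogic_I^\circ$ distinct — is the crux. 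Everything else (admissibility of the witnessing valuations and the evaluation of $\underline{a_i}, \underline{c}$ on $\vbenthem_I^\circ$) is routine and parallels Theorem \ref{th:ver} closely.
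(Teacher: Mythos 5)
Your plan coincides with the paper's proof in its essentials: the same reflexivized frames $\vbenthem^\circ_I$ (loop added at $a_0$, admissible sets unchanged), the same three-part decomposition, and the same final appeal to Theorem \ref{thm:great} with $\zebx:=\dualrel$ and $\onbx:=\Box$ at the element named by $\gamma$; your observation that no variable-free formula can do the job is exactly the paper's opening remark. But the proposal stops short of a proof at precisely the point you label ``the crux'': you never exhibit the formula $\rho$, the redesigned names, or the axiom $\lambda^\circ$, and without them neither the distinctness claim (1) nor the collapse claim (3) is established. Worse, the one concrete sub-plan you do commit to---first name the now-reflexive $a_0$ by some $\rho$ and then seed the old $\alpha_i$'s with it---is not what the paper does and is not obviously realizable: once $a_0$ is reflexive, a candidate such as $\univbox^3 p\vee\univbox^3\neg p$ holds at $a_0$ but also at any point whose $3$-neighbourhood happens to miss the (possibly tiny) denotation of $p$ or of $\neg p$, so it fails to name $a_0$ uniformly in the valuation.

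The paper's resolution is to name $a_1$ directly and only \emph{conditionally}: it introduces the non-triviality witness $\zeta(p):=\univdmd^3 p\wedge\univdmd^3\neg p$ and shows that whenever $\zeta(p)$ is satisfied somewhere, $\alpha_1(p):=(\Diamond\univbox^3 p\wedge\univdmd^3\neg p)\vee(\Diamond\univbox^3\neg p\wedge\univdmd^3 p)$ denotes exactly $\{a_1\}$, after which the old $\alpha_i$ ($i\geq 2$) and $\gamma$ are reused verbatim. Accordingly the distinguishing formulas must carry the guard and a second variable, $\zeta(p)\to(\univbox^3(\alpha_i(p)\to q)\vee\univbox^3(\alpha_i(p)\to\neg q))$; your unguarded version in step (1) misbehaves under valuations making $p$ trivial, where $\alpha_i(p)$ no longer names anything. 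The axiom is $\lambda^\circ:=(\zeta(p)\to\univdmd^3\gamma(p))\wedge(\gamma(p)\to\Box(\dualrel(\dualrel q\to q)\to q))$, and the reductio runs: a $\cfV$-\bao{} validating $\vblogic^\circ_I$ but refuting a $\nml{Triv}$-theorem yields a valuation with $\zeta\neq\bot$, hence $\gamma\neq\bot$, and Theorem \ref{thm:great} then forces $\gamma\leq\Box\bot\wedge\Diamond^2\alpha_1=\bot$. Note that this also shows your closing worry is slightly misplaced: the $\cfV$-step is used only to kill $\zeta$ under every valuation; the passage from ``$\zeta$ is never satisfiable'' to ``the operator is the identity'' is carried by the remaining theorems of $\vblogic^\circ_I$, not by complete additivity itself.
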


\newcommand{\bole}[1]{\univbox^{#1}}%{\Box^{\leq#1}}
\newcommand{\dile}[1]{\univdmd^{#1}}%{\Diamond^{\leq#1}}

\begin{proof}
This time, we cannot use variable-free formulas. However, the rest of the proof will be remarkably similar to that of Theorem \ref{th:ver}, including the use of a family of frames $\{\vbenthem^\circ_I\}_{I \subseteq \omega}$, the only modification compared to $\{\vbenthem_I\}_{I \subseteq \omega}$ being that $a_0$ is now taken to be reflexive. In fact, the proof is entirely analogous to that of \citealt[Example 10.58]{Chagrov1997}, with the only minor differences stemming from a slightly more economical way we chose to define our frame (as in \citealt{Litak2008}). 

Note that if a formula 
\[
\zeta(p) \deq \dile{3} p \wedge  \dile{3}\neg p
\]
is  satisfied at some point of $\vbenthem^\circ_I$ under some valuation $V$, then 
\[
\alpha_1(p) \deq  (\Diamond\bole{3}p \wedge \dile{3}\neg p) \vee (\Diamond\bole{3}\neg p \wedge \dile{3} p)
\]
is satisfied at $a_1$ and nowhere else under the same valuation $V$. Formulas $\{\alpha_i(p)\}_{i \geq 1}$ and $\gamma(p)$ can now be taken verbatim from the proof of Theorem \ref{th:ver} and characterize corresponding points under $V$. 
Hence, for arbitrary $i \in
\omega$, 
\begin{center}
$\vbenthem^\circ_I \vDash \zeta(p) \to (\univbox^3(\alpha_i(p) \imply q) \vee
\univbox^3(\alpha_i(p) \imply \neg q))$ \quad iff \quad $i \not\in
I$, 
\end{center}
and thus $\vblogic^\circ_I = \vblogic^\circ_J$ iff $I = J$.

Now take $\lambda^\circ$ to be the conjunction of
\begin{eqnarray*}
 \epsilon^\circ & \equals & \zeta(p) \to \univdmd^3 \gamma(p) \\
  \eta^\circ & \equals & \gamma(p) \to 
  \Box(\dualrel(\dualrel q \to q) \to q),
\end{eqnarray*}
note its validity over arbitrary $\vblogic^\circ_I$, and use Theorem \ref{thm:great} to finish the proof in exactly the same way as we did in Theorem \ref{th:ver}.\end{proof}

\subsection{Blok Dichotomy for \cfV-Incompleteness}

We can now finally state and prove the general result for degrees of \cfV-incompleteness. 

\begin{theo} \label{th:vblok}
If a consistent normal modal logic $\nml{L}$ is not a join-splitting of the lattice of normal modal logics, then $\nml{L}$ has degree of $\mathcal{V}$-incompleteness $2^{\aleph_0}$.
\end{theo}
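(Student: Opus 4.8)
The plan is to transplant Blok's proof of Theorem~\ref{thm:Blok} --- in the form used for $\mathcal{T}$- and $\mathcal{AV}$-incompleteness in \citealt[\S~8]{Litak2008} --- replacing ``Kripke frame'' by ``$\mathcal{V}$-\bao'' and feeding in Theorems~\ref{th:ver} and~\ref{th:triv} as the two base cases. Write $\nml{L}^{\mathcal{V}}:=\{\varphi:\nml{L}\vDash_{\mathcal{V}}\varphi\}$ for the largest normal modal logic validated by the $\mathcal{V}$-\baos\ of $\nml{L}$; then $\nml{L}'$ shares the $\mathcal{V}$-\baos\ of $\nml{L}$ iff $(\nml{L}')^{\mathcal{V}}=\nml{L}^{\mathcal{V}}$, and since $\mathfrak{A}\vDash\nml{L}$ implies $\mathfrak{A}\vDash\nml{L}^{\mathcal{V}}$ for any $\mathcal{V}$-\bao\ $\mathfrak{A}$, it is enough to produce $2^{\aleph_0}$ pairwise distinct logics $\nml{L}_I\subseteq\nml{L}$ with $\nml{L}_I^{\mathcal{V}}=\nml{L}^{\mathcal{V}}$ (the reverse inequality being trivial, as there are only $2^{\aleph_0}$ normal modal logics in all). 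As $\nml{L}$ is consistent, Makinson's Theorem puts $\nml{L}$ inside $\nml{Ver}$ or inside $\nml{Triv}$; I treat $\nml{L}\subseteq\nml{Ver}$, the case $\nml{L}\subseteq\nml{Triv}$ being obtained by passing to reflexive-point variants exactly as Theorem~\ref{th:triv} is obtained from Theorem~\ref{th:ver} (in particular, replacing the variable-free naming formulas $\alpha_i(\Box\bot)$ by the variable-ful ones of that proof). If $\nml{L}=\nml{Ver}$ we are done by Theorem~\ref{th:ver}, so assume the inclusion is strict.

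The construction copies the proof of Theorem~\ref{th:ver}. Because $\nml{L}$ is not a join-splitting, Blok's structure theory for the lattice of normal modal logics (the facts behind Lemma~\ref{lem:jsplfin}; see \citealt[\S~8]{Litak2008}) yields a rooted general frame $\mathfrak{g}\vDash\nml{L}$ that is not approximated by the finite cycle-free frames of $\nml{L}$, and hence contains a point at which one may attach a fresh isomorphic copy of the underlying Kripke structure of $\vbenthem_I$ --- wired, as in Figure~\ref{fig:modified-vb}, so that the points $a_i$ of the copy are named by $\alpha_i(\Box\bot)$ via a relativizing modality $\dualrel$, $\operrel$ --- without destroying validity of $\nml{L}$. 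For $I\subseteq\omega$ let $\mathfrak{g}_I$ be the resulting general frame and set $\nml{L}_I:=\nml{L}\cap\mathrm{Log}(\mathfrak{g}_I)\subseteq\nml{L}$. Distinctness is inherited: the relativized formulas $\Box(\alpha_i(\Box\bot)\to p)\vee\Box(\alpha_i(\Box\bot)\to\neg p)$ belong to $\mathrm{Log}(\mathfrak{g}_I)$ exactly when $i\notin I$, while for every $i$ they fail to be theorems of $\nml{L}$ (else they would be valid over every $\mathfrak{g}_J$, contradicting their refutation on $\mathfrak{g}_J$ for $J\ni i$). Hence $I\mapsto\nml{L}_I$ is injective on $\wp(\omega)$.

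It remains to check $\nml{L}_I^{\mathcal{V}}=\nml{L}^{\mathcal{V}}$, for which the nontrivial inclusion is $\nml{L}^{\mathcal{V}}\subseteq\nml{L}_I^{\mathcal{V}}$: given a $\mathcal{V}$-\bao\ $\mathfrak{A}$ with $\mathfrak{A}\vDash\nml{L}_I$ we must show $\mathfrak{A}\vDash\nml{L}$. This is exactly the collapse argument from the proofs of Theorems~\ref{th:ver} and~\ref{th:triv}: organizing the defining axioms of $\nml{L}_I$ so that, as with $\epsilon,\zeta,\eta$ there, both ``the copy is present'' and ``the copy is collapsible'' are single formulas valid over $\mathfrak{g}_I$, one applies Theorem~\ref{thm:great} with $\zebx:=\dualrel$, $\onbx:=\Box$, and $a$ the algebraic name of the root of the copy (condition~\ref{great2} being precisely what was carried into $\nml{L}_I$) to get $a\le\Box\bot$; together with the ``$\Diamond^2$'' constraint on that root this forces the name of the root to be $\bot$, so the grafted copy contributes nothing in $\mathfrak{A}$ and only the $\nml{L}$-content of $\nml{L}_I$ survives. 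Hence $\mathfrak{A}\vDash\nml{L}$, and therefore $\mathfrak{A}\vDash\nml{L}^{\mathcal{V}}$.

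The main obstacle I anticipate is precisely this last step together with the ``room'' lemma: one must engineer the attachment of the gadget and the relativizing modality $\dualrel$, $\operrel$ so that the naming formulas $\alpha_i$ still isolate the intended points in the presence of an arbitrary ambient $\nml{L}$-frame (needed for distinctness), while the collapse of Theorem~\ref{thm:great} goes through undisturbed by the rest of $\mathfrak{A}$ (needed for the $\mathcal{V}$-closure), and all of this has to be uniform over the frames of $\nml{L}$, whose shape we do not control. This is the only place where non-join-splitting is genuinely used: for a join-splitting (e.g.\ a cycle-free tabular logic, or $\nml{K}$ itself) no such attachment point exists, and the conclusion can indeed fail. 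The $\nml{L}\subseteq\nml{Triv}$ case contributes only the nuisance, already dispatched in Theorem~\ref{th:triv}, that naming must proceed through a genuine propositional variable rather than through $\Box\bot$.
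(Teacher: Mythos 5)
Your high-level architecture is the paper's (graft the $\vbenthem_I$ gadget onto a witness of non-join-splitting, reuse Theorems \ref{th:ver} and \ref{th:triv} as templates, finish with Theorem \ref{thm:great}), but two of your steps fail as written. The most serious is the distinctness argument: having set $\nml{L}_I:=\nml{L}\cap\mathrm{Log}(\mathfrak{g}_I)$, you then assert that the separating formulas $\Box(\alpha_i(\Box\bot)\to p)\vee\Box(\alpha_i(\Box\bot)\to\neg p)$ ``fail to be theorems of $\nml{L}$'' for every $i$. But $\nml{L}_I\subseteq\nml{L}$, so a formula outside $\nml{L}$ lies in \emph{no} $\nml{L}_I$ and separates nothing. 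The paper's construction exists precisely to repair this: one takes a \emph{finite} Kripke frame $\ffr$ for the greatest join-splitting $\nml{L}'\subsetneq\nml{L}$ refuting some $\varphi\in\nml{L}$ at its root, and relativizes the separating formulas by a conjunct $\univdmd^{i+m+1}\neg\varphi$. Since $\varphi\in\nml{L}$, the antecedent is $\nml{L}$-inconsistent and the formula is a (vacuous) theorem of $\nml{L}$; since $\ffr^\bullet_I$ still refutes $\varphi$, the formula is contentful on $\ffr^\bullet_I$ and valid there iff $i\notin I$. Without relativizing to a refuted theorem of $\nml{L}$ you cannot get continuum many distinct logics \emph{below} $\nml{L}$.

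Relatedly, your use of the hypothesis is off: you attach the gadget to a general frame $\mathfrak{g}\vDash\nml{L}$ ``without destroying validity of $\nml{L}$,'' but if that held literally then $\nml{L}_I=\nml{L}$ for all $I$ and distinctness is lost. What non-join-splitting actually buys, via Lemma \ref{lem:jsplfin}, is (i) finite approximability of $\nml{L}'$, hence the finite frame $\ffr$ above, and (ii) a \emph{cycle} in $\ffr$ (a cycle-free finite frame would give a splitting below $\nml{L}$ but not below $\nml{L}'$); the cycle is unwound to length exceeding $md(\varphi)$ so the gadget can be hooked on while the points of $\ffr$ stay insensitive to it at the level of subformulas of $\varphi$ --- what is preserved is the \emph{refutation} of $\varphi$, not validity of $\nml{L}$. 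Your final step also runs the reductio backwards: the paper assumes a $\mathcal{V}$-\bao\ validating $\nml{L}_I$ refutes some $\psi\in\nml{L}$, uses theorems of $\nml{L}_I$ (and the auxiliary point $e$ and chain $d_t,\dots,d_0$, which make those implications valid on $\ffr^\bullet_I$) to force the gadget's name $\underline{c}\wedge\univdmd^l\neg\psi$ to be nonzero, and then contradicts Theorem \ref{thm:great}; saying ``the grafted copy contributes nothing, so only the $\nml{L}$-content survives'' does not substitute for this. These are exactly the places where the hypothesis does real work, so the proposal has genuine gaps.
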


One could argue that there is no need to provide all the details of the proof. While Blok's original construction was unsuitable for  generalizations to most classes of algebras containing \cfC\cfA\cfV\ (see \citealt{Dziobiak78} and in particular \citealt{Litak2005b,Litak2008} for a detailed discussion), a strategy proposed by Chagrov and Zakharyaschev in the 1990's and followed by most of the ``recent references'' quoted in lieu of the proof of Theorem \ref{thm:Blok} is much more flexible.  In particular, \cite{Chagrov1997,Zak2001,Litak2005b,RautenbergWZ06,WolterZ06} and \cite{Litak2008} show how to use $\vblog$ and $\vbfr$ in proofs of the Blok Dichotomy and its generalizations to notions like \cfA\cfV- and \cfT-completeness. Apart from using the more general argument of \S~\ref{sec:V-inc}, there is no significant conceptual difference in the present version. However, we still include a full proof to make the paper self-contained and to clarify how Theorem \ref{thm:great} is used. In fact, we believe our  presentation of the proof has some merits in terms  of clarity, accessibility, and polish, at least compared to earlier papers by Litak. 

\begin{proof}
Assume $\nml{L}$ is not a join-splitting and $\nml{L}'$ is the greatest join-splitting contained in $\nml{L}$, i.e., the join of all such join-splittings. By Lemma \ref{lem:jsplfin}.\ref{jsplfin1}, $\nml{L}'$ is finitely approximable. Then since $\nml{L}'\subsetneq \nml{L}$, there is a finite Kripke frame $\ffr = \langle W, R \rangle$ for $\nml{L}'$ that refutes some $\varphi\in\nml{L}$. Furthermore, there are several assumptions we can make about $\ffr$:
\begin{itemize}[leftmargin=15mm]
\item[\mings] We can choose $\ffr$ to be rooted (i.e., there exists $r \in W$ such that every point can be reached from $r$ via the transitive closure of $R$) and such that every proper generated subframe of $\ffr$ is a frame for $\nml{L}$, by using standard preservation results and the finiteness of $\ffr$. In particular, $\varphi$ is valid over any proper generated subframe of $\ffr$. 
 \item[\mincc] In addition, using Lemma \ref{lem:jsplfin}.\ref{jsplfin2}, we can assume that $\ffr$ contains a \textit{cycle}, i.e., for some $k$ and some $w_0,\dots,w_{k-1}\in W$, we have $w_0Rw_1R\dots Rw_{k-1}Rw_0$. We can obviously choose the cycle $\{w_i\}_{i < k}$ to be minimal. In particular, either $k = 1$ (i.e., the cycle is of the form $\{w_0\}$) or $k > 1$ and $\{w_i\}_{i < k}$ contains no proper subcycles, so all points are irreflexive.
 \item[\single]  Finally, by Makinson's Theorem \citep{Makinson1971}, we also know that $\nml{L}$ is contained in either $\nml{Triv}$ or $\nml{Ver}$. 
\end{itemize}

We are going to reuse the techniques and frames used in the proof of Theorem \ref{th:ver} or Theorem \ref{th:triv}, depending on the subcase of \single\ that holds.  But first, we need to transform $\Ff$ a little bit using the above assumptions, in particular \mincc. Let $W'$ be $W$ with the elements of
  $\{w_i\}_{i < k}$ multiplied $l \deq	
  md(\varphi) + 1$ times, where $md$ is the \emph{modal degree} of a formula---the maximal number of nesting modalities. In other words,  $\{w_i\}_{i < k}$ is
  replaced by $\{w^j_i\}_{i < k,\, j < l}$. Note that $w_i$ can be identified
  with $w^0_i$; formally, we can define an embedding
  \[
  f' (w) \deq \begin{cases}
w^0_i & \text{if } w = w_i\,\text{ for some }\, i < k \\
w & \mbox{otherwise,}
\end{cases} 
  \]
  but it is convenient to suppress  the embedding in the notation. We can also define an auxiliary surjective function $f$ in the reverse direction:
\[
f(w) \deq \begin{cases}
w_i & \text{if } w = w^j_i\,\text{ for some }\, i < k, j < l, \\
w & \mbox{otherwise.}
\end{cases}
\]
Clearly, for any  $w \in \Ff$, $f(f'(w)) = w$. The accessibility relation on the extended frame is defined as follows: 
 for every  $u, v \in W'$, $u R' v$
  if either
  \begin{itemize}
  \item $v$ is not in the cycle and $f(u) R f(v)$, or
  \item $u$ is not in the cycle and for some $i < k$, $v = w^0_i$ and $f(u) R f(v)$, or
  \item for some $j
  < l$, $i < k-1$, $u = w^j_i$ and $v = w^j_{i+1}$, or
  \item for some $j
  < l$,  $u = w^j_{k-1}$ and $v = w^{(j+1) mod\ l}_0$.
  \end{itemize}
   These definitions force that $f$ is a
  bounded morphism, i.e., $\Ff$ is a bounded morphic image of $\Ff' \deq \langle W',R'\rangle$ via $f$.\footnote{Note that $f'$ is not a bounded morphism: $\Ff$ is not a \emph{generated} subframe of $\Ff'$.} 
  This, in turn, implies that 
\begin{center}
(*) for any
  valuation $V$ in $\Ff$, any $w \in \Ff$, and any $\psi$, $w \in
V(\psi)$ iff $w \in V'(\psi)$, 
\end{center}
where
$V'(p)\deq f^{-1}[V'(p)]$ for every variable $p$. Moreover, using \mincc\ we can show that (*)
holds for those $\psi$'s that are subformulas of $\varphi$ even if
$\langle \Ff' ,V'\rangle$ is replaced  by any  model based on a frame $\Gf \deq \langle U, S\rangle$
  that contains  $\Ff'$ as a (not necessarily generated) subframe in such a way that  for
every $v \in \Ff'$ with $v  \neq w^{l-1}_{k-1}$ and for every $u \in U$ with $v S u$, we have $u \in \Ff'$.  In the words of Chagrov and Zakharyaschev \citeyearpar{Chagrov1997} (adjusted to our notation), \emph{we can hook some other model} on $w^{l-1}_{k-1}$ \emph{and points in $\Ff$ will not feel its presence by means of $\varphi$'s subformulas}. Our goal in expanding the cycle beyond the modal depth of $\varphi$ was precisely to guarantee this ``insensitivity''.

Finally, let us set $t \deq |W'| + 1$. 

\begin{figure}
\begin{center}
\begin{picture}(90,40)

\put(5,8){\dashbox{1}(30,17){$\vbfr_I - \{a_0\}$}}

\put(29,19){$\bullet$} \put(29,21.4){$a_1$}
\put(29,9){$\bullet$} \put(29,11.4){$c$}

\multiput(30.8,20)(20,0){2}{\vector(1,0){8}}
\multiput(39,19)(20,0){2}{$\bullet$}
\multiput(45,20)(1.25,0){3}{\circle*{0.3}}
\put(59,19.5){\vector(-3,-1){28.2}}

\put(39,21.4){$d_t$}  \put(59,21.4){$d_0$}

\put(50,25){\dashbox{0.3}(15,15)[t]{}}
\put(57,36){$\ffr$}
\put(49,24){$\bullet$}

\put(64.8,9){$\mathbf{*\!\!\!*}$} \put(66.2,11.2){$e$}

\put(57,29){$\bullet$} \put(55.8,31.7){$w^{0}_0$}
\put(64,29){$\bullet$} \put(60.8,31.7){$w^{0}_{k-1}$}
\put(69,29){$\bullet$} \put(67.8,31.7){$w^1_0$}
\put(84,29){$\bullet$}

\put(78,30){\oval(14,20)[br]}
\put(78,20){\vector(-1,0){17}} 

\put(78,30){\oval(14,6)[br]}
\put(78,27){\line(-1,0){18}} 
\put(60,28){\oval(4,2)[bl]}
\put(58,28){\vector(0,1){1.1}}

\put(47.5,25){$r$}
\put(84,31){$w^{l-1}_{k-1}$}

\put(65.8,30){\vector(1,0){3.2}}
\multiput(60,30)(1.25,0){3}{\circle*{0.3}}
\multiput(72,30)(1.25,0){3}{\circle*{0.3}}
\multiput(78,30)(1.25,0){3}{\circle*{0.3}}

\multiput(66,40)(0.9,-0.42){20}{\circle*{0.1}}

\put(59,20.5){\vector(-2,1){8.3}}

\put(60.8,19){\vector(1,-2){4.2}} 

\put(75,37){$\ffr'-\ffr$}

\end{picture}
\vspace{0.5cm}
\caption{Frame $\ffr^\bullet_I$ (or $\ffr^\circ_I$, depending on reflexivity of $e$).  \label{figure:FI}}
\end{center}
\end{figure}

In order to proceed with the proof, we split \single\ into two cases, depending on whether or not $\nml{L}$ is contained in the logic of the single irreflexive point.

\bigskip

Case (\singirr): $\nml{L} \subseteq \nml{Ver}$. Fix $I \subseteq \omega - \{0,1\}$. Define $\ffr^\bullet_I \deq \langle W'_I, R^\bullet_I, \admis{F}'_I \rangle$, depicted in Figure \ref{figure:FI}, as follows:
$$W'_I := W' \cup \{d_0, \dots, d_t \} \cup \{e\} \cup W_I -\{a_0\},$$
where $W'$ is defined as above and $W_I$ and $a_0$ are as defined in the proof of Theorem \ref{th:ver},
\begin{align*}
R^\bullet_I := \, & R' \cup (R_I \cap W_I -\{a_0\}) \,\cup \\
& \{\langle d_{i+1},d_i \rangle\}_{i < v} \,\cup  \\
& \{\langle d_0, r\rangle\} \cup \{\langle d_0, c\rangle\} \cup \{\langle d_0, e\rangle\} \,\cup \\
& \{\langle w^{l-1}_{k-1}, d_0\rangle\}  \cup \{\langle a_1, d_t\rangle\},
\end{align*}
and $\admis{F}'_I$ consists of the sets of the form $X \cup Y$, where $X$ is an admissible subset of the frame $\vbfr_I$ defined in Theorem \ref{th:ver} and  $Y$ is a finite or cofinite subset of $W'_I - W_I$. 

We begin by defining an auxiliary sequence of formulas, which will help to name points from $\{d_t, \dots, d_0\}$: 
\begin{eqnarray*}
\delta_0(p) & \equals & \Diamond p \\
\delta_{i + 1}(p) & \equals & \Diamond\delta_i(p) \wedge \neg\delta_i(p) \wedge \bigwedge\limits_{j < i}\neg\univdmd \delta_j(p)\\
\underline{d_i} & \equals & \delta_i(\Box\bot).
\end{eqnarray*}
Note that for $i < t$, $\underline{d_i}$ may also happen to be true somewhere in $\ffr'$. However, the denotations of $\underline{d_i}$ and $\underline{d_j}$ have to be disjoint whenever $i \neq j$. By the assumption on $t$, $\underline{d_t}$ cannot be true anywhere in $\ffr'$. It is easy to verify that it cannot hold anywhere inside the $W_I-\{a_0\}$ part either, so $d_t$ is the only point in $\ffr^\bullet_I$ where $\underline{d_t}$ holds (the valuation is irrelevant, as this is a variable-free formula).

Now recall $\{\alpha_i(p)\}_{i \in \omega}$ and $\gamma(p)$ from the proof of Theorem \ref{th:ver}. We can reuse them to redefine variable-free names for points from the $W_I-\{a_0\}$ part as 
\[
 \underline{a_i}  \equals \alpha_i(\underline{d_t}) \quad\mbox{and}\quad 
 \underline{c}  \equals  \gamma(\underline{d_t}). 
\] 

Define $\nml{L}_I \deq \{ \psi \in \nml{L} \mid \ffr^\bullet_I \vDash \psi \}$.  Fix a fresh $q$ not occurring in $\varphi$, and recall that $\varphi$ can be refuted at $r$.
 Since for any $i > 1$,
\[
 \Box((\underline{a_i} \wedge \univdmd^{i + m + 1}\neg\varphi)\imply q) \vee
\Box((\underline{a_i}  \wedge \univdmd^{i + m + 1}\neg\varphi)\imply \neg q) 
\]
belongs to $\nml{L}_I$ iff $i \not\in I$, we have a continuum of distinct logics. We now want to show that whenever $\Ag \in \clofr{V}$ validates $\nml{L}_I$, it also validates $\nml{L}$.   Assume otherwise, i.e., that there is a $\psi \in \nml{L}$ such that $\Ag \nvDash \psi$. By (\singirr), we have that $\psi \in \nml{Ver}$. Hence, $\psi$ cannot be refuted at $e$. Furthermore, by the construction of  $\ffr^\bullet_I$, the point $d_0$ and hence any other point in the frame can be reached from any point refuting $\psi$. Let $l$ be a sufficiently large number; note that, say, $2 \cdot t + 4$ would do. Also, reuse the definition 
$$\dualrel\varphi \equals \Box(\univdmd\underline{a_1} \to \varphi)$$ from the proof of Theorem \ref{th:ver}. Again, fix a fresh $p$ not occurring in $\psi$.  Then the following formulas are theorems of $\nml{L}_I$:
\begin{align*}
\neg\psi  & \to \univdmd^l(\underline{c} \wedge \univdmd^l\neg\psi); \\
(\underline{c} \wedge \univdmd^l\neg\psi)  & \to   \Box(\dualrel(\dualrel p \to p) \to p)).
\end{align*}

The proof can now be completed using Theorem \ref{thm:great} in the same way as we did in the proof of Theorem \ref{th:ver}.

\newcommand{\deno}[2]{\llbracket #1 \rrbracket^\mathcal{#2}}

Case (\singref): $\nml{L} \not\subseteq \nml{Ver}$, i.e., $\nml{D} \subseteq \nml{L} \subseteq \nml{Triv}$. The only difference in the definition of $\ffr^\circ_I$ is the use of $R^\circ_I  \deq  R^\bullet_I \cup \{\langle e, e\rangle\}$. Now, just like in the proof of Theorem \ref{th:triv}, we cannot use variable-free formulas to name points. 
We begin by recalling the ``large enough'' $l$ (say, $2 \cdot t + 4$) from the previous case and letting $q, r$ be fresh for $\varphi$.  Define  
\begin{eqnarray*}
\underline{d_0}(q) & \deq  & (\univdmd^l \neg q\wedge \Diamond\univbox^l q) \vee (\univdmd^l q \wedge \Diamond\univbox^l \neg q) \\
\underline{d_{i + 1}}(q) & \equals & \Diamond\underline{d_i}(q) \wedge \neg\underline{d_i}(q) \wedge \bigwedge\limits_{j < i}\neg\univdmd \underline{d_j}(q).
\end{eqnarray*}
Now reuse the same $\alpha_i$ and $\gamma$ as above to define 
\[
 \underline{a_i}(q)  \equals \alpha_i(\underline{d_t}(q)) \quad\mbox{and}\quad 
 \underline{c}(q)  \equals  \gamma(\underline{d_t}(q)),
\] 
and just as in the proof of Theorem \ref{th:triv}, note that if
\[
\zeta(q) \deq \dile{l} q \wedge  \dile{l}\neg q
\]
is satisfied in an admissible model $\mathcal{M}$ based on $\ffr^\circ_I$, then $\underline{a_i}(q)$ does the job of a unique name, i.e., if $\deno{\zeta(q)}{M}  \neq \varnothing$, then  $\deno{\underline{a_i}(q)}{M} = \{a_i\}$.

Define $\nml{L}_I \deq \{ \psi \in \nml{L} \mid \ffr^\circ_I \vDash \psi \}$. Since for any $i > 1$,
\[
 \zeta(q) \to \univbox^l((\underline{a_i}(q) \wedge \univdmd^l\neg\varphi)\imply r) \vee
\univbox^l((\underline{a_i}(q)  \wedge \univdmd^l\neg\varphi)\imply \neg r) 
\]
belongs to $\nml{L}_I$ iff $i \not\in I$, we have a continuum of distinct logics. 

Again,  we  want to show that whenever $\Ag \in \clofr{V}$ validates $\nml{L}_I$, it also validates $\nml{L}$.   Assume otherwise, i.e., that there is a $\psi \in \nml{L}$ such that $\Ag \nvDash \psi$; using reasoning analogous to that in the (\singirr) case we get that $\psi \in \nml{Triv}$. 
 Let $var(\psi)$ be the collection of all free variables in $\psi$. The following formula holds in $\ffr^\circ_I$:
 \[
 \neg \psi \to \bigvee\limits_{q \in var(\psi)}\zeta(q).
 \]
(Assume it does not. Then there exists an admissible valuation refuting $\psi$ such that the denotations  of variables in $var(\psi)$ are either empty or equal to $W'_I$. But then this countermodel for $\psi$ can be collapsed to a single reflexive point, a contradiction). Fix a fresh $p$ and define 
$$\dualrel^q\varphi \equals \Box(\univdmd\underline{a_1}(q) \to \varphi).$$
Then the following formulas are theorems of $\nml{L}_I$:
\begin{align*}
\neg\psi \wedge \bigvee\limits_{q \in var(\psi)}\zeta(q) & \to \univdmd^l(\univdmd^l\neg\psi \wedge \bigvee\limits_{q \in var(\psi)}(\zeta(q) \wedge \underline{c}(q))); \\
\univdmd^l\neg\psi \wedge \zeta(q) \wedge \underline{c}(q)  & \to   \Box(\dualrel^q(\dualrel^q p \to p) \to p)).
\end{align*}
 
The proof can now be completed using Theorem \ref{thm:great} in the same way as we did in the proof of Theorem \ref{th:triv}.\end{proof}

\subsection{The Polymodal Case}

We stated and proved the Blok Dichotomy for unimodal logics, but this was only for technical and notational convenience. It is possible to adjust the proof to the polymodal setting by following \citealt{Litak2008}, even though Makinson's Theorem does not hold in the signature with more than one modality. As discussed in \citealt[\S~3]{Litak2008}, one needs to use the \emph{Minimal Variety Theorem} instead, which generalizes Makinson's result. We refer the reader to \citealt{Litak2008} for all the details.

\subsection{Degrees of Relative Incompleteness} \label{sec:relative}

An intriguing if somewhat esoteric question is to investigate degrees of \emph{relative} incompleteness: given classes $\mathcal{K}_1$ and $\mathcal{K}_2$ of \baos, let us say that the \textit{degree of $\mathcal{K}_1$-incompleteness relative to $\mathcal{K}_2$} of a normal modal logic $\nml{L}$ is the cardinality of the set of logics $\nml{L}'$ such that $\nml{L}'$ is $\mathcal{K}_2$-complete and $\nml{L}$ and $\nml{L}'$ are valid over exactly the same $\mathcal{K}_1$-\baos. To motivate such comparisons, recall that many of the algebraic completeness notions have frame-theoretic equivalents: $\mathcal{CA}$-completeness is equivalent to completeness with respect to \textit{normal neighborhood frames} as in \citealt{Dosen1989}; $\mathcal{AV}$-completeness is equivalent to completeness with respect to \textit{discrete frames} as in \citealt{tenCate2007}; $\mathcal{CV}$-completeness is equivalent to completeness with respect to \textit{full possibility frames} as in \citealt{Holliday2015}; and $\mathcal{V}$-completeness is equivalent to completeness with respect to \textit{principal possibility frames} as in \citealt{Holliday2015}. Degrees of relative incompleteness can be seen as providing a measure of how fine-grained these alternative modal semantics are relative to each other and to Kripke semantics.

In fact, techniques used here (and in earlier references) allow us to investigate some of these degrees of relative incompleteness. Let us briefly sketch the form of such proofs, leaving details as exercises for interested readers.  In the presence of relatively strong consequences of the Axiom of Choice such as $\nml{BPI}$, we can use the same trick as in the proofs of Theorem \ref{th:conpV} and Corollary \ref{cor:id} (and in \citealt[\S\ 4.6]{Wolter1993}  or in \citealt[\S\ 4.4]{Litak2005}) to transfer sequences of frames $\vbenthem_I$ (proof of Theorem \ref{th:ver}) and $\vbenthem^\circ_I$ (proof of Theorem \ref{th:triv}) into (general-frame duals of) neighborhood frames. In this way, we can show that for $\nml{Triv}$ and $\nml{Ver}$, their degree of \cfV-incompleteness \textit{relative to \cfC\cfA} is equal to continuum. Pushing matters further, we can similarly transform the sequences $\ffr^\bullet_I$ and $\ffr^\circ_I$ used in the proof of Theorem \ref{th:vblok} to show that a similar result obtains for any neighborhood-complete $\nml{L}$. It is somewhat more problematic to generalize this reasoning to logics that are not necessarily neighborhood-complete: one would need to use $\nml{L}^{\cfC\cfA}$, the \emph{\cfC\cfA-closure of $\nml{L}$} (the smallest neighborhood-complete logic containing $\nml{L}$), and adjust the definition of $\nml{L}_I$ to $\{ \psi \in \nml{L}^{\cfC\cfA} \mid \ffr^\bullet_I \vDash \psi \}$ or $\{ \psi \in \nml{L}^{\cfC\cfA} \mid \ffr^\circ_I \vDash \psi \}$, depending on whether (\singirr) or (\singref) holds. There seems to be, however, no guarantee that $\nml{L}^{\cfC\cfA}$ will be sound over the same class of \cfV-\baos\ as $\nml{L}$. So the strongest form of the \emph{\cfV-Dichotomy relative to \cfC\cfA} that we can show at present would be restricted to those $\nml{L}$ for which such a conservativity condition holds---in particular, neighborhood-complete ones.
 
In the reverse direction, we can show similar results about degrees of $\clofr{\omega C}$-incompleteness relative to, e.g., \cfA\cfV\ (duals of discrete frames) using techniques from \citealt{Litak2008}. The key observation is that the general frames used in the proofs of corresponding variants of the Blok Dichotomy use \emph{all} finite and cofinite sets as the collection of admissible sets; that is, they are (duals of) \cfA\cfV-\baos. Again, one can state a \emph{$\clofr{\omega C}$-Dichotomy relative to \cfA\cfV}  restricted to those $\nml{L}$ for which a corresponding \cfA\cfV-conservativity condition holds: their \emph{minimal nominal} extension  (see \S~\ref{subsec:hybrid} and \citealt{Litak2006}) is sound with respect to the same class of $\clofr{\omega C}$-\baos. In particular, this covers all \cfA\cfV-complete logics, i.e., those whose minimal nominal extension is conservative.
 
\section{Strengthening the Inference System} \label{sec:syntax}

In this and the next section, we return to the theme of \citealt{Benthem1979}: turning semantic incompleteness results into syntactic non-conservativity results. In van Benthem's example,  the formula $\Box\Diamond\top\to\Box\bot$ is not derivable from the $\vblog$-axiom according to the derivability relation $\vdash_\nml{K}^{mnu}$ of \S~\ref{sec:VB}, or equivalently, it does not belong to the normal modal logic $\vblog$, but the derivation is possible in relatively weak extensions of the logic. We can already see this abstractly: as observed in \S~\ref{subsec:RecV}, there exists a derivability relation  with a decidable notion of proof that exactly matches $\mathcal{V}$-consequence, so by \S~\ref{ssec:vb}, $\Box\Diamond\top\to\Box\bot$ is derivable from the $\vblog$-axiom in this sense. However, we would like to see non-conservativity in more concretely-given logics. As it turns out, there are several well-motivated extensions that can be used for this purpose.  

In the present section, we discuss derivations in existing calculi. The first of these, weak second-order logic (\S\ \ref{subsec:wso}), was proposed in van Benthem's \citeyearpar{Benthem1979} paper. Despite its illuminating character,  for our purposes it would be nice not to switch from the modal language to an altogether different syntax. One can derive the problematic formula in much weaker calculi that are well known in the modal community, namely, a basic nominal calculus in \S~\ref{subsec:hybrid} and the tense calculus in \S~\ref{subsec:tense}. For us,  their importance comes from the fact that they characterize consequence over classes of algebras narrower than the class of $\mathcal{V}$-\baos{}: $\mathcal{AV}$-\baos{} in the nominal case and $\mathcal{T}$-\baos{} in the tense case.

\subsection{Weak Second-Order Logic}\label{subsec:wso} 

We begin with the extension van Benthem considered: \textit{weak second-order logic}. As suggested in \S~\ref{sec:intro}, the idea is to translate formulas $\varphi$ of the modal language into formulas $SO(\varphi)$ of the monadic second-order language with a single binary relation symbol---and then deduce $SO(\Box\Diamond\top\to\Box\bot)$ from $SO(\vblog\mbox{-axiom})$ in some second-order calculus, guided by the informal proof that every Kripke frame that validates $\vblog$ validates $\Box\Diamond\top\to\Box\bot$ (Lemma \ref{lem:1}). For a $\varphi$ containing propositional variables $p_1,\dots,p_n$, $SO(\varphi)$ is defined as $\forall P_1\dots\forall P_n \forall x\, ST_x(\varphi)$ where the \textit{standard translation} $ST_x(\varphi)$ of $\varphi$ is defined recursively as usual, with the key clauses $ST_x(p_i)=P_ix$ and $ST_x(\Box\varphi)=\forall y (Rxy\rightarrow ST_y(\varphi))$ where $y$ is a fresh variable. What van Benthem calls `weak second-order logic' is the deductive system for the monadic second-order language that extends a complete axiomatic system for first-order logic with the following axioms for the monadic second-order quantifier:
\begin{itemize}
\item $\forall P(\varphi\rightarrow\psi)\rightarrow (\forall P\varphi\rightarrow \forall P\psi)$;
\item $\varphi\rightarrow \forall P\varphi$ where $P$ does not occur free in $\varphi$;
\item $\forall P\varphi\rightarrow \varphi [\psi/P]$ where $\psi$ is a first-order formula having some free variable $x$ such that $\varphi[\psi/P]$ is the result of replacing subformulas of the form $Pu$ by $\psi[u/x]$, subject to the usual qualifications about free and bound variables.
\end{itemize}
To see that $SO(\Box\Diamond\top\to\Box\bot)$ is derivable from $SO(\vblog\mbox{-axiom})$ in this weak second-order logic, first observe that $SO(\vblog\mbox{-axiom})=SO( \Box\Diamond\top\rightarrow \Box (\Box (\Box p\rightarrow p)\rightarrow p))$ is 
\begin{eqnarray*}
&&\forall P\big( \forall x(Rxy\rightarrow \exists z(Ryz\wedge \top))\rightarrow \\
&& \quad\;\;\, \forall y\big(Rxy\rightarrow \big(\forall z\big(Ryz\rightarrow \big(\forall u (Rzu\rightarrow Pu)\rightarrow Pz\big)\big)\rightarrow Py\big)\big)\big).
\end{eqnarray*}
Using the third of the second-order axioms, we can remove the $\forall P$ and substitute $v\not=y$ for $Pv$, for each variable $v$, to obtain:
\begin{eqnarray*}
&&\forall x(Rxy\rightarrow \exists z(Ryz\wedge \top))\rightarrow \\
&& \quad\;\;\, \forall y\big(Rxy\rightarrow \big(\forall z\big(Ryz\rightarrow \big(\forall u (Rzu\rightarrow u\not=y)\rightarrow z\not=y\big)\big)\rightarrow y\not=y\big)\big).
\end{eqnarray*}
Then it is straightforward to derive $SO(\Box\Diamond\top\to\Box\bot)$, which is equivalent to 
\[\forall x(\forall y(Rxy\to\exists z  Ryz) \to \forall y \neg Rxy),\] 
by a formalized version of the proof of Lemma \ref{lem:1}. Thus, although the modal relation $\vdash_\nml{K}^{mnu}$ of \S~\ref{sec:VB} is too weak to derive $\Box\Diamond\top\to\Box\bot$ from the $\vblog$-axiom, the weakest reasonable system of second-order logic is enough to do so under translation.

 As van Benthem \citeyearpar{Benthem1978} observed, the relation $\vdash_{wso}$ of derivability in weak second-order logic axiomatizes a natural notion of semantic consequence. Interpret the second-order language with a binary relation symbol in general frames $\gfr=\langle W,R, \admis{W}\rangle$ where $\langle W,R\rangle$ is a Kripke frame and $\admis{W}\subseteq\wp(W)$ is closed under first-order definability: given any formula $\varphi$ of the second-order language whose free individual variables are $x,x_1,\dots,x_n$, whose free predicate variables are $X_1,\dots,X_m$, and which does not contain any second-order quantifiers, if $w_1,\dots,w_n\in W$ and $A_1,\dots,A_m\in\admis{W}$, then $$\{w\in W\mid \gfr \vDash \varphi [w,w_1,\dots,w_n,A_1,\dots,A_m]\}\in\admis{W}.$$ Define the consequence relation $\vDash_{wso}$ by: $\Sigma \vDash_{wso} \varphi$ iff for all general frames $\gfr$ as above and all variables assignments $f$ mapping individual variables to elements of $W$ and predicate variables to elements of $\admis{W}$, if $\gfr\vDash \sigma [f]$ for every $\sigma\in\Sigma$, then $\gfr\vDash \varphi[f]$. Then one can show the following.

\begin{proposition}\label{prop:comp1} $\Sigma \vdash_{wso}\varphi$ iff $\Sigma\vDash_{wso}\varphi$.
\end{proposition}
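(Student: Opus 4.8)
The plan is to prove Proposition \ref{prop:comp1} as a standard soundness-and-completeness pair for a Henkin-style semantics, treating it as an instance of the well-understood relationship between general frames and the minimal second-order deductive apparatus. First I would dispatch soundness ($\vdash_{wso}$ implies $\vDash_{wso}$) by a routine induction on derivations: every axiom of the underlying first-order calculus is valid in the usual Tarskian sense over any $\langle W,R\rangle$ with any assignment $f$, and the three second-order axioms are validated precisely because $\admis{W}$ is closed under first-order definability with parameters from $\admis{W}$. The only clause that deserves a word is the comprehension/instantiation axiom $\forall P\varphi \to \varphi[\psi/P]$ with $\psi$ first-order having a free variable $x$: here one checks that the set $\{w\in W\mid \gfr\vDash\psi[w,\dots]\}$ lies in $\admis{W}$ by the closure hypothesis, so it is a legitimate value for the predicate variable $P$, and then validity of the instance follows from validity of the universally quantified premise. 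Modus ponens and the generalization rules preserve validity as usual. This half is essentially bookkeeping.

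The substance is completeness: $\Sigma\vDash_{wso}\varphi$ implies $\Sigma\vdash_{wso}\varphi$, which I would prove by contraposition via a Henkin construction. Assume $\Sigma\nvdash_{wso}\varphi$. Extend $\Sigma\cup\{\neg\varphi\}$ (after the usual preliminary step of moving to a language with infinitely many fresh individual constants and witnessing all existential individual quantifiers in the standard way) to a maximal $\vdash_{wso}$-consistent set $\Gamma$ with witnesses. Then build the term model: $W$ is the set of (equivalence classes under provable equality of) closed individual terms, $R$ is read off from membership of $Rtt'$ in $\Gamma$, and, crucially, $\admis{W}$ is defined to consist of exactly those subsets of $W$ that are first-order definable over the term model with parameters — equivalently, the sets $\{t\in W\mid \chi(t,\bar s)\in\Gamma\}$ for first-order $\chi$ and term parameters $\bar s$. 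One then proves a Truth Lemma by induction on formulas: for first-order formulas this is the ordinary Henkin argument, and for the second-order quantifier $\forall P$ one uses the instantiation axiom for the "only if" direction (if $\forall P\chi\in\Gamma$ then every definable set satisfies $\chi$) and maximal consistency together with the presence of enough witnessing constants for the converse — here the fact that the second-order instantiation axiom only allows substitution of first-order-definable $\psi$ is exactly what makes the Henkin model's $\admis{W}$, itself consisting of the definable sets, an adequate witness class. Since $\Sigma\subseteq\Gamma$ but $\neg\varphi\in\Gamma$, the resulting general frame $\gfr=\langle W,R,\admis{W}\rangle$ with the canonical assignment satisfies every $\sigma\in\Sigma$ but refutes $\varphi$, so $\Sigma\nvDash_{wso}\varphi$.

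The main obstacle, and the step I would take the most care over, is verifying that the canonically defined $\admis{W}$ in the Henkin model is genuinely closed under first-order definability with parameters drawn from $\admis{W}$ itself — i.e. that it really is an admissible general frame in the sense the proposition requires — and dovetailing this with the second-order clause of the Truth Lemma. A priori the closure condition quantifies over parameters that are themselves definable sets, so one must check that "definable using a definable-set parameter" collapses back to "definable", which it does because a definable-set parameter can be unfolded into its defining formula; one should also confirm that the comprehension available in $\vdash_{wso}$ (substitution of first-order $\psi$ for $P$) suffices to certify membership of every such set in $\admis{W}$ via the Truth Lemma, with no appeal to impredicative comprehension. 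Everything else — the witnessing extension, Lindenbaum's lemma for $\vdash_{wso}$, handling equality — is routine and I would not belabor it. I would also remark that this result is due to van Benthem \citeyearpar{Benthem1978}, so one may alternatively simply cite it; but the self-contained Henkin argument above is short enough to include.
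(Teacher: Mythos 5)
The paper itself offers no proof of Proposition \ref{prop:comp1}; it is stated as an observation credited to van Benthem \citeyearpar{Benthem1978}, so your proposal has to be judged on its own merits against the standard Henkin-for-general-models argument. Your soundness half is fine, and you correctly isolate the one point needing care there (the instantiation axiom is sound precisely because of the closure of $\admis{W}$ under first-order definability). Your handling of the closure of the canonical $\admis{W}$ under definability with definable-set parameters is also correct.

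There is, however, a genuine gap in the completeness half, at exactly the step you wave through with ``maximal consistency together with the presence of enough witnessing constants.'' The only witnesses you introduce are individual constants for first-order existentials; these do nothing for the second-order quantifier. The hard direction of the Truth Lemma for $\forall P$ requires: if $\forall P\chi\notin\Gamma$ (equivalently $\neg\forall P\chi\in\Gamma$), then some set in $\admis{W}$ refutes $\chi$ in the term model, which with your definition of $\admis{W}$ as the definable sets would require $\neg\chi[\psi/P]\in\Gamma$ for some first-order $\psi$. Nothing in the calculus forces this: the only second-order principle relating $\forall P\chi$ to its instances is the instantiation axiom, which goes the wrong way, so a maximal consistent $\Gamma$ can perfectly well contain $\neg\forall P\chi$ together with $\chi[\psi/P]$ for \emph{every} first-order $\psi$ --- the situation is exactly analogous to $\omega$-incompleteness, and then your term model satisfies $\forall P\chi$, breaking the Truth Lemma. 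The repair is to Henkinize the second-order quantifier as well: add fresh monadic predicate constants $Q$, close $\Gamma$ under witnessing axioms $\neg\forall P\chi\to\neg\chi[Q/P]$ (consistency of this step uses freshness of $Q$ and second-order generalization), and take $\admis{W}$ to be the sets $\{t\mid Qt\in\Gamma\}$ named by these constants. One then checks that this family coincides with the parametrically first-order definable sets of the expanded term model --- the inclusion of every definable set uses the comprehension instances $\exists P\,\forall x(Px\leftrightarrow\psi)$, which are derivable from the instantiation axiom --- so the required closure condition holds and the Truth Lemma goes through. With that amendment your argument is the standard one; without it, the $\forall P$ clause fails.
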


\noindent Thus, like $\mathcal{V}$-consequence (recall \S~\ref{subsec:RecV}), the relation $\vDash_{wso}$ is recursively axiomatizable.

It is noteworthy that the \baos\ underlying the general frames for weak second-order logic above are $\mathcal{AT}$-\baos. The $\mathcal{A}$ part is obvious, since $\admis{W}$ must contain all singleton subsets of $W$ by the requirement of closure under definability. For the $\mathcal{T}$ part, where the \bao\ operator $\Diamond$ is given by $\Diamond A= R^{-1}[A]$ for $A\in\admis{W}$, its residual $\resdual$ is given by $\resdual A = \{w\in W\mid R^{-1}(w)\subseteq A\}$. This is clearly definable as above, and it is easy to see that for any $A,B\in \admis{W}$, we have $\Diamond A\subseteq B$ iff $A\subseteq \resdual B$. This shows that the underlying \bao\ is a $\mathcal{T}$-\bao\ (and hence a $\mathcal{V}$-\bao). Thus, for any $\sigma$ and $\varphi$ such that  $\sigma \vDash_{\cfA\cfT} \varphi$, we also have $SO(\sigma)\vDash_{wso} SO(\varphi)$ and hence $SO(\sigma)\vdash_{wso} SO(\varphi)$ by Proposition \ref{prop:comp1}.  

\subsection{Minimal Nominal Extension}\label{subsec:hybrid}

Let $\mathcal{H}$ be a propositional modal language with two types of atomic formulas: \textit{propositional variables} $p$, $q$, $r$ \dots, and \textit{nominals} $i$, $j$, $k$ \dots\ (cf.~Footnote \ref{f:firsthyb}). In the intended Kripke semantics, the difference between propositional variables and nominals is that nominals must be evaluated as singleton sets instead of arbitrary subsets. 

A \textit{nominal modal logic} is a set $\nml{L}$ of formulas of the language $\mathcal{H}$ that satisfies the conditions of a normal modal logic (with uniform substitution of formulas for propositional variables, but not for nominals) and the following:
\begin{itemize}
\item[(a)] $\nml{L}$ is closed under uniform substitution of nominals for nominals;
\item[(b)] for every $n\in\mathbb{N}$ and nominal $i$, $\Box^{\leq n}(i\rightarrow\varphi)\vee \Box^{\leq n}(i\rightarrow\neg\varphi)\in \nml{L}$, where $\Box^{\leq n}\varphi~:=~\varphi\wedge\Box\varphi\wedge\dots\wedge \Box^n\varphi$;
\item[(c)] if $\ell(\neg i)\in \nml{L}$ for all nominals $i$, then $\ell(\bot)\in\nml{L}$,
\end{itemize}
where $l$ is any \textit{necessity form}, defined as follows. Fixing a symbol $\natural$ not occurring in formulas of $\mathcal{H}$, the set of necessity forms is the smallest set containing $\natural$ such that for all necessity forms $\ell(\natural)$ and $\varphi\in\mathcal{H}$, both $(\varphi\rightarrow \ell(\natural))$ and $\Box l(\natural)$ are also necessity forms.\footnote{\label{ft:principal}The leading occurrence of $\Box$ in the $\Box \ell(\natural)$ clause and in all necessity forms built using $\Box \ell(\natural)$ will be called a \emph{principal} one. This will matter when considering polymodal generalizations, e.g., in \S~\ref{subsec:PureModal}.} In the statement of condition (c), $\ell(\varphi)$ is the formula of $\mathcal{H}$ obtained by substituting $\varphi$ for $\natural$ in $\ell(\natural)$. The condition (c) is the COV rule of \citealt{Gargov1993}. 

Let the \textit{minimal nominal extension} $\nml{L}.n$ of a normal unimodal logic $\nml{L}$ be the smallest nominal modal logic that includes $\nml{L}$. Ten Cate and Litak \citeyearpar{tenCate2007} showed that conservativity of $\nml{L}.n$ over $\nml{L}$ is equivalent to $\nml{L}$ being the logic of some class of \textit{discrete} general frames---general frames in which every singleton subset is admissible---and that discrete frames are duals of $\mathcal{AV}$-\baos, which yields the following.

\begin{proposition}[Ten Cate and Litak]\label{prop:discrete} A normal modal logic $\nml{L}$ is $\mathcal{AV}$-complete iff $\nml{L}.n$ is a conservative extension of $\nml{L}$.
\end{proposition}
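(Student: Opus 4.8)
The plan is to chain together the two results of ten Cate and Litak recalled in the paragraph preceding the statement, so that the proof reduces to routine bookkeeping relating classes of algebras to classes of frames. First I would unwind $\mathcal{AV}$-completeness using the reformulation given just after Definition \ref{def:complete}: $\nml{L}$ is $\mathcal{AV}$-complete iff $\nml{L} = \mathrm{Log}(\mathcal{K})$ for some class $\mathcal{K}$ of $\mathcal{AV}$-\baos{}. The next step is to pass from algebras to frames via the Stone-type representation for $\mathcal{AV}$-\baos{}: every $\mathcal{AV}$-\bao{} is isomorphic to the dual $\gfr^{+}$ of a discrete general frame, and conversely the dual of any discrete general frame is an $\mathcal{AV}$-\bao{}, with a modal formula valid over $\gfr$ iff it is validated by $\gfr^{+}$. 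Hence, as $\mathcal{K}$ ranges over classes of $\mathcal{AV}$-\baos{}, the logics $\mathrm{Log}(\mathcal{K})$ are exactly the logics $\mathrm{Log}(\mathcal{D})$ obtained as $\mathcal{D}$ ranges over classes of discrete general frames, so $\nml{L}$ is $\mathcal{AV}$-complete iff $\nml{L}$ is the logic of some class of discrete general frames.

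It then remains to identify ``being the logic of a class of discrete general frames'' with ``$\nml{L}.n$ conservative over $\nml{L}$'' (recall $\nml{L}\subseteq\nml{L}.n$ always, so conservativity means $\nml{L}.n\cap\mathcal{L}\subseteq\nml{L}$). For the soundness half I would check that the defining schemata of a nominal modal logic are valid over every discrete general frame $\gfr$ when nominals are permitted to denote arbitrary singletons: (a) is immediate, (b) holds because with $V(i)=\{w\}$ the point $w$ satisfies either $\varphi$ or $\neg\varphi$, and the COV rule (c) is sound because if a necessity form $\ell(\bot)$ fails at some point under an admissible valuation then it does so through some reachable point $u$, and in a discrete frame $\{u\}$ is admissible, so a fresh nominal $i$ with $V(i)=\{u\}$ witnesses the failure of $\ell(\neg i)$. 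Consequently, if $\gfr$ validates $\nml{L}$ then $\mathrm{Log}_{\mathcal{H}}(\gfr)\supseteq\nml{L}.n$; so given $\varphi\in\mathcal{L}$ with $\varphi\notin\nml{L}=\mathrm{Log}(\mathcal{D})$, picking $\gfr\in\mathcal{D}$ refuting $\varphi$ yields $\varphi\notin\nml{L}.n$, i.e. $\nml{L}.n$ is conservative. For the converse half I would run the Henkin-style canonical-model construction for the nominal logic $\nml{L}.n$: worlds are the maximal (suitably named) $\nml{L}.n$-consistent sets, the relation and the admissible sets $\{\,\widehat{\chi}\mid\chi\in\mathcal{H}\,\}$ are defined in the usual way, and one shows, using axioms (b) and especially the COV rule (c) exactly as in \citealt{Gargov1993} and \citealt{tenCate2007}, that every world is the unique point satisfying some nominal, so the relevant singletons are admissible and the canonical frame is discrete; any non-theorem of $\nml{L}.n$ --- in particular any $\varphi\in\mathcal{L}\setminus\nml{L}$ once conservativity is assumed --- is then refuted on a discrete general frame validating $\nml{L}.n\supseteq\nml{L}$, whose dual is an $\mathcal{AV}$-\bao{} validating $\nml{L}$; hence $\nml{L}$ is $\mathcal{AV}$-complete.

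The main obstacle is entirely packaged into the two imported facts: the Stone--J\'onsson--Tarski-style representation showing that $\mathcal{AV}$-\baos{} are precisely the duals of discrete general frames, and the completeness theorem for nominal modal logics with respect to discrete general frames, where the essential point is that the COV rule (c) is exactly the ingredient forcing the canonical general frame to be discrete (without it one obtains only general frames whose named points need not be singletons). Both are established by ten Cate and Litak \citeyearpar{tenCate2007}, building on \citealt{Gargov1993}; granting them, the argument above is just the translation between the frame side and the algebra side together with the easy soundness check for the nominal schemata.
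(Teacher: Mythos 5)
Your proposal is correct and follows essentially the same route as the paper, which offers no independent proof but simply chains the two cited facts of ten Cate and Litak---that conservativity of $\nml{L}.n$ is equivalent to $\nml{L}$ being the logic of a class of discrete general frames, and that discrete frames are exactly the duals of $\mathcal{AV}$-\baos{}. Your additional sketches of the soundness of the nominal schemata (in particular the COV rule over discrete frames) and of the named canonical-model argument are accurate reconstructions of the imported results rather than a divergence from the paper's approach.
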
 

Since the logic $\vblog$ is $\mathcal{V}$-incomplete and hence $\mathcal{AV}$-incomplete, it follows from Proposition \ref{prop:discrete} that its minimal nominal extension $\vblog.n$ is not a conservative extension. Indeed, it is easy to show that $\Box\Diamond\top\to\Box\bot$ belongs to $\vblog.n$. First observe that by condition (b) above, the smallest nominal modal logic contains $i \to \Box(i \to \Diamond i)$ and hence
\[
\Box(i \to \Box(i \to \Diamond i)).
\]
Then as an instance of the $\vblog$-axiom with the conditional in the consequent contraposed, we have
\[
\Box\Diamond\top \to \Box(i \to \Diamond(i \wedge \Box \neg i))\in \vblog.n.
\]
Using the normality of $\Box$, the two formulas above yield:

\[
\Box\Diamond\top \to \Box \neg i \in\vblog.n,
\]
whence $\Box\Diamond\top\to\Box\bot\in \vblog.n$ by the COV rule.\\

\subsection{Minimal Tense Extension}\label{subsec:tense}

The \textit{minimal tense extension} $\nml{L}.t$ of a normal unimodal logic $\nml{L}$ is the smallest normal bimodal logic including $\nml{L}$ and the axioms $p\rightarrow \Box\Diamond^{-1}p$ and $p\rightarrow \Box^{-1}\Diamond p$, where $\Diamond^{-1}$ and $\Box^{-1}$ are the new pair of modal operators.

The following analogue of Proposition \ref{prop:discrete} for $\mathcal{T}$-\baos\ is easy to see, given that the Lindenbaum-Tarski algebra of $\nml{L}.t$ is always a $\mathcal{T}$-\bao.

\begin{proposition}\label{prop:Tconserv} A normal modal logic $\nml{L}$ is $\mathcal{T}$-complete iff $\nml{L}.t$ is a conservative extension of $\nml{L}$.
\end{proposition}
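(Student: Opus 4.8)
The plan is to mimic the proof of Proposition~\ref{prop:discrete}, replacing the semantic characterization of $\mathcal{AV}$-\baos{} by discrete frames with the direct algebraic observation that the Lindenbaum–Tarski algebra of $\nml{L}.t$ is a $\mathcal{T}$-\bao{}. Both directions of the biconditional are then essentially formal.

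\emph{($\Rightarrow$)} Suppose $\nml{L}$ is $\mathcal{T}$-complete, so $\nml{L}$ is the logic of some class $\mathcal{K}$ of $\mathcal{T}$-\baos{}. Given any $\mathcal{T}$-\bao{} $\mathfrak{A}=\langle\mathfrak{B},\Diamond\rangle\in\mathcal{K}$ with residual $\resdual$, expand it to a bimodal algebra $\mathfrak{A}^t$ by adding the operator $\Diamond^{-1}$ defined as the conjugate $\resoper$ of $\Diamond$ (equivalently, take the dual $\Box^{-1}$ to be $\resdual$). Since conjugation is a symmetric relation ($a\wedge\Diamond b=\bot$ iff $\resoper a\wedge b=\bot$), the algebra $\mathfrak{A}^t$ validates both tense axioms $p\to\Box\Diamond^{-1}p$ and $p\to\Box^{-1}\Diamond p$; this is a routine verification using that $a\le\Box\resoper a$ iff $\Diamond\neg\resoper a\le\neg a$, which follows from the conjugacy identity applied to $b:=\neg\resoper a$. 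Hence every $\mathfrak{A}^t$ validates $\nml{L}.t$. Now if $\varphi\notin\nml{L}$ is a formula in the original unimodal language, then some $\mathfrak{A}\in\mathcal{K}$ refutes $\varphi$, so $\mathfrak{A}^t$ refutes $\varphi$ as well; since $\mathfrak{A}^t\models\nml{L}.t$, this gives $\varphi\notin\nml{L}.t$. Thus $\nml{L}.t$ is conservative over $\nml{L}$.

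\emph{($\Leftarrow$)} Suppose $\nml{L}.t$ is conservative over $\nml{L}$. Let $\mathfrak{T}$ be the Lindenbaum–Tarski algebra of $\nml{L}.t$ (as a bimodal algebra with operators $f$ for $\Diamond$ and $g$ for $\Diamond^{-1}$); this algebra validates exactly the theorems of $\nml{L}.t$. The tense axioms guarantee that $f$ and $g$ are conjugates: $p\to\Box\Diamond^{-1}p\in\nml{L}.t$ and $p\to\Box^{-1}\Diamond p\in\nml{L}.t$ translate to $a\le \neg g\neg f\, ?$—more precisely they yield, for all $a,b\in\mathfrak{T}$, that $a\wedge f b\ne\bot$ iff $g a\wedge b\ne\bot$ (the standard fact that the tense axioms axiomatize conjugate pairs). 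Hence the modal reduct $\mathfrak{T}^\flat$ of $\mathfrak{T}$ obtained by forgetting $g$ is a $\mathcal{T}$-\bao{}, and by conservativity it validates exactly the formulas in $\nml{L}$. So $\nml{L}$ is the logic of a single $\mathcal{T}$-\bao{}, hence $\mathcal{T}$-complete.

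\emph{Main obstacle.} Neither direction is deep; the only care needed is in the bookkeeping of the conjugacy/residuation identities—checking that the tense axioms $p\to\Box\Diamond^{-1}p$ and $p\to\Box^{-1}\Diamond p$ correspond exactly to ``$\Diamond$ and $\Diamond^{-1}$ are a conjugate pair'' and hence that each determines the other via $\resdual a=\neg\resoper\neg a$, so that passing to the modal reduct lands us in $\mathcal{T}$. This is folklore (it is the standard duality behind tense logic), but it is the one place where a careless translation between the modal axioms and the algebraic inequality could go wrong, so I would spell out the conjugacy computation once and reuse it in both directions.
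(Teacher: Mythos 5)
Your proof is correct and follows essentially the same route as the paper, which dispatches the proposition in one line by noting that the Lindenbaum--Tarski algebra of $\nml{L}.t$ is always a $\mathcal{T}$-\bao{}---precisely the key fact your ($\Leftarrow$) direction spells out, while your ($\Rightarrow$) direction fills in the standard expansion of a $\mathcal{T}$-\bao{} by its conjugate. The conjugacy bookkeeping you flag is indeed the only thing to check, and your handling of it is sound.
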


Since the logic $\vblog$ is $\mathcal{V}$-incomplete and hence $\mathcal{T}$-incomplete, it follows from Proposition \ref{prop:Tconserv} that its minimal tense extension $\vblog.t$ is not a conservative extension. That a tense extension may fail to be conservative is fascinating (see \citealt[\S5.4]{Wolter1993}; \citealt[\S3.3]{Kracht1997}; \citealt[p.~170]{Goldblatt2001}), especially when one looks at a concrete derivation. We will sketch such a derivation, using the metatheorem for all tense logics that $\Box$ and $\Diamond^{-1}$ are residuals, so $\varphi\rightarrow \Box\psi$ is a theorem iff $\Diamond^{-1}\varphi\rightarrow\psi$ is a theorem. Where $\chi := \Diamond^{-1}\Box\Diamond\top$, we have the following theorems of $\vblog.t$:
\begin{enumerate}
\item[0.] $\Box\Diamond\top\rightarrow \Box(\Box(\Box\neg \chi\rightarrow \neg \chi)\rightarrow\neg \chi)$ \quad instance of $\vblog$-axiom
\item[1.] $\Box\Diamond\top\rightarrow \Box\Diamond^{-1}\Box\Diamond\top$ \quad instance of tense axiom
\item[2.] $\Box\Diamond\top\rightarrow\Box \chi$ \quad from 1 by definition of $\chi$
\item[3.] $\Box\Diamond\top\rightarrow \Box (\neg\Box (\Box\neg \chi\rightarrow \neg \chi))$ \quad from 0 and 2 by normal modal reasoning
\item[4.] $\Diamond^{-1}\Box\Diamond\top\rightarrow \neg\Box (\Box\neg \chi\rightarrow \neg \chi)$ \quad from 3 by residuation
\item[5.] $\chi\rightarrow \neg\Box (\Box\neg \chi\rightarrow \neg \chi)$ \quad from 4 by definition of $\chi$
\item[6.] $\chi\rightarrow \Diamond (\chi\wedge\Box\neg \chi)$ \quad rewriting 5
\item[7.] $\chi\rightarrow \Diamond (\Diamond (\chi\wedge\Box\neg \chi)\wedge\Box\neg \chi)$ \quad from 6 by normal modal reasoning
\item[8.] $\chi\rightarrow \bot$ \quad from 7 by normal modal reasoning
\item[9.] $\Diamond^{-1}\Box\Diamond\top\rightarrow\bot$ \quad from 8 by definition of $\chi$
\item[10.] $\Box\Diamond\top\rightarrow\Box\bot$ \quad from 9 by residuation.
\end{enumerate}
But $\Box\Diamond\top\rightarrow\Box\bot\not\in\vblog$ by Lemma \ref{lem:2}, so $\vblog.t$ is not a conservative extension.\footnote{An intriguing fact observed by Humberstone \citeyearpar[p.~870]{Humberstone2011} is that $\Box\Diamond\top\rightarrow\Box\bot$ can be derived from the  $\vblog$-axiom using the so-called Halld\'{e}n necessitation rule.}

\section{Toward a Syntactic Characterization of $\mathcal{V}$-Consequence} \label{sec:newsyntax}

\newcommand{\lgqm}{\ensuremath{\mathcal{L}_{\mathrm{GQM}}}}
\newcommand{\lubox}{\ensuremath{\mathcal{L}(\ubox)}}

In \S~\ref{sec:syntax}, we saw how existing calculi and rules characterizing consequence over narrower classes of algebras allow us to show the incompleteness of $\vblog$. In this section, we are going to discuss new extensions of the deductive apparatus that are sound for $\mathcal{V}$-consequence and also allow us to prove the incompleteness of $\vblog$ and $\nml{GLB}$. Our rule(s) are inspired by the reformulation of complete additivity as condition $\mathcal{R}$ in \S~\ref{sec:R&V}. In particular, it will be convenient to work with the following obviously equivalent version of $\mathcal{R}$.

\begin{lemma}\label{lem:Rref} The condition $\mathcal{R}$ is equivalent to: for all $a,b\in\Ag$, \textit{if} for all $c\in\mathfrak{A}$ such that $b\leq c < \top$, there is a $d\in\mathfrak{A}$ such that $c\leq d < \top$ and $a\leq\Box d$, \textit{then} $a\leq\Box b$.
\end{lemma}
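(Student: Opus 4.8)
The plan is to derive the stated condition from $\mathcal{R}$ by pure Boolean duality, replacing every element appearing in $\mathcal{R}$ by its complement. The cleanest starting point is the contrapositive formulation of $\mathcal{R}$ recorded in the Remark immediately after the definition of $\mathcal{R}$, namely
\[
\forall a,b\,\big((\forall c\preceq b\,\exists d\preceq c\;\Diamond d\leq a)\Rightarrow \Diamond b\leq a\big),
\]
where, as throughout, $x\preceq y$ abbreviates ``$x\neq\bot$ and $x\leq y$''. I would then apply the order-reversing bijection $x\mapsto\neg x$ of $\Ag$ to each of the (free and bound) variables $a,b,c,d$.

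First I would record how the atomic conditions transform under complementation. For $x,y\in\Ag$ we have $\neg x\preceq\neg y$ iff $\neg x\neq\bot$ and $\neg x\leq\neg y$, i.e.\ iff $y\leq x<\top$; and $\Diamond\neg x\leq\neg y$ iff $y\leq\neg\Diamond\neg x=\Box x$. Substituting $\neg c$ for the bound variable $c$ turns ``$\forall c\preceq b$'' into ``$\forall c$ with $b\leq c<\top$''; substituting $\neg d$ for $d$ together with passing from $a$ to $\neg a$ turns ``$\exists d\preceq c$'' with matrix $\Diamond d\leq a$ into ``$\exists d$ with $c\leq d<\top$ and $a\leq\Box d$''; and the conclusion $\Diamond b\leq a$ becomes $a\leq\Box b$ after passing from $a,b$ to $\neg a,\neg b$. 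Because $x\mapsto\neg x$ is a bijection of $\Ag$, the outer universal quantifier over all pairs $\langle a,b\rangle$ is preserved, so the displayed sentence above is logically equivalent to
\[
\forall a,b\,\Big(\big(\forall c\,(b\leq c<\top\Rightarrow\exists d\,(c\leq d<\top\,\wedge\,a\leq\Box d))\big)\Rightarrow a\leq\Box b\Big),
\]
which is exactly the condition in the statement. Combined with Corollary~\ref{cor:reqv}, this also yields a $\Box$-reformulation of $\mathcal{V}$.

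I expect no real obstacle here beyond careful bookkeeping. The one place where a sign error could slip in is the passage from the membership relation $\preceq$ to the mixture of a non-strict inequality $b\leq c$ and a strict one $c<\top$ on the complemented side (the strictness being the image of $c\neq\bot$), and likewise for $d\preceq c$; so I would write these two unpackings out explicitly. As an alternative that avoids quoting the Remark, one could instead re-run the proofs of Theorems~\ref{th:rinv} and~\ref{thm:VtoR} verbatim with $\Box$ and $\bigwedge$ in place of $\Diamond$ and $\bigvee$ --- legitimate since, as noted right after its definition, $\mathcal{V}$ may equally be stated with meets and $\Box$ --- but the dualization route above is shorter and more transparent.
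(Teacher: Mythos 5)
Your proposal is correct: complementing all four variables in the contrapositive form of $\mathcal{R}$ does turn $\neg c\preceq\neg b$ into $b\leq c<\top$, $\neg d\preceq\neg c$ into $c\leq d<\top$, and $\Diamond\neg d\leq\neg a$ into $a\leq\Box d$, yielding precisely the stated condition. The paper offers no proof at all, merely calling the two formulations ``obviously equivalent,'' and your dualization is exactly the routine bookkeeping being left implicit there.
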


This \textit{if} \dots\ \textit{then} has the feel of an inference rule that we could try to write in a modal language. We will start by showing how this can be done in a rather expressive modal language. Then we will gradually limit expressivity, finally achieving rule(s) expressible in the plain modal syntax, yet still sufficient for incompleteness proofs.

\subsection{Extended Languages}\label{subsec:ExtLang}

The first language we will consider is the language \lgqm\ of \textit{Global Quantificational Modalities} \citep{H&L2016}, given by the following grammar:
\[\varphi ::= p\mid \neg\varphi \mid (\varphi\wedge\varphi) \mid \Box \varphi\mid [\forall p]\varphi\mid [\exists p]\varphi.\]

Before we give a formal semantics, or even some intuition for these  quantificational modalities, let us introduce the other language we are concerned with in this subsection:  \lubox, the extension of the basic unimodal language with the universal modality $\ubox$. The dual of $\ubox$, the existential modality $\udiam$, is defined as usual by $\udiam \varphi:=\neg\ubox\neg\varphi$.  For a normal unimodal logic $\nml{L}$, let $\nml{L}.\mathsf{A}$ be the smallest bimodal logic---with modalities $\Box$ and $\ubox$---that extends $\nml{L}$ with the  $\nml{S5}$ axioms for $\ubox$ and the axiom $\ubox p\rightarrow \Box p$. Note that $\nml{L}.\mathsf{A}$ is always a conservative extension of $\nml{L}$: a modal algebra validating $\nml{L}$ becomes an algebra validating $\nml{L}.\mathsf{A}$ with the interpretation $\hat{\theta}(\mathsf{A}\varphi)$ as $\top$ if $\hat{\theta}(\varphi)=\top$ and $\bot$ otherwise.

Let us now return to \lgqm. The intended semantics for $[\forall p]$ is the same as for $\forall p\ubox \varphi$. The intended semantics for $[\exists p]\varphi$ is the same as for $\exists p\ubox \varphi$ (note that we are using the universal modality here as well, not its dual!). In a \bao, we have the following interpretations:
\begin{eqnarray*}
\hat{\theta}([\forall p]\varphi)&=&\begin{cases} \top &\mbox{if }\widehat{\theta'}(\varphi)=\top\mbox{ for all }\theta'\mbox{ that differ from }\theta\mbox{ at most at }p \\ \bot &\mbox{otherwise}\end{cases} \\
\hat{\theta}([\exists p]\varphi)&=&\begin{cases} \top &\mbox{if }\widehat{\theta'}(\varphi)=\top\mbox{ for some }\theta'\mbox{ that differs from }\theta\mbox{ at most at }p \\ \bot &\mbox{otherwise.}\end{cases}.
\end{eqnarray*}

\lgqm\ is clearly more expressive than $\mathcal{L}(\ubox)$: the universal modality $\ubox$ can be defined by $\ubox \varphi:= [\forall p] \varphi$ where $p$ does not occur in $\varphi$. But just like \lubox, \lgqm\ can still be interpreted in any algebra. 
This distinguishes \lgqm\ from typical modal languages involving propositional quantifiers: such quantifiers are normally interpreted using infinite operations, which
poses problems in the absence of lattice-completeness. For more on \lgqm, its semantics, axiomatization, and expressive power, see \citealt{H&L2016}.

The version of $\mathcal{R}$ in Lemma \ref{lem:Rref} can be directly translated into a sentence of \lgqm:
$$
[\forall p][\exists q]\big(\big(\ubox(\beta \to p) \wedge \udiam \neg p\big) \to \big(\ubox(p \to q) \wedge \udiam \neg q \wedge \ubox(\alpha \to \Box q)\big)\big) \to \ubox(\alpha \to \Box\beta).
$$
In order to arrive at a principle expressible in \lubox, let us first weaken the above axiom to a rule:
$$
\inferrule{[\forall p] [\exists q] \big (\big(\ubox(\beta \to p) \wedge \udiam \neg p\big) \to \big(\ubox(p \to q) \wedge \udiam \neg q \wedge \ubox(\alpha \to \Box q)\big)\big)}{\alpha \to \Box\beta}.
$$
 While the rule is easy to understand semantically, it is not very convenient to work with in deductions. One would need to use contraposition with almost every conceivable implication. So there is another version of the rule, where $\udiam$ and $\neg$ do not occur, yielding something more interesting from a constructive point of view:

$$
\inferrule{[\forall p] [\exists q]\big(\ubox(\beta \to p) \to \big(\ubox p \vee \big(\ubox(p \to q) \wedge (\ubox q \to \ubox p)\wedge\ubox(\alpha \to \Box q)\big)\big)\big)}{\alpha \to \Box\beta}.
$$

\newcommand\new{\reflectbox{\ensuremath{\mathsf{N}}}}
\newcommand\hnu{\scalebox{.7}{\new}}

 A natural strategy to obtain a corresponding rule in  pure \lubox\ is to use suitable 
\emph{freshness assumptions}, which we will write using nominal-logic-like notation \citep{Pitts2013,Pitts2016}  (cf. Footnote \ref{f:firsthyb}), with `$p\# \alpha,\beta$' ($p$ is for \textit{fresh for} $\alpha$, $\beta$) meaning that $p$ does not occur in $\alpha,\beta$:
\begin{equation}
\inferrule{\big(\ubox(\beta \to p) \wedge \udiam \neg p\big) \to \big(\ubox(p \to \chi(p)) \wedge \udiam \neg \chi(p) \wedge \ubox(\alpha \to \Box \chi(p))\big) \quad p \# \alpha, \beta}{\alpha \to \Box\beta}.\tag{$\mathcal{V}$-inf}\label{V-inf}
\end{equation}
By writing `$\chi(p)$', we are only stressing that $p$ \emph{can} occur in $\chi$ (unlike in $\alpha$ or $\beta$), not that it must. Let us show that \ref{V-inf} preserves validity over any $\mathcal{V}$-\bao{}.

\begin{proposition}\label{prop:V-infSound} The inference rule \ref{V-inf} preserves validity over any $\mathcal{V}$-\bao .
\end{proposition}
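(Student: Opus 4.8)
The plan is to reduce Proposition~\ref{prop:V-infSound} to Theorem~\ref{thm:VtoR} via Lemma~\ref{lem:Rref}. Suppose $\Ag$ is a $\mathcal{V}$-\bao{} and suppose the premise of \ref{V-inf} is valid over $\Ag$: for every valuation $\theta$ and every element interpreting the ``fresh'' $p$, the premise formula evaluates to $\top$. I want to show $\alpha\to\Box\beta$ is valid, i.e., for every $\theta$, writing $a:=\hat\theta(\alpha)$ and $b:=\hat\theta(\beta)$, we have $a\le\Box b$. By Lemma~\ref{lem:Rref} (applied to the dual operator $\Box$), it suffices to verify the hypothesis of that reformulation of $\mathcal{R}$, namely: for all $c$ with $b\le c<\top$, there is a $d$ with $c\le d<\top$ and $a\le\Box d$. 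Since $\Ag\in\mathcal{V}$, Theorem~\ref{thm:VtoR} tells us $\Ag$ satisfies $\mathcal{R}$, so it suffices to establish that hypothesis.

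So fix such a $c$, with $b\le c<\top$. First I would unwind what the premise says once we instantiate $p$ by $c$: because $p\#\alpha,\beta$, the values of $\alpha$ and $\beta$ are unaffected by this reinterpretation, while $\chi(p)$ gets sent to some element $e:=\widehat{\theta[p\mapsto c]}(\chi(p))$. Since $\ubox(\beta\to p)$ evaluates to $\top$ exactly when $b\le c$, which holds, and $\udiam\neg p$ evaluates to $\top$ exactly when $c\ne\top$, which also holds, the antecedent of the premise is $\top$; hence the consequent is $\top$. Reading off the three conjuncts of the consequent: $\ubox(p\to\chi(p))=\top$ gives $c\le e$; $\udiam\neg\chi(p)=\top$ gives $e\ne\top$, i.e., $e<\top$; and $\ubox(\alpha\to\Box\chi(p))=\top$ gives $a\le\Box e$. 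Thus $d:=e$ witnesses $c\le d<\top$ and $a\le\Box d$, as required.

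Having verified the hypothesis of the $\mathcal{R}$-reformulation for the element $a,b$, Lemma~\ref{lem:Rref} (which is just $\mathcal{R}$ rephrased, together with Theorem~\ref{thm:VtoR} giving $\mathcal{R}$ on $\Ag$) delivers $a\le\Box b$, i.e., $\hat\theta(\alpha\to\Box\beta)=\top$. Since $\theta$ was arbitrary, $\Ag\vDash\alpha\to\Box\beta$, completing the argument. (One small point worth spelling out in the write-up: the dual version of $\mathcal{R}$ used in Lemma~\ref{lem:Rref} is stated with $\Box$, $<\top$ and the $b\le c$ ordering; the translation table for $[\forall p]$, $[\exists q]$, $\ubox$, $\udiam$ into the element language is exactly what matches the quantifier/modality prefixes of \ref{V-inf} to that statement, so the instantiation ``take $p:=c$, and let $q$ range over the witness $d$'' is precisely the Skolemization of the $\forall c\,\exists d$ in Lemma~\ref{lem:Rref}.)

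The only mild obstacle is bookkeeping: being careful that the freshness condition $p\#\alpha,\beta$ is genuinely used (it is what lets us reinterpret $p$ while keeping $\hat\theta(\alpha)=a$ and $\hat\theta(\beta)=b$ fixed), and that the semantic clauses for $\ubox$ and $\udiam$ in a \bao{} are the two-valued ones recalled just before the statement ($\hat\theta(\ubox\varphi)=\top$ iff $\hat\theta(\varphi)=\top$), so that $\ubox(\psi\to\rho)=\top$ really does mean $\hat\theta(\psi)\le\hat\theta(\rho)$ and $\udiam\neg\psi=\top$ really does mean $\hat\theta(\psi)\ne\top$. There is no genuine difficulty beyond matching the quantifier prefix of \ref{V-inf} to the $\forall\exists\forall$ shape of $\mathcal{R}$; everything reduces to Theorem~\ref{thm:VtoR} plus Lemma~\ref{lem:Rref}.
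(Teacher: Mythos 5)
Your proof is correct and follows essentially the same route as the paper's: fix a valuation, for each $c$ with $\hat\theta(\beta)\le c<\top$ reinterpret $p$ as $c$ (using freshness to keep $\alpha,\beta$ fixed), read off the witness $d=\widehat{\theta'}(\chi(p))$ from the consequent of the premise, and conclude via the reformulation of $\mathcal{R}$ in Lemma~\ref{lem:Rref}. Your explicit appeal to Theorem~\ref{thm:VtoR} to justify that a $\mathcal{V}$-\bao{} satisfies $\mathcal{R}$ is a step the paper leaves implicit, but the argument is the same.
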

 
\begin{proof} Given a $\mathcal{V}$-\bao\ $\mathfrak{A}$, suppose $\mathfrak{A}$ validates the premise of the rule, and consider any valuation $\theta$ for $\mathfrak{A}$. We claim that $\hat{\theta}(\alpha\rightarrow\Box\beta)=\top$.  Take any $c\in\mathfrak{A}$ such that $\hat{\theta}(\beta)\leq c<\top$. Let $\theta'$ be the valuation that differs from $\theta$ at most at $p$ such that $\theta'(p)= c$. Since $p$ does not occur in $\beta$, $\hat{\theta}(\beta)=\widehat{\theta'}(\beta)$, so $\widehat{\theta'}(\beta)\leq c <\top$. Then by the assumption that the premise of the rule is valid, we have $c\leq \widehat{\theta'}(\chi(p))<\top$ and $\widehat{\theta'}(\alpha)\leq \widehat{\theta'}(\Box\chi(p))=\Box \widehat{\theta'}(\chi(p))$. Let $d:=\widehat{\theta'}(\chi(p))$. Since $p$ does not occur in $\alpha$, $\widehat{\theta'}(\alpha)=\hat{\theta}(\alpha)\leq \Box d$. So we have shown that for any $c\in\mathfrak{A}$ such that $\hat{\theta}(\beta)\leq c<\top$, there is a $d\in\mathfrak{A}$ such that $c\leq d < \top$ and $\hat{\theta}(\alpha)\leq \Box d$. Then by the reformulation of $\mathcal{R}$ in Lemma \ref{lem:Rref}, $\hat{\theta}(\alpha)\leq \Box\hat{\theta}(\beta)=\hat{\theta}(\Box\beta)$, so $\hat{\theta}(\alpha\rightarrow\Box\beta)=\top$, as claimed. Thus, the conclusion of the rule is valid over $\mathfrak{A}$.\end{proof}

Proposition \ref{prop:V-infSound} shows that if a normal modal logic $\nml{L}$ is $\mathcal{V}$-complete, then the minimal normal extension of $\nml{L}.\mathsf{A}$ that is closed under the rule \ref{V-inf} is a conservative extension of $\nml{L}$. By contraposition, to show the failure of $\mathcal{V}$-completeness, it suffices to show the failure of conservativity.

As above, we have two subtly differing syntactic variants. The statement of \ref{V-inf} above is the semantically convenient version of the rule, whereas deductively (and constructively) one may have some preference for
\begin{equation*}
\inferrule{ \ubox(\beta \to p) \to \big( \ubox p \vee \big(\ubox(p \to \chi(p)) \wedge (\ubox \chi(p) \to \ubox p)\wedge\ubox(\alpha \to \Box \chi(p))\big)\big) \quad p \# \alpha, \beta}{\alpha \to \Box\beta}.
\end{equation*}

However, the premise of this variant still does not look particularly appealing. In fact, for our purposes we can use a simpler  special instance:
\begin{equation}
\inferrule{ p \to \chi(p) \\  \ubox \chi(p) \to \ubox p \\ \alpha \to \Box \chi(p) \\ p \# \alpha}{\alpha \to \Box\bot}.\tag{$\mathcal{V}$-spec}\label{V-spec}
\end{equation}
If the premises of \ref{V-spec} are valid over a \bao\ $\mathfrak{A}$, then the premise of \ref{V-inf} is also valid over $\mathfrak{A}$ with $\beta:=\bot$; so if $\mathfrak{A}$ is a $\mathcal{V}$-\bao, then the conclusion $\alpha \to \Box\bot$ is valid over $\mathfrak{A}$.

Let us use \ref{V-spec} to derive $\Box\Diamond\top \to \Box\bot$ from the $\vblog$-axiom $$\Box\Diamond\top \to {\Box(\Box(\Box p \to p) \to p)}.$$ We take $\alpha := \Box\Diamond\top$ and  choose $\chi(p):=\Box(\Box p \to p) \to p$. Our rule says that to derive $\alpha \to \Box\bot$, we need to show that the following three formulas are theorems of $\vblog.\ubox$:
\begin{center}
(a) $p \to (\Box(\Box p \to p) \to p)$ \qquad (b) $\ubox (\Box(\Box p \to p) \to p) \to \ubox p \qquad$ (c) $\vblog$.
\end{center}
 
Premise (c) is an axiom. Premise (a) is a theorem of $\nml{K}$ and thus a fortiori of $\vblog.\ubox$.  Premise (b) is a theorem of $\nml{K}.\ubox$. Its derivation is the only place where we use axioms for the universal modality. For we have
$$
\vdash_{\mathsf{K}.\ubox}  \ubox (\Box(\Box p \to p) \to p) \to \ubox (\Box p \to p),
$$
which with the theorem $\ubox q\rightarrow \ubox \Box q$ gives us
$$
\vdash_{\mathsf{K}.\ubox}  \ubox (\Box(\Box p \to p) \to p) \to \ubox \Box(\Box p \to p),
$$
which with the axiom $\ubox(q\to r)\to (\ubox q\to \ubox r)$ yields the desired
$$
\vdash_{\mathsf{K}.\ubox}  \ubox (\Box(\Box p \to p) \to p) \to \ubox p.
$$

We can go even further: there is a still more special instance of our rule, which does not use the universal modality at all and yet is sufficient both for the $\vblog$ deduction above and for the $\nml{GLB}$ incompleteness result.

\subsection{Pure Modal Syntax}\label{subsec:PureModal}

To eliminate the universal modality from \ref{V-spec}, we can replace $\ubox \chi(p) \to \ubox p$ by any stronger formula, i.e., any formula whose validity over a \bao\ implies that of $\ubox \chi(p) \to \ubox p$; the resulting rule will still be $\mathcal{V}$-sound. As we now want a rule that will work for $\nml{GLB}$ as well, let us formulate it in a polymodal syntax. The first rule  \ref{V-mod} that we present is an instance of a  \cfV-sound rule scheme \ref{Vl-mod} that we will see at the end of this section; however, we do not need full generality to cover $\vblog$ and $\nml{GLB}$ at the same time.

 Recall again that $\boxdot_i\alpha := \alpha \wedge \Box_i\alpha$. Our example of a \cfV-sound rule without the universal modality is:
\begin{equation}
\inferrule{ p \to \chi(p) \\  \boxdot_i \chi(p) \to  p \\ \alpha \to \Box_j \chi(p) \\ p \# \alpha}{\alpha \to \Box_j\bot}.\tag{$\mathcal{V}$-mod}\label{V-mod}
\end{equation}
In the unimodal case, take $i=j$. This yields a purely unimodal rule that preserves validity over any $\mathcal{V}$-\bao{} and suffices to carry out van Benthem's deduction. We leave adjusting the deduction from \S~\ref{subsec:ExtLang} so that \ref{V-mod} replaces \ref{V-spec} as an exercise.

\newcommand{\thof}[1]{\vdash_{\nml{#1}}}

It seems more interesting to note that \ref{V-mod} can be used to translate the algebraic reasoning in the proof of Theorem \ref{th:glbinc} for $\nml{GLB}$, in order to derive $\onbx\bot$. Let $i := 0$, $j := 1$, and $\alpha := \top$, so the conclusion of \ref{V-mod} becomes $\onbx \bot$. As in the derivation for $\vblog$ in \S~\ref{subsec:ExtLang}, set $\chi(p) :=  \zebx(\zebx p \to p) \to p$. Then since $\alpha:=\top$, the third premise of \ref{V-mod} becomes 
\[\onbx (\zebx(\zebx p \to p) \to p),\]
which is a bimodal version of the $\vblog$-axiom. Let us show that this is indeed a theorem of $\nml{GLB}$. First, note the well-known fact that $\thof{GLB} \onbx (\zebx p \to p)$ (In order to derive this, start with $\thof{GLB}\onbx\zedm \neg p \vee \langle 1\rangle\zebx p$ by excluded middle, which implies $\thof{GLB} \onbx\zedm \neg p \vee \zebx p $ and then $\thof{GLB} \onbx\zedm \neg p\vee \onbx p$ by axioms (iii) and (ii) of $\nml{GLB}$, which in turn implies $\thof{GLB} \onbx(\zedm \neg p\vee p)$.) Then we have:
\begin{eqnarray*}
&&\thof{GLB} \zebx (\zebx p \to p) \to \zebx p \quad \mbox{L\"{ob} axiom} \\
&& \thof{GLB} \onbx (\zebx (\zebx p \to p) \to \zebx p)\quad \mbox{by Necessitation}\\
&& \thof{GLB} \onbx (\zebx p \to p)\quad\mbox{from above} \\
&& \thof{GLB} \onbx (\zebx(\zebx p \to p) \to p)\quad\mbox{from previous two steps using normality of $\onbx$.}
\end{eqnarray*}

Turning to the first and second premises of \ref{V-mod}, as in the case of $\vblog$, these can be shown to be theorems of unimodal $\nml{K}$, i.e., in the language with $\zebx$ only. As the $\nml{K}$-theoremhood of the second premise may be  less immediate to see, let us sketch this derivation (returning to the unimodal syntax). We want to show
$$
\thof{K} \boxdot(\Box(\Box p \to p) \to p) \to p.
$$
First, one can easily derive
$$
\thof{K} (\Box(\Box p \to p) \to p) \to (\Box p \to p).
$$
By normality, this also yields
$$
\thof{K} \Box(\Box(\Box p \to p) \to p) \to \Box(\Box p \to p).
$$
Now it is enough to recall the definition of $\boxdot$ to obtain
$$
\thof{K} \boxdot(\Box(\Box p \to p) \to p) \to \big(\Box(\Box p \to p) \wedge (\Box(\Box p \to p) \to p)\big)
$$
and finish the derivation. Thus, we have shown that all three premises of \ref{V-mod} are theorems of $\nml{GLB}$. So \ref{V-mod} allows us to derive the $\mathcal{V}$-consequence $\onbx \bot$ of $\nml{GLB}$.
 
Since \ref{V-mod} is $\mathcal{V}$-sound and hence $\mathcal{T}$-sound, it is admissible in tense logics (recall \S~\ref{subsec:tense}). We will conclude this section by showing syntactically that \ref{V-mod} is admissible in tense logics with two future box modalities $\Box_i$ and $\Box_j$ (which may be the same) as in the statement of \ref{V-mod}. Substitute $\Box_j^{-1}\neg\alpha$ for $p$ in the premises of \ref{V-mod}. Then we deduce the conclusion of \ref{V-mod} as follows: 
\begin{enumerate}
\item $\boxdot_i \chi(\Box_j^{-1}\neg\alpha) \to  \Box_j^{-1}\neg\alpha$ \quad second premise of \ref{V-mod}
\item $\Diamond_j\boxdot_i \chi(\Box^{-1}_j\neg\alpha) \to \neg\alpha$ \quad from 1 by residuation 
\item $\alpha\to \Box_j\neg\big(\chi(\Box^{-1}_j\neg\alpha)\wedge \Box_i \chi(\Box^{-1}_j\neg\alpha)\big)$ \quad from 2 by contraposition,  definition of $\boxdot$
\item $\alpha\to \Box_j \chi(\Box^{-1}_j\neg\alpha)$ \quad third premise of \ref{V-mod} 
\item $\alpha\to \Box_j \Diamond_i \neg \chi(\Box^{-1}_i\neg\alpha)$ \quad from 3 and 4 
\item $\Diamond^{-1}_j\alpha \to \chi(\Box^{-1}_j\neg\alpha)$ \quad from 4 by residuation
\item $\neg\chi(\Box^{-1}_j\neg\alpha) \to \Diamond^{-1}_j\alpha$ \quad contrapositive of first premise of  \ref{V-mod}
\item $\neg\chi(\Box^{-1}_j\neg\alpha)\to\bot$ \quad from 6 and 7 
\item $\Diamond_i \neg\chi(\Box^{-1}_j\neg\alpha)\to \bot$ \quad from 8
\item $\alpha\to \Box_j\bot$ \quad from 5 and 9.
\end{enumerate}

The rule \ref{V-mod} seems too specialized to cherish hopes that it might yield the ultimate syntactic characterization of $\clofr{V}$-completeness via its conservativity.
At least, however, its conservativity over the set of theorems of a given modal logic is a \textit{necessary} condition for the $\clofr{V}$-completeness of the logic.
 Moreover, as mentioned above, it can be significantly generalized. Let $\ell(\natural), \ell_1(\natural), \dots, \ell_n(\natural)$ be necessity forms as defined in \S~\ref{subsec:hybrid}, suitably adjusted to the polymodal language. Furthermore, assume $\ell(\natural)$ does not involve $p$. In the polymodal setting, we require that if only some of the diamonds are completely additive, then only the corresponding boxes can occur in $\ell(\natural)$ in principal positions (recall Footnote \ref{ft:principal}).  Consider the following rule scheme:\footnote{For the second premise, recall that implication associates to the right.} 
 
 \begin{equation}
\inferrule{ p \to \chi(p) \\  \ell_1(\chi(p)) \to \dots \to \ell_n(\chi(p)) \to  p \\ \ell(\chi(p)) \\ p \# \ell}{\ell(\bot)}.\tag{$\mathcal{V}_{\ell}$-mod}\label{Vl-mod}
\end{equation}
To see \ref{V-mod} as an instance of \ref{Vl-mod}, set $\ell_1(\natural) \deq \natural$, $\ell_2(\natural) \deq \Box_i\natural$, and $\ell(\natural) \deq \alpha \to \Box_j\natural$. To justify \ref{Vl-mod} as a corollary of \ref{V-spec} in the same way as we justified \ref{V-mod} at the beginning of this subsection, note that $\thof{\mathsf{K}.\ubox} \ubox p \to \ell_i(p)$ and that the assumption that \cfV-boxes appear principally in $\ell(\natural)$ turns it into a term-definable \cfV-box.

We leave the question of whether conservativity of a rule scheme like \ref{Vl-mod} can provide a necessary-and-sufficient 
 characterization of \cfV-completeness 
  for future work.

\section{Conclusions and Future Work}\label{sec:conclude}

We have shown the existence of \cfV-incomplete  logics (in the unimodal signature) and \cfV-inconsistent logics (in a polymodal signature), thus answering a long-standing open question (\citealt{Litak2004}, \citealt[Ch.~9]{Litak2005b}, \citealt[\S~6.1]{Venema2007}, \citealt[\S~7]{Litak2008}). These results contrast with the fact that for many natural logics, even their countably generated free algebra can be completely additive (cf.~Footnote \ref{fn:FreeAlgebras}). Moreover, the examples involved turn out to be surprisingly natural. One of them is the logic $\vblog$ designed by van Benthem \citeyearpar{Benthem1979} to extract the syntactic essence of incompleteness results. Still more strikingly, another example is provided by the bimodal provability logic $\nml{GLB}$ \citep{Japaridze1988,Boolos1993,Beklemishev2010,Beklemishev2011}. These $\mathcal{V}$-incompleteness results essentially relied on a reformulation of \cfV\ as a first-order $\forall\exists\forall$-property inspired by the work of the first author on modal \emph{possibility frames}, which also allows a concrete description of categories dual to categories of \cfV-\baos\ \citep{Holliday2015}.

The $\vblog$ axiom and the general frame used by van Benthem \citeyearpar{Benthem1979} have often been reused in proofs of the Blok Dichotomy for various generalizations of Kripke completeness \citep{Chagrov1997,Zak2001,Litak2005b,RautenbergWZ06,WolterZ06,Litak2008}. We have shown that this strategy extends smoothly to degrees of \cfV-incompleteness. 

We were also able to follow a less frequented path opened up by van Benthem \citeyearpar{Benthem1979} and investigate syntactic aspects of incompleteness proofs. Apart from several extended modal formalisms in which to internalize incompleteness arguments, from the nominal and tense formalisms to \lgqm\ (Global Quantificational Modalities, investigated further in \citealt{H&L2016}), we have presented a single, surprisingly simple non-standard rule \ref{V-mod} in a pure modal language (generalizing further to a rule scheme \ref{Vl-mod}), which is admissible over \cfV-\baos\  and can be used to show the $\mathcal{V}$-incompleteness of both the van Benthem logic and $\nml{GLB}$.

These results raise a number of further questions. First, we still need to provide a definite characterization of \cfV-completeness in terms of conservativity of suitable \emph{minimal extensions}, similar to those available in the earlier literature for \cfA\cfV-completeness  or \cfT -completeness (nominal and tense extensions, respectively), preferably in a language not involving quantifiers or global quantificational modalities. 

Second, while we are very satisfied with $\nml{GLB}$  as a natural example of \cfV-incompleteness, we would like to see an equally striking case of \cfV-inconsistency. Recall that one of our present examples of \cfV-inconsistency is $\nml{GLBe}$, which can be naturally interpreted using ordinal semantics of polymodal provability logics \citep{Beklemishev2010,Beklemishev2011}. Can this logic or an extension thereof be given a good provability interpretation? Or is there another route to explore?

Third, while we believe that the syntactic ``internalization'' of incompleteness proofs we proposed---i.e., investigating relevant derivations in (extended) modal formalisms---is a fruitful and natural perspective, it could and perhaps should be complemented with an approach more directly following that of van Benthem \citeyearpar{Benthem1979}. Recall that van Benthem proposed using \emph{weak second-order logic} as the ambient formalism. In our setting, this would amount to characterizing \cfV-, \cfA\cfV-, \cfT-consequence and their relatives by allowing very limited instances of the axioms of weak second-order logic (e.g., involving equalities, relational atoms with variables permuted, etc.) plus possibly some additional axioms/rules. Matching this with the ``internalized'' approach above would lead to a fine-grained perspective on (extended) correspondence theory.

Fourth, when briefly discussing the subject of \emph{degress of relative incompleteness} in \S~\ref{sec:relative}, we recalled the fact that many of the algebraic completeness notions have frame-theoretic equivalents, involving  normal neighborhood frames \citep{Dosen1989}, discrete frames \citep{tenCate2007}, or possibility frames \citep{Holliday2015}. Are there other ways of systematically comparing these semantics and their associated completeness notions, superior to studying degrees of relative incompleteness?\footnote{On a side note, \emph{coalgebraic logic} (see \citealt{CirsteaEA11,SchroederPattinson10,Kurz2012} for references)  provides a generalization of Kripke semantics in a direction orthogonal to that of possibility frames: on the dual side, instead of keeping \cfV\ and dropping \cfA, one is keeping \cfC\cfA\ and does not insist on \cfV\ (cf.~\citealt[Lem.~and Def.~14]{SchroederPattinson10} for the relationship with neighbourhood semantics).  A result of Schr\"oder \citeyearpar[Thm.~31]{Schroeder2008} shows that complete additivity of the associated modal operator is only available for  structures collapsing to a special case of Kripkeanity. This by itself does not lead to ``natural'' examples of coalgebraic logics where mild rules like our \ref{V-mod} are inadmissible; coalgebraic logicians focus almost exclusively on minimal logics,  axiomatizable by formulas of \emph{rank 1} \citep{SchroederPattinson10,Kurz2012} where the issue of incompleteness simply does not arise. However, in extended formalisms like the recently proposed \emph{Coalgebraic Predicate Logic} ($\nml{CPL}$, \citealt{Litak2012,LitakPSS18}), one can write \emph{sentences} capturing complete additivity of a modality and show completeness for theories obtained this way. These sentences essentially involve equality in the same way as capturing \cfA\cfV\ requires nominals or the \emph{difference modality} \citep{Litak2006}. Can we use the syntactic investigations of \S~\ref{sec:newsyntax} to formulate completeness results for Kripkean instances of  $\nml{CPL}$ in an equality-free language?} 

Fifth, we would like to mention two problems brought up by participants of ToLo~V in Tbilisi, where this work was presented in June 2016. Do there exist natural properties of \baos\ yielding notions of completeness properly intermediate between \cfV- and \cfT-completeness? And is there a topological way of deriving the first-order characterization of complete additivity for operators on arbitrary posets in \citealt{Andreka2016}?\footnote{We thank Mamuka Jibladze for the first problem and Sam van Gool for the second. In the latter case, we also appreciate email discussions with Mai Gehrke and Marcel Ern\'e. A detailed discussion would take us too far afield, but at the moment there seems to be no obvious answers to these questions.} 

Finally, our present success should be an encouragement to revisit other open problems regarding sub-Kripkean completeness posed by Litak [\citeyear{Litak2004}; \citeyear[Ch. 9]{Litak2005b}; \citeyear[\S~7]{Litak2008}] and others. For example, given the importance of \textit{transitive} modal logics---normal extensions of $\nml{K4}$---it is natural to ask what incompleteness phenomena arise for these logics. We still have no indication of the existence of any \cfA\cfT-incomplete transitive unimodal logics. It might even be the case that van Benthem's \citeyearpar{Benthem1979} weak second-order consequence is conservative for them.  There is also no indication that the Blok Dichotomy will generalize to degrees of \cfA-incompleteness. In fact, as discussed by \cite{Venema2007} and \cite{Litak2008}, a strong result by Buszkowski \citeyearpar{Buszkowski86,Buszkowski04} implying \cfA-completeness of logics axiomatizable by \emph{modally guarded} axioms suggests that examples of \cfA-incompleteness are few and far between (note that the axiom of $\vblog$ \emph{is} modally guarded). Most importantly, we still seem to have very few (if any) general results regarding sub-Kripkean completeness for non-classical logics with a non-Boolean propositional base. Even Kuznetsov's problem \citep{Kuznetsov1975} regarding topological completeness of superintuitionistic logics remains open more than four decades after its formulation (see \citealt{B&H2018}). We hope to see progress on these problems in the years ahead.\\

\noindent \textbf{Acknowledgements}. For helpful comments, we wish to thank Johan van Benthem, Lloyd Humberstone, James Walsh, and the referees for  \textit{The Review of Symbolic Logic}. We are grateful to the Group in Logic and the Methodology of Science at UC Berkeley for funding a visit by Litak to Berkeley. We would also like to salute Erwin R. Catesbeiana for his insistence and persistence  regarding algebraic incompleteness and inconsistency.

\nocite{Tarski1956}
\bibliographystyle{plainnat}
\bibliography{Vincompleteness}
               
\end{document}